\pdfoutput=1
\documentclass[preprint,12pt,authoryear]{elsarticle}

\newif\ifanon \anonfalse

\usepackage[T1]{fontenc}
\usepackage{lmodern}
\usepackage{amssymb}
\usepackage{amsmath}
\usepackage{amsthm}
\usepackage{mathtools}
\usepackage{bm}
\usepackage{booktabs}
\usepackage{enumitem}
\usepackage{framed}
\usepackage{graphicx}
\usepackage{adjustbox}
\usepackage{orcidlink}
\usepackage{microtype}
\newtheorem{theorem}{Theorem}
\newtheorem{lemma}{Lemma}
\newtheorem{corollary}{Corollary}
\newtheorem{proposition}{Proposition}
\theoremstyle{definition}
\newtheorem{assumption}{Assumption}
\newtheorem{definition}{Definition}
\theoremstyle{remark}
\newtheorem{remark}{Remark}

\DeclareMathOperator{\E}{\mathbb E}
\DeclareMathOperator{\Var}{Var}
\DeclareMathOperator{\Cov}{Cov}

\DeclareMathOperator{\tr}{tr}
\DeclareMathOperator{\rank}{rank}
\newcommand{\R}{\mathbb R}
\newcommand{\Pp}{\mathbb P}
\newcommand{\G}{\mathcal G}
\newcommand{\F}{\mathcal F}
\newcommand{\tX}{\widetilde X}
\newcommand{\hbeta}{\widehat\beta}
\newcommand{\hsigma}{\widehat\sigma}
\newcommand{\hQ}{\widehat Q}
\newcommand{\hrho}{\widehat\rho}

\newcommand{\hlambda}{\widehat\lambda}
\newcommand{\heta}{\widehat\eta}
\newcommand{\hpsi}{\widehat\psi}

\journal{Journal of Econometrics}

\begin{document}

\begin{frontmatter}

\title{Breakdown Reliability for Saturated Fixed-Effect Inference}

\ifanon
\else
\author{Stanis\l{}aw M. S. Halkiewicz~\orcidlink{0009-0000-7344-7522}\corref{cor1}}
\cortext[cor1]{Corresponding author.}
\ead{stashal@o2.pl}
\affiliation{organization={Group of Machine Learning Research (GMUM), Jagiellonian University},
            city={Krak\'ow},
            country={Poland}}
\fi

\begin{abstract}
Fixed-effect saturation does not itself distort conventional inference, but classical measurement error does. Under a local noise drift $\sigma_\nu^2=c^2/n$, the FE-OLS $t$-statistic converges to a non-central normal; saturation contributes a common $\sqrt{1-\rho}$ scaling rather than preferentially destroying signal or noise. Inverting the size distortion gives a Stock--Yogo-style critical value. Self-consistency of the within-reliability-corrected pilot yields a breakdown reliability $\lambda^{\dagger}=|t|/(|t|+\eta^{\dagger})$ --- the minimum within reliability at which conventional inference retains nominal size within the chosen tolerance --- computable from the reported $t$-statistic alone and algebraically $\rho$-free conditional on it; $\eta^{\dagger}\approx0.65$ at $5\%$ size and a 5-point tolerance. Replacing $|t|$ by $|t|+z_{1-\gamma_\beta}$ gives a certified breakdown reliability; with a lower-reliability bound whose coverage error is $\gamma_\lambda$, false certification is at most $\gamma_\beta+\gamma_\lambda$. Under a checkable projection-compatibility condition, a cluster-level score CLT and consistency of the Arellano variance estimator in the many-fixed-effect regime justify applying the same map to the reported cluster-robust $t$-statistic; clustering can reverse a verdict. In a saturated democracy--growth panel, aggregate V-Dem polyarchy is certified at $\gamma_\beta=0.05$, conditional on the supplied measurement model, while its judicial-constraints sub-index is flagged under i.i.d.\ and clustered standard errors. In a twin-pair wage design, the specification is flagged under both independent and correlated reporting-error models, although implied coverage of the nominal-$95\%$ interval ranges from $8\%$ to $68\%$. The diagnostic covers classical error in a continuous regressor, not binary-treatment misclassification.
\end{abstract}
\begin{keyword}
breakdown reliability \sep measurement error \sep panel data \sep weak identification \sep within reliability
\end{keyword}

\end{frontmatter}

\section{Introduction}\label{sec-intro}

The weak-instruments literature has formalized a familiar empirical concern: when the first-stage relationship between instruments and an endogenous regressor is weak, two-stage least squares is biased, and the bias does not vanish at the conventional $\sqrt n$ rate. \citet{SS} showed that the bias is governed by the concentration parameter $\mu^2$; \citet{SY} tabulated critical values for the first-stage $F$-statistic that bound this bias below pre-specified tolerances; \citet{LMMP} and \citet{ASS} refined the practical recommendations.

This paper poses an analogous question for linear regression with high-dimensional fixed effects. Empirical work routinely reports specifications with progressively richer FE structure: worker--firm models \citep{AKM}, gravity equations with three-way fixed effects, event studies with unit-and-time-interacted controls. \citet{Mittag2019}, for example, computes exact OLS in worker--firm and worker--firm-match models with multiway-clustered standard errors; omitting match effects nearly doubles the estimated unexplained gender wage gap in his application. As FE saturate, the residual variation of the treatment shrinks:
\[
Q_K = \E[\Var(X \mid \G_K)] \to 0,
\]
where $\G_K$ is the $\sigma$-algebra generated by the $K$th FE specification. A natural conjecture is that small $Q_K$, like small $\mu^2$, produces a weak-identification problem requiring its own critical values.

The conjecture turns out to be false in the baseline model. Under strict exogeneity, FE-residualized OLS is unbiased regardless of $Q_K$. With conditionally Gaussian homoskedastic errors, the degrees-of-freedom-corrected $t$-statistic in the Cattaneo--Jansson--Newey many-covariate regime (with residual variance formed after projecting out the treatment) has an exact $t_{n - d_K - 1}$ distribution; its size distortion relative to $N(0,1)$ is $O(1/n)$ and does not depend on $\tau^2 = nQ_K$. The IV analogy breaks at its foundation, because TSLS bias arises from the first-stage residual appearing in both the numerator and denominator of the second-stage estimator, while no analogous mechanism is present in linear FE.

\ifanon This paper and a companion working paper [author's own work, withheld
for review] form\else This paper and \citet{HalkiewiczCycle2026} form\fi a
shared program of adequacy
diagnostics for saturated fixed-effect designs: the companion treats
concentrated identifying variation with exact nuisance-annihilating contrasts,
this paper treats noisy continuous regressors, and the same implementation also
screens heterogeneous TWFE effects. The point is diagnostic separation---small
within variation alone is not weak identification, but concentrated scores and
measurement error are distinct failures requiring distinct remedies.

The conjecture becomes true once we introduce a bias source. We study the canonical one: classical measurement error in the treatment. Its empirical relevance is not marginal: linking administrative records to three major household surveys, \citet{MeyerMittag2021} find total survey error equal to $4$--$60\%$ of actual transfer dollars and measurement error to be the largest component in every survey--program comparison; without it, combined error from the remaining components is below $10\%$ throughout. Observed $X_i^* = X_i + \nu_i$ replaces the true $X_i$, with $\nu_i$ a mean-zero noise of variance $\sigma_\nu^2$ independent of $(X_i, u_i)$. Under Condition~(B), FE saturation asymptotically absorbs the same fraction $\rho$ of the noise variance and the true signal variance (exactly so under (B1)), leaving the within reliability unaffected by $\rho$ and the residual identifying variation smaller by the common factor $(1-\rho)$. The resulting diagnostic separates two adjacent objects: within reliability $\lambda$ is a property of the data, whereas breakdown reliability $\lambda^{\dagger}$ is the minimum requirement conventional inference places on that property. We work under the local-to-zero drift $\sigma_\nu^2 = c^2/n$, which keeps the noise commensurable with the residual treatment variance, and show that the FE-OLS $t$-statistic converges to a non-central normal:
\[
T_n^* \Rightarrow N(\eta, 1), \qquad
\eta = -\frac{\beta_0\, c^2\sqrt{1-\rho}}{\sigma \sqrt{\tau^2 + c^2}}.
\]
The non-centrality is non-zero precisely when $c > 0$, and decreasing in $\tau^2$, $\sigma$, and (at fixed $\tau^2$) in $\rho$ through the overall $\sqrt{1-\rho}$ scaling of the design's total identifying variation, not through any differential absorption of noise relative to signal. The two-sided size is even in $\eta$, so its leading distortion is quadratic, $z_{1-\alpha/2}\phi(z_{1-\alpha/2})\,\eta^2$; inverting it at a tolerance $\delta$ yields a closed-form Stock--Yogo critical value:
\[
\tau^2_{\mathrm{crit}}(\rho, c^2, \beta_0, \sigma; \alpha, \delta)
= \frac{\beta_0^2\, c^4 (1-\rho)\, z_{1-\alpha/2}\,\phi(z_{1-\alpha/2})}{\sigma^2\delta} - c^2.
\]
The operational threshold is zero when this algebraic boundary is negative.

\begin{framed}
\noindent\textbf{What the diagnostic certifies.} The non-centrality is proportional to $\beta_0$: classical attenuation shrinks a zero to a zero, so under $H_0:\beta_0=0$ there is no size distortion, and the significance test of \emph{no effect} is never the object at risk. This regularity is not generic to measurement-error procedures. In nonlinear EIV models, \citet{EvdokimovZeleneev2020Inference} show that nuisance parameters describing latent-variable or error distributions can become weakly identified as the structural coefficient approaches zero, invalidating standard inference even with strong instruments. They also identify the boundary relevant here: that mechanism does not arise in linear IV and is absent when the nuisance input is separately identified through repeated measurements or validation data. The present linear model, with its noise moment supplied externally, lies in that regular region. What measurement error distorts is inference about the magnitude: the $t$-test of $H_0:\beta=\beta_0$ at the true nonzero $\beta_0$ --- equivalently, the coverage of the conventional confidence interval, which is centered near the attenuated $\lambda\beta_0$ and misses the truth at the non-central rate. Throughout the paper, the object the diagnostic speaks to is confidence-interval coverage of $\beta_0$ --- the credibility of the reported magnitude, not the significance verdict; Definition~\ref{def-breakdown} encodes this by setting the breakdown reliability to zero when $\beta_0=0$. That definition also fixes the paper's two verdict words, which are never used interchangeably: a \emph{point pass} is descriptive, whereas a \emph{certificate} is conservative and accounts separately for coefficient and reliability uncertainty.
\end{framed}

A second question the reader should ask up front: if the within reliability $\lambda$ is knowable, why diagnose rather than simply correct --- report $\hbeta^*/\hlambda$ with standard errors inflated by $1/\hlambda$ and dispense with thresholds? The answer is the Stock--Yogo logic itself: \emph{the correction requires a point estimate of $\lambda$; the diagnostic requires only a lower bound}. The verdict ``conventional inference is adequate provided the data's within reliability exceeds the specification's breakdown reliability $\lambda^{\dagger}$, a bound the validation literature comfortably supports'' survives any imprecision in the noise pilot that keeps it above $\lambda^{\dagger}$, whereas the corrected estimator inherits the pilot's error one-for-one, together with its sampling noise (Corollary~\ref{cor-slope}). Certification is the robust use of weak side information; correction is the fragile use of strong side information.

The structure of this threshold differs from the IV case in three ways. First, it is two-dimensional in the regime parameter: $\tau^2$ enters through the standardization $\sqrt{\tau^2+c^2}$ exactly as in a signal-to-noise ratio, and $\rho$ enters separately, through the overall $\sqrt{1-\rho}$ discount common to signal and noise. Heavier FE saturation lowers the threshold, but not because it absorbs noise preferentially: it shrinks the design's total identifying variation, which raises the standard error at any fixed $\tau^2$. Second, the threshold depends on $|\beta_0|$ and $\sigma$ (the parameter being estimated and the residual standard deviation), so unlike the unit-free Stock--Yogo IV thresholds, this one is data-dependent. Third, the threshold requires an external estimate of the measurement-error variance $c^2$. This estimate typically comes from test-retest validation, published measurement-reliability ratios, or, for expert-coded indices, the measurement model's own posterior uncertainty; the diagnostic cannot be computed from the regression output alone.

\subsection{Scope and relation to prior work}\label{sec-scope}\label{sec-notjust}

\paragraph{Relation to Griliches--Hausman.} The interaction between panel transformations and measurement error is classical: \citet{GH} showed that within and difference transformations amplify attenuation bias, because they remove signal variance while (largely) preserving noise variance, and they proposed estimators (contrasts across differencing lengths, and instrumenting with lags) that exploit the differential amplification to identify and remove the bias. The present paper is complementary and answers a different question. \citet{GH} treat the problem as one of estimation: given that the bias exists, construct a consistent estimator. We treat it as one of \emph{inference adequacy}: given that the practitioner will run conventional FE-OLS on the observed regressor, characterize exactly when the resulting $t$-test remains size-controlled, and supply the critical value that certifies it. The two-dimensional $(\rho, \tau^2)$ threshold, the separate roles of the within reliability $\lambda$ and the FE dimension $\rho$ in it, and the local-drift non-centrality analysis are, to our knowledge, new. When a specification fails the threshold, the Griliches--Hausman estimators (or a fixed-effect Anderson--Rubin test, which we develop in companion work in progress) are the natural remedies; the diagnostic tells the practitioner whether a remedy is needed at all.

\paragraph{Classical error in a continuous regressor only.}

One restriction is important enough to state before the results rather than after them, because violating it inverts the diagnostic's conclusion rather than merely weakening it. Everything in this paper assumes \emph{classical} measurement error: $X^*_i = X_i + \nu_i$ with $\nu_i$ mean-zero and independent of $(X_i, u_i)$, formalized in Assumption~\ref{ass-nu}. The entire apparatus---the attenuation factor $\lambda$, the sign of the non-centrality, the breakdown reliability, the direction in which the naive pilot errs---rests on that independence. The scope also excludes Berkson error, $X_i=X_i^*+\nu_i$, under which the latent true value varies around an assigned or observed proxy rather than the proxy around truth; \citet{LiMa2024} review the distinction and models that combine both error types.

It fails, by construction and not by approximation, for a mismeasured binary treatment. If $X_i \in \{0,1\}$ then the misclassification error $\nu_i = X^*_i - X_i$ can only be $+1$ when $X_i = 0$ and $-1$ when $X_i = 1$, so $\Cov(\nu_i, X_i) < 0$ necessarily: the error is nonclassical, mean-independence fails, and the attenuation factor is no longer the classical measurement-reliability ratio. This is not a new observation --- \citet{Aigner1973} derived the attenuation for a mismeasured binary regressor and showed it does not take the classical form, and \citet{Bollinger1996} shows that what is identified under misclassification is a bound on the coefficient rather than a point-corrected estimate. \citet{MeyerMittag2017} make the same necessary-nonclassical point for a misclassified binary \emph{dependent} variable; their outcome-bias formulas do not transfer to a regressor, but the support argument does. This is precisely why a within-reliability threshold has nothing to certify there. Since binary treatments---union status, program participation, treatment take-up---are among the most commonly mismeasured regressors in fixed-effect panels \citep[see][on the union case]{Card1996}, we state the consequence plainly: \textbf{applying the diagnostic of this paper to a binary treatment is a misuse, and its verdict carries no warrant.} A measurement-reliability ratio computed for such a regressor does not index the object $\lambda$ that appears in Definition~\ref{def-breakdown}, and the resulting pass or flag is not interpretable. The applications below use continuous regressors (expert-coded democracy indices and years of schooling) throughout, and step~0 of the protocol in Section~\ref{sec-protocol} makes the check explicit. Extending the derivation pattern to nonclassical misclassification requires its own non-centrality calculation and is left to future work (Section~\ref{sec-ext}).

\paragraph{Not simply bias over standard error.}

The non-centrality $\eta$ is, algebraically, an attenuation bias divided by a standard error. Three things make it an econometric object rather than a ratio, and each is a place where the informal calculation gives the wrong answer.

\emph{First, the ratio is not a verdict until it has a limit experiment attached.} It becomes a statement about size only once one knows the law of the $t$-statistic under contamination; under the local drift that law is $N(\eta,1)$ (Theorem~\ref{thm-noncentral}). Without it, the natural conjecture --- that attenuation of the point estimate translates one-for-one into over-rejection --- is simply wrong. The two-sided size is even in $\eta$, so the leading distortion is quadratic and the critical value is not linear in the ratio; reasoning informally from ``bias over standard error'' misprices the threshold by an order in $\eta$.

\emph{Second, correcting and certifying need different information.} Correction requires a point estimate of $\lambda$ and inherits its error one-for-one; certification requires only a lower bound, because the verdict is monotone in $\lambda$. This is the Stock--Yogo logic, and it is what makes the exercise feasible with the side information validation studies actually supply --- within-reliability ranges, not point estimates. A bare ratio cannot exploit the asymmetry, having no threshold to be monotone against.

\emph{Third, the obvious way to compute the ratio is anti-conservative.} Plugging in the estimated coefficient uses the attenuated $\hbeta^*_K$, understating $|\eta|$ by the factor $\lambda$ and passing specifications that should fail (Proposition~\ref{prop-pilot}); Design~3 exhibits one whose true size is $16\%$. The error runs in the direction users care about, and removing it requires the within-reliability correction --- exactly the step a bookkeeping account omits.

The paper's most portable result is downstream of those primitives. Correcting the coefficient pilot by $\hbeta^*/\lambda$ makes the required threshold depend on the same supplied within reliability $\lambda$, so evaluating the comparison at a different value is not self-consistent. The fixed point resolves the problem exactly:
\[
\lambda^{\dagger}=\frac{|t^*_n|}{|t^*_n|+\eta^{\dagger}(\alpha,\delta)}.
\]
Conditional on the reported $t$-statistic, this formula is algebraically $\rho$-free and needs neither the underlying data nor a noise pilot; its inferential warrant still rests on the model assumptions. Its \emph{form} also survives the variance-estimator choice---one uses the corresponding reported $t$---but its \emph{value} need not, and clustering can reverse the verdict (Corollary~\ref{cor-cluster-feasible}(c)).

Our contributions are as follows. The first four establish the theoretical core -- the baseline no-threshold result, the non-centrality and its critical value, and the corrected pilot; the last two make the diagnostic usable on real, clustered panel data.

\begin{enumerate}[label=(\roman*)]
\item Formal demonstration that no $\tau^2$-driven size threshold arises in the baseline FE-OLS model (Theorem~\ref{thm-no-distortion}): FE saturation alone is not a weak-identification problem.
\item Derivation of the non-centrality parameter under EIV with local drift (Theorem~\ref{thm-noncentral}), with the FE dimension $\rho$ entering only through a common $\sqrt{1-\rho}$ discount of signal and noise alike (not through differential absorption), and a local power result (Proposition~\ref{prop-power}): power is attenuated by the factor $\sqrt{(1-\rho)\lambda}$ and shifted by $\eta$.
\item A feasible representation of the non-centrality, $|\eta| = (|\beta_0|/\sigma)(1-\lambda)\sqrt{\tau^{*2}}$ (Corollary~\ref{cor-feasible}), in which every quantity except one external within-reliability pilot is regression output; a closed-form critical-value formula (Corollary~\ref{cor-cv}); and the self-consistent breakdown reliability above, computable from the reported $t$-statistic alone (Definition~\ref{def-breakdown}).
\item A within-reliability-corrected pilot for $\beta_0$ (Proposition~\ref{prop-pilot}). Plugging the attenuated point estimate $\hbeta^*_K$ into the threshold understates $|\beta_0|$ by the factor $\lambda$ and makes the diagnostic anti-conservative; the corrected pilot removes this.
\item Specialized cluster-robust limit theory (Section~\ref{sec-cluster}), parallel to the general many-covariate framework of \citet{AnatolyevNg2026}: a cluster-level score CLT (Lemma~\ref{lem-cluster-clt}), consistency of the ordinary Arellano variance estimator under a checkable projection-compatibility condition (Lemmas~\ref{lem-nest}--\ref{lem-crve}), and the resulting diagnostic non-centrality $\eta_{CR}=\eta/\sqrt\psi$ (Theorem~\ref{thm-cluster}). Separately, a \emph{formal certification theorem} (Proposition~\ref{prop-certificate}) replaces $|t|$ by $|t|+z_{1-\gamma_\beta}$ and compares the resulting certified breakdown reliability with a supplied lower bound on $\lambda$; the false-certification probability is at most $\gamma_\beta+\gamma_\lambda$, without requiring independence between the coefficient and reliability bounds. Three by-products are of independent practical interest: the conventional $\tfrac{n-1}{n-K_n}$ small-sample factor must be omitted from this CRVE in saturated designs; the feasible rescaling remains valid even though its two ingredients are separately inconsistent (Corollary~\ref{cor-cluster-feasible}); and population $\psi$ and reported $\hpsi$ are distinct objects---$0.64$ versus $0.99$ in the same design---so clustering is not mechanically conservative and its applied direction must be read from the reported standard error (Remark~\ref{rem-psi-sign}).
\item Six simulation designs that calibrate the local-drift approximation, the finite-$n$ and clustered mappings, and the new certification guarantee---including a direct Monte Carlo check that false certification is no more frequent than its nominal bound---and two empirical applications with observable noise information: measurement-model posteriors in a democracy--growth panel and repeated schooling reports in a twin-pair wage design.
\end{enumerate}

Companion work in progress develops a complementary tool: a fixed-effect Anderson--Rubin test \citep[after][]{AndersonRubin} with uniform validity over $\tau^2 \in (0, \infty]$, providing valid inference precisely for the specifications that fail this paper's threshold.

\subsection{Related literature} Classical measurement error in panel regressions goes back to \citet{GH}; \citet{BBM} survey validation-study evidence on the magnitude of survey measurement error, \citet{Schennach2016} surveys the modern identification-based literature, and \citet{LiMa2024} review subsequent statistical developments across classical, Berkson, and combined-error models. \citet{MeyerMittag2021} show with linked administrative records that measurement error can dominate other components of total survey error. \citet{BK} provide the canonical external measurement-reliability estimates for earnings data (CPS matched to Social Security records), and \citet{BoundEtAl94} the PSID validation-study estimates, including the sharp drop in within reliability after within/difference transformations, the empirical fact that motivates our drift calibration in Section~\ref{sec-drift}. For expert-coded political indices, \citet{PMM} estimate an explicit measurement model whose posterior dispersion supplies a direct observation-level noise estimate; this is the basis of our lead application. Many landmark nonlinear corrections, including corrected-score, conditional-score, and SIMEX methods, were developed under Gaussian measurement error; \citet{WoolseyHuang2026} document the resulting non-Gaussian problem and construct higher-moment corrections. The linear attenuation and breakdown map here require no parametric measurement-error distribution: they use the noise second moment, together with the finite-moment and CLT conditions stated below. The weak-identification literature we borrow from is \citet{SS, SY, Moreira2003, OleaPflueger2013, ASS, LMMP}. \citet{CJN} and \citet{JV} develop the high-dimensional FE asymptotics in which $\rho > 0$ (with $\tau^2 = \infty$ in our parametrization); we work within that framework but with the additional dimension $\tau^2 < \infty$. \citet{AnatolyevNg2026} establish asymptotic normality and a leave-cluster-out, many-covariate-robust variance estimator under clustered errors, including unbalanced and growing clusters. Their result is the general estimation framework closest to Section~\ref{sec-cluster}; our ordinary-Arellano argument is a narrower specialization, while the measurement-error drift, the $\psi$ versus $\hpsi$ distinction, and the reported-$t$ breakdown map are diagnostic results not supplied by their estimator. \citet{Anatolyev2018} refines the heteroskedastic CJN correction, and \citet{KSS} and \citet{Jochmans2022} develop complementary leave-out methods. The two i.i.d. auxiliary results this paper itself needs --- a self-normalized score CLT and a Hessian concentration lemma under the $(\rho,\tau^2)$ drift --- are stated and proved in the Online Supplement, keeping the paper self-contained. To our knowledge, no formal Stock--Yogo-style critical values for FE saturation under measurement error have been derived.

The repeated-report design of \citet{AshenfelterKrueger1994} supplies a particularly transparent application: each identical twin reports both siblings' schooling, creating two measurements of the within-pair schooling difference. \citet{Rouse1999} reproduces the original estimates, extends the sample from 149 to 453 pairs, and shows why person-specific correlation across reports must be treated as a sensitivity rather than assumed away. Our use of these studies is inferential: their estimators correct a coefficient, whereas the diagnostic below quantifies the coverage failure of the uncorrected interval.

Sections~\ref{sec-setup}--\ref{sec-baseline} develop the setup and the baseline no-threshold result; Section~\ref{sec-eiv} introduces the EIV model and derives the non-centrality; Section~\ref{sec-cv} gives the critical-value formula, feasible diagnostic, and protocol; Sections~\ref{sec-sim}--\ref{sec-app} report the simulations and applications; Sections~\ref{sec-ext}--\ref{sec-conc} record extensions and conclude.

\section{Setup}\label{sec-setup}

Let $\{(Y_i, X_i)\}_{i=1}^n$ be a sample from a population on $(\Omega, \F, \Pp)$, with $Y_i \in \R$ and $X_i \in \R$ a scalar treatment. The structural model is
\begin{equation}\label{eq-model}
Y_i = X_i \beta_0 + u_i, \qquad \E[u_i \mid X_i, \G_\infty] = 0,
\end{equation}
where $\G_\infty = \sigma(\bigcup_{K \geq 1} \G_K)$ is the limit of an increasing sequence of sub-$\sigma$-algebras representing successive fixed-effect specifications. Let $D_K$ be the $n \times d_K$ matrix of FE dummies, $P_K$ the orthogonal projection onto the FE column space, and $M_K = I_n - P_K$ the FE annihilator; all partialling-out identities used below are instances of the Frisch--Waugh--Lovell theorem \citep{FrischWaugh,Lovell1963}, of which \citet{LovellFWL} gives a compact modern proof. Write $\tX_K = M_K X$. Throughout, the fitted dummies saturate the specification: $D_K$ spans the $\G_K$-measurable functions of the cell labels, so that $\E[X \mid \G_K] \in \mathrm{col}(D_K)$ and hence $M_K\,\E[X\mid\G_K]=0$ (and likewise any additive $\G_K$-measurable component of the outcome is annihilated). This is the assumption under which $X'M_KX$ equals the pure-deviation quadratic form $\xi'M_K\xi$, $\xi := X - \E[X\mid\G_K]$, used in Lemma~\ref{lem-clt} below.

The FE-residualized OLS estimator (computed in this section from the true $X$) is $\hbeta_K = (X' M_K Y) / (X' M_K X)$. The population residual treatment variance is
\[
Q_K = \E[(X - \E[X \mid \G_K])^2] = \E[\Var(X \mid \G_K)],
\]
with sample analogue $\hQ_K = n^{-1} X' M_K X$. The asymptotic regime is governed by
\begin{equation}\label{eq-drift}
\rho_n = \frac{d_{K_n}}{n} \to \rho \in [0, 1), \qquad \tau_n^2 = nQ_{K_n} \to \tau^2 \in (0, \infty].
\end{equation}
The case $\tau^2 = \infty$ recovers the many-covariates regime of \citet{CJN} and \citet{JV}; the case $\rho = 0$, $\tau^2 = \infty$ is textbook FE asymptotics. The non-centrality formula nests these limits by continuity (Remark~\ref{rem-nesting}); the theorems below are proved for the interior $0 < \tau^2 < \infty$, $\rho \in [0,\bar\rho]$ with $\bar\rho < 1$, and the boundaries $\tau^2 \in \{0,\infty\}$ and $\rho \to 1$ are separate limiting regimes at which the proofs' leverage and concentration constants are not uniform.

The joint CLT under this drift is the foundational result on which this paper builds. We state it as Lemma~\ref{lem-clt} and prove it in full in the Online Supplement, so that the paper is self-contained.

\begin{assumption}[Regularity]\label{ass-reg}\hfill
\begin{enumerate}[label=(\roman*)]
\item $\{(X_i, u_i, \G_K)\}$ is i.i.d.\ across $i$ for each $K$, with finite eighth moments of $X$ and $u$.
\item $\rho_n \to \rho \in [0, 1)$, $\tau_n^2 \to \tau^2 \in (0, \infty]$.
\item Leverage control: there is $\kappa > 0$ with $\max_i (P_{K_n})_{ii} \le 1 - \kappa$ with probability approaching one, and $\max_i \tX_{K_n, i}^2 / (nQ_{K_n}) = o_p(1)$. (A vanishing $\max_i (P_{K_n})_{ii}$ would force $\rho = 0$, since $\max_i (P_{K_n})_{ii} \ge \tr(P_{K_n})/n = \rho_n$; the many-fixed-effect regime $\rho > 0$ requires only that no single cell dominates, i.e.\ leverage bounded away from one. Only the treatment-leverage half is used in the score CLT below.)
\item Conditional variance: there exists $\omega^2 \in (0, \infty)$ with
\[
(nQ_{K_n})^{-1} \sum_i \tX_{K_n, i}^2 \sigma^2(X_i, \G_{K_n, i}) \to_p (1-\rho)\,\omega^2,
\]
so that $\omega^2$ is the design-free (fully-diffuse) conditional-variance limit and the observable score variance carries the same $(1-\rho)$ discount as the Hessian limit in Lemma~\ref{lem-clt}; under conditional homoskedasticity $\omega^2 = \sigma^2$.
\item Lindeberg condition: for every $\varepsilon > 0$,
\[
\resizebox{0.97\linewidth}{!}{$\displaystyle (nQ_{K_n})^{-1} \sum_i \E\big[\tX_{K_n, i}^2 u_i^2 \, \mathbf 1\{|\tX_{K_n, i} u_i| > \varepsilon \sqrt{nQ_{K_n}}\}\big] \to 0$}.
\]
\item Uniform conditional moment: for some $\epsilon > 0$,
$\sup_{i,n} \E[\,|u_i|^{2+\epsilon} \mid X_i,\G_{K_n}\,]
\le C_u < \infty$. (The conditioning includes the regressor because the score
proof conditions on $\F_n=\sigma(X,D_{K_n})$; a bound conditional only on
$\G_{K_n}$ would not in general survive further conditioning on $X_i$.
Together with a maximal-weight condition this yields the Lyapunov, hence
Lindeberg, condition for \emph{independent but not necessarily identically
distributed} weighted sums of the $u_i$ --- the form used for the contaminated
score in Theorem~\ref{thm-noncentral}; it is implied by the finite eighth moment
of (i) when the conditional law of $u_i$ does not degenerate across cells and
regressor values.)
\end{enumerate}
\end{assumption}

\begin{lemma}[Joint CLT under the $(\rho,\tau^2)$ drift]\label{lem-clt}
Set
\[
\bar\sigma^2_{M,n}:=\frac{1}{n-d_{K_n}}
\sum_i(M_{K_n})_{ii}\sigma^2(X_i,\G_{K_n,i}).
\]
Under Assumption~\ref{ass-reg} and the balance conditions (B)--(C) of the Online Supplement,
\[
\begin{pmatrix}
\dfrac{X' M_{K_n} u}{\sqrt{nQ_{K_n}}} \\[4pt]
\dfrac{X' M_{K_n} X}{nQ_{K_n}} \\[4pt]
\dfrac{u' M_{K_n} u}{n - d_{K_n}}-\bar\sigma^2_{M,n}
\end{pmatrix}
\Rightarrow
\begin{pmatrix} \mathcal Z \\ 1-\rho \\ 0 \end{pmatrix},
\qquad \mathcal Z \sim N(0, (1-\rho)\,\omega^2),
\]
with the second and third coordinates converging in probability. If
$\bar\sigma^2_{M,n}\to_p\sigma^2$---in particular, under conditional homoskedasticity---then
$u'M_{K_n}u/(n-d_{K_n})\to_p\sigma^2$. Under conditional homoskedasticity also $\omega^2=\sigma^2$, so the score variance is $(1-\rho)\sigma^2$, not $\sigma^2$.
\end{lemma}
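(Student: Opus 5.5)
The plan is to condition throughout on $\F_n=\sigma(X,D_{K_n})$ and treat the three coordinates separately. Joint convergence then follows because the second and third limits are constants. Convergence in probability of those two coordinates, together with convergence in distribution of the first, gives joint weak convergence by Slutsky's lemma, so no cross-covariance calculation is needed.

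\emph{Hessian coordinate.} Write $\xi=X-\E[X\mid\G_{K_n}]$. The saturation assumption gives $M_{K_n}\E[X\mid\G_{K_n}]=0$, so $X'M_{K_n}X=\xi'M_{K_n}\xi=\xi'\xi-\xi'P_{K_n}\xi$.
\begin{itemize}
\item The first term satisfies $\xi'\xi/(nQ_{K_n})\to_p 1$. This follows from a weak law for the triangular array $\xi_i^2/Q_{K_n}$, with second moments controlled by the eighth-moment condition in Assumption~\ref{ass-reg}(i).
\item For the second term, the cells are $\G_{K_n}$-measurable and $\xi$ is conditionally mean zero given $\G_{K_n}$. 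Hence $\E[\xi'P_{K_n}\xi\mid D_{K_n}]=\sum_i(P_{K_n})_{ii}\Var(X_i\mid\G_{K_n})$.
\item Balance condition (B) is exactly what makes this average equal $\rho_nQ_{K_n}n(1+o(1))$. It says that within-cell variance is asymptotically uncorrelated with leverage. Under (B1), i.e.\ homogeneous within-cell variance, the equality is exact.
\item The fluctuation of the quadratic form around its conditional mean is bounded by a conditional variance computation. It is of order $\sum_{ij}(P_{K_n})_{ij}^2\,\E\xi^4\lesssim d_{K_n}\E\xi^4$, which is $o((nQ_{K_n})^2)$ after using (C) to normalize fourth moments by $Q_{K_n}^2$.
\end{itemize}
Combining the two terms gives the limit $1-\rho$.

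\emph{Score coordinate.} Write $X'M_{K_n}u=\sum_i\tX_{K_n,i}u_i$. Conditional on $\F_n$, the weights $\tX_{K_n,i}$ are fixed and the $u_i$ are independent with $\E[u_i\mid X_i,\G_\infty]=0$. This is an independent, non-identically distributed weighted sum.
\begin{itemize}
\item Its conditional variance divided by $nQ_{K_n}$ is $(nQ_{K_n})^{-1}\sum_i\tX_{K_n,i}^2\sigma^2(X_i,\G_{K_n,i})\to_p(1-\rho)\omega^2$ by Assumption~\ref{ass-reg}(iv).
\item The Lyapunov ratio is at most $C_u\max_i|\tX_{K_n,i}|^{\epsilon}(nQ_{K_n})^{-\epsilon/2}$ times the normalized variance. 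This uses Assumption~\ref{ass-reg}(vi), and it vanishes by the treatment-leverage half of (iii).
\end{itemize}
The conditional characteristic function therefore converges in probability to that of $N(0,(1-\rho)\omega^2)$. Dominated convergence then yields the unconditional limit $\mathcal Z$.

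\emph{Residual variance coordinate.} Since $\E[u'M_{K_n}u\mid\F_n]=\sum_i(M_{K_n})_{ii}\sigma^2(X_i,\G_{K_n,i})=(n-d_{K_n})\bar\sigma^2_{M,n}$, it suffices to show that the centered quadratic form is $o_p(n)$. Split it into diagonal and off-diagonal parts.
\begin{itemize}
\item The diagonal part is $\sum_i(M_{K_n})_{ii}(u_i^2-\sigma_i^2)$. Its variance is $O(n)$ by the finite fourth moments.
\item The off-diagonal part is $\sum_{i\ne j}(M_{K_n})_{ij}u_iu_j$. Its conditional variance is $2\sum_{i\ne j}(M_{K_n})_{ij}^2\sigma_i^2\sigma_j^2\le 2\,\tr(M_{K_n})\sup\sigma^2\cdots=O(n)$, using the idempotency of $M_{K_n}$.
\end{itemize}
Dividing by $n-d_{K_n}\ge(1-\bar\rho)n/2$ gives $O_p(n^{-1/2})$. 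Here leverage bounded away from one keeps the denominator of order $n$.

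\emph{Remaining claims.} The final statements are immediate. If $\bar\sigma^2_{M,n}\to_p\sigma^2$, add it back to the third coordinate. Under homoskedasticity $\sigma^2(\cdot)\equiv\sigma^2$, so $\bar\sigma^2_{M,n}=\sigma^2$, and (iv) gives $\omega^2=\sigma^2$ through the Hessian limit just proved.

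\emph{Main obstacle.} I expect the Hessian coordinate to be the hard step, specifically showing that $\xi'P_{K_n}\xi$ concentrates at $\rho_n nQ_{K_n}$ when $Q_{K_n}=\tau_n^2/n\to0$. The concern is that the signal $nQ_{K_n}$ is only $O(1)$, while the projection has rank $d_{K_n}\asymp n$. A naive variance bound $d_{K_n}\E\xi^4$ is not small relative to $(nQ_{K_n})^2$ unless $\E\xi^4$ scales like $Q_{K_n}^2/n$. This is precisely why the balance condition (C) must encode the appropriate moment scaling. I would first try to use (C) to bound $\sum_{ij}(P_{K_n})_{ij}^2\Var(\xi_i)\Var(\xi_j)$ cell by cell. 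That sum equals $\sum_{\text{cells}}\big(\sum_{i\in g}\Var\xi_i\big)^2/n_g^2$ in the one-way case, and I would fall back on a block-wise martingale argument over cells if the general multiway projection resists direct computation.
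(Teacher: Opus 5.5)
Your architecture is the paper's. You condition on $\F_n$ throughout and get the score coordinate from a conditional Lyapunov CLT plus a conditional-to-unconditional transfer. The Hessian and residual coordinates come from conditional means, a quadratic-form variance bound and Chebyshev, and joint convergence is automatic because two of the limits are constants. The score and residual steps agree with the paper's essentially line for line.

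The flaw is the moment input for the Hessian coordinate. Your weak law for $\xi'\xi/(nQ_{K_n})$ cannot rest on the eighth moments of Assumption~\ref{ass-reg}(i). That bound does not scale with $Q_{K_n}=\tau_n^2/n\to0$, so it says nothing about $\E[\xi_i^4]/Q_{K_n}^2$, which can be of order $1/Q_{K_n}\asymp n$. For instance, take $\xi_i\in\{0,\pm1\}$ with $\Pp(\xi_i\neq0)=Q_{K_n}$. All its moments are bounded, yet $\xi'\xi$ is asymptotically $\mathrm{Poisson}(\tau^2)$, so bounded moments of $X$ alone do not give the weak law you invoke. The ingredient actually needed, and the one the paper uses, is the \emph{scaled} fourth moment $\E[\xi_i^4\mid\mathcal D_n]\le\kappa_\xi q_i^2$ together with $q_i\le c_1Q_{K_n}$ (or $q_i\equiv Q_{K_n}$). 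This belongs to Condition~(B), not Condition~(C). Condition~(C) is the error bound $\sup_{i,n}\E[u_i^4\mid\F_n]\le\kappa_u$, which is what your third coordinate needs.

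With the scaled bound in hand, your ``main obstacle'' disappears; it rests on an arithmetic slip. You need the conditional variance of $\xi'M_{K_n}\xi$ to be $o((nQ_{K_n})^2)=o(n^2Q_{K_n}^2)$. The bound $\kappa\sum_i(M_{K_n})_{ii}^2+2s_{\max}^4\tr(M_{K_n}^2)$ uses $\kappa\lesssim Q_{K_n}^2$, $s_{\max}^2\lesssim Q_{K_n}$, $M_{K_n}^2=M_{K_n}$ and $\tr M_{K_n}\le n$, and gives $O(nQ_{K_n}^2)$. So $\E\xi_i^4\lesssim Q_{K_n}^2$ suffices; nothing like $Q_{K_n}^2/n$ is required. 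A projection of rank $d_{K_n}\asymp n$ costs one factor of $n$ against $n^2$, and the relative fluctuation is $O_p(n^{-1/2})$.

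This is why the paper treats $\xi'M_{K_n}\xi$ in one step. Its conditional mean is $\sum_i(1-h_{ii})q_i$, which equals $(1-\rho_n)nQ_{K_n}$ exactly under (B1). Under (B2) it holds up to an additive $o_p(nQ_{K_n})$, via $\max_i|h_{ii}-\rho_n|=o_p(1)$. Keep that error additive rather than writing $\rho_nnQ_{K_n}(1+o(1))$, which can fail when $\rho=0$. The same variance bound plus conditional Chebyshev then finishes the step. Your split $\xi'\xi-\xi'P_{K_n}\xi$ also works by applying the bound to $B=I$ and $B=P_{K_n}$, but the cell-by-cell computation and block-martingale fallback are unnecessary.
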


\emph{Why the Hessian limit is $1-\rho$.} The second coordinate is worth a word, since it drives the critical-value formula. With $\xi_i := X_i - \E[X_i\mid\G_{K_n}]$ and $X'M_{K_n}X=\xi'M_{K_n}\xi$ under saturation, the conditional mean is $\sum_i(1-h_{ii})q_i$, where $q_i=\Var(\xi_i\mid\mathcal D_n)$ and $h_{ii}=(P_{K_n})_{ii}$. Under (B1) this equals $Q_{K_n}\tr(M_{K_n})=(1-\rho_n)nQ_{K_n}$ exactly; under (B2) leverage balance gives the same expression up to $1+o_p(1)$. Saturation alone does not imply the common trace discount. The exact trace identity does apply to homoskedastic errors and to the i.i.d.\ noise form $\nu'M_{K_n}\nu$, while Condition~(B) supplies it for the signal. The proof in the Online Supplement makes these distinctions explicit and conditions on $\F_n=\sigma(X,D_{K_n})$ for the score CLT.

\section{No baseline threshold}\label{sec-baseline}

A natural conjecture, building on the analogy with weak instruments, is that small $Q_K$ produces a Stock--Yogo problem requiring its own critical-value table. We first show this conjecture is incorrect in the baseline model.

Let $\widehat u = M_{K_n}(Y - X\hbeta_{K_n})$ be the regression residual after removing both the fixed effects and the treatment (equivalently, the residual from the full OLS of $Y$ on $[D_{K_n}, X]$; by the Frisch--Waugh--Lovell theorem \citep{FrischWaugh, Lovell1963, LovellFWL}, $\hbeta_{K_n}$ is the coefficient on $X$ there), and let $\widehat\sigma^2_{\mathrm{CJN}} = \widehat u'\widehat u / (n - d_{K_n} - 1)$ be the associated homoskedastic residual-scale estimator, with
\[
T_n^{\mathrm{CJN}} = \frac{\hbeta_{K_n} - \beta_0}{\widehat\sigma_{\mathrm{CJN}} / \sqrt{X' M_{K_n} X}}.
\]
The subscript records the many-covariate normalization used throughout; this scalar RSS correction is not the heteroskedastic CJN covariance estimator based on $(M\odot M)^{-1}$. \citet{Anatolyev2018} proposes an almost-unbiased refinement of that latter estimator, not a replacement for $\widehat\sigma^2_{\mathrm{CJN}}$ in the exact homoskedastic $t$ result below. Section~\ref{sec-ext} and \ref{sec:supp-anatolyev} report the appropriate coefficient-standard-error sensitivity.

\begin{theorem}[No $\tau^2$-driven size distortion in the baseline model]\label{thm-no-distortion}
Under Assumption~\ref{ass-reg}, suppose $u\mid(D_{K_n},X)\sim N(0,\sigma^2I_n)$, $X' M_{K_n} X > 0$, and $n - d_{K_n} \ge 2$ (so the residual degrees of freedom $n-d_{K_n}-1$ used below is at least one; Assumption~\ref{ass-reg}(ii) guarantees this for all large $n$, since $\rho_n\to\rho<1$). Then $T_n^{\mathrm{CJN}} \sim t_{n - d_{K_n} - 1}$ exactly conditional on $(D_{K_n},X)$. The distortion relative to $N(0,1)$ satisfies
\[
\Pp(|T_n^{\mathrm{CJN}}| > z_{1-\alpha/2}) - \alpha = \frac{(z_{1-\alpha/2}^3 + z_{1-\alpha/2}) \phi(z_{1-\alpha/2})}{2(n - d_{K_n} - 1)} + O(n^{-2}),
\]
which is $O(1/n)$ and does not depend on $\tau_n^2$.
\end{theorem}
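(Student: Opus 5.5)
Conditional on $(D_{K_n},X)$, the argument is the classical normal-theory one applied to the full regression of $Y$ on $W=[D_{K_n},X]$. The fixed-effect part of the outcome is annihilated by $M_{K_n}$ under the saturation convention of Section~\ref{sec-setup}. Therefore $\hbeta_{K_n}-\beta_0 = X'M_{K_n}u/(X'M_{K_n}X)$ and $\widehat u = M_W u$, where $M_W$ is the annihilator of $W$. Here $M_W = M_{K_n} - M_{K_n}X(X'M_{K_n}X)^{-1}X'M_{K_n}$ by Frisch--Waugh--Lovell, and $X'M_{K_n}X>0$ makes this well defined. It is an orthogonal projection of rank $n-d_{K_n}-1$ (taking $D_{K_n}$ of full column rank $d_{K_n}$; otherwise replace $d_{K_n}$ by the rank throughout).

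With $u\mid(D_{K_n},X)\sim N(0,\sigma^2 I_n)$, the two pieces behave as follows.
\begin{itemize}
\item The numerator $Z:=X'M_{K_n}u/(\sigma\sqrt{X'M_{K_n}X})$ is exactly $N(0,1)$.
\item The quadratic form satisfies $\widehat u'\widehat u/\sigma^2 = u'M_W u/\sigma^2\sim\chi^2_{n-d_{K_n}-1}$.
\item They are independent, because $M_W M_{K_n}X=0$, so the jointly Gaussian linear forms $M_{K_n}X'u$ and $M_W u$ are uncorrelated.
\end{itemize}
Hence $T_n^{\mathrm{CJN}} = Z/\sqrt{\chi^2_{m}/m}$ with $m=n-d_{K_n}-1\ge 1$, which is exactly $t_m$ conditionally. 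This conditional law does not involve $X$ or $D_{K_n}$, so it is also the unconditional law. In particular it is free of $\tau_n^2$ and $\rho_n$: $Q_{K_n}$ enters only through $X'M_{K_n}X$, which cancels between the numerator and the standard error.

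For the distortion, I will use the standard Cornish--Fisher/Edgeworth expansion of the Student $t$ tail. Writing $F_m$ for the $t_m$ cdf, $F_m(z)=\Phi(z)-\phi(z)(z^3+z)/(4m)+O(m^{-2})$. This can be derived by writing $\Pp(|T|>z)=\E[2\bar\Phi(z\sqrt{V})]$ with $V=\chi^2_m/m$, which has mean $1$ and variance $2/m$, and then Taylor-expanding $g(v)=2\bar\Phi(z\sqrt v)$ around $v=1$ to second order:
\begin{itemize}
\item The first-order term vanishes in expectation.
\item The second-order term is $\tfrac12 g''(1)\cdot 2/m$.
\item Computing $g''(1)$ gives $\tfrac12 z\phi(z)(1+z^2)$ (derivative of $-z v^{-1/2}\phi(z\sqrt v)$), hence the leading term $(z^3+z)\phi(z)/(2m)$.
\end{itemize}
Setting $z=z_{1-\alpha/2}$ and $2\bar\Phi(z)=\alpha$ then yields the displayed formula.

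The main obstacle is controlling the Taylor remainder uniformly so that it is $O(m^{-2})$. The plan is to use a third-order expansion with remainder bounded by $\sup|g'''|\,\E|V-1|^3$ on the event $\{V>1/2\}$. We have $\E(V-1)^3=8/m^2$, and $\E|V-1|^3=O(m^{-3/2})$ needs sharpening via the fourth-order term, with $\E(V-1)^4=O(m^{-2})$. The complementary event $\{V\le 1/2\}$ has exponentially small probability by a chi-square Chernoff bound, and $g$ is bounded there. Finally, since $m=(1-\rho_n)n-1$ and $\rho_n\to\rho<1$ by Assumption~\ref{ass-reg}(ii), we have $O(m^{-1})=O(n^{-1})$ and $O(m^{-2})=O(n^{-2})$.
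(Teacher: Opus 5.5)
Your proposal is correct and is essentially the paper's proof. Your $M_W$ is exactly the paper's $R=M_{K_n}-qq'$ with $q=M_{K_n}X/\sqrt{X'M_{K_n}X}$, and your uncorrelated-Gaussian-forms argument is the Cochran step that gives the exact $t_{n-d_{K_n}-1}$ law.

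The one real difference is in the size expansion. The paper simply cites the tabulated Fisher--Cornish expansion of the Student-$t$ tail, whereas you derive its $1/m$ term from $\E[2\bar\Phi(z\sqrt V)]$. Your fourth-order plan does deliver the $O(m^{-2})$ remainder: the cubic term contributes exactly $g'''(1)\cdot 8/(6m^2)$, and the quartic remainder is controlled by $\E(V-1)^4=O(m^{-2})$ on $\{V>1/2\}$.

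One small wording fix: the $t_m$ law is free of $Q_{K_n}$, hence of $\tau_n^2$, but not of $\rho_n$, which still enters through $m=n-d_{K_n}-1$.
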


\begin{proof}[Proof sketch]
Conditional on $(D_{K_n},X)$, set $q=M_{K_n}X/\sqrt{X'M_{K_n}X}$ and $R=M_{K_n}-qq'$. Then $q'q=1$, $R$ is an orthogonal projector of rank $n-d_{K_n}-1$, and $Rq=0$. Cochran's theorem therefore makes $q'u/\sigma\sim N(0,1)$ independent of $u'Ru/\sigma^2\sim\chi^2_{n-d_{K_n}-1}$. Their ratio is $T_n^{\mathrm{CJN}}$, so it is exactly Student $t$; the Fisher--Cornish expansion \citep[26.7.8]{AS72} gives the displayed $O(1/n)$ distortion, uniformly in $\rho\in[0,\bar\rho]$ and independently of $\tau_n^2$. The identity $R=M_{[K,X]}$ also shows why the residual variance must project out $X$. Full details are in the Online Supplement.
\end{proof}

\begin{remark}[Why the IV analogy breaks]\label{rem-no-IV}
TSLS is biased of order $1/\mu^2$ because substituting $\widehat X$ for $X$ multiplies the first-stage residual into the second-stage residual, producing a term with expectation proportional to $\E[vu]$. FE-residualized OLS contains no such substitution: $M_KX$ is an exact projection onto a known subspace, so $\E[X'M_Ku]=0$ in finite samples and the estimator is unbiased. Any bias source capable of generating a Stock--Yogo threshold must therefore come from outside the baseline model.
\end{remark}

\begin{remark}[Heteroskedasticity]\label{rem-het}
Theorem~\ref{thm-no-distortion} is stated under homoskedasticity, where the exact $t$ distribution is available. Under heteroskedasticity, leave-one-out (HC2-type) variance estimators restore asymptotic validity of the baseline $t$-test in many-covariate designs \citep{CJN, KSS, Jochmans2022}; the ``no $\tau^2$ threshold'' conclusion is unchanged, though the exact finite-sample distribution is lost.
\end{remark}

We now introduce a natural bias source: classical measurement error in the treatment.

\section{The non-centrality theorem}\label{sec-eiv}

\subsection{Model and drift}\label{sec-eiv-model}

Suppose the analyst observes
\begin{equation}\label{eq-eiv}
X_i^* = X_i + \nu_i,
\end{equation}
where $X_i$ is the true treatment (unobserved), and $\nu_i$ is measurement error with $\E[\nu_i \mid \G_{K_n}] = 0$, $\Var(\nu_i \mid \G_{K_n}) = \sigma_\nu^2$, and $\nu_i \perp (X_i, u_i)$ conditional on $\G_{K_n}$. We retain $Q_{K_n}$ for the residual variance of the true treatment. The following assumption records the additional moment and tail conditions on the noise used in the variance calculations of Lemma~\ref{lem-atten} and the leverage argument of Lemma~\ref{lem-lev-star}.

\begin{assumption}[Measurement-error regularity]\label{ass-nu}\hfill
\begin{enumerate}[label=(\roman*)]
\item \emph{(Independence.)} $\{\nu_i\}_{i=1}^n$ are independent conditional on $\G_{K_n}$, with $\E[\nu_i\mid\G_{K_n}]=0$, $\Var(\nu_i\mid\G_{K_n})=\sigma_{\nu,n}^2$, and $\nu\perp(X,u)\mid\G_{K_n}$.
\item \emph{(Scaled moment bound.)} For some fixed $r>1$, the standardized errors have a uniformly bounded $2r$-th moment: $\E[\,|\nu_i/\sigma_{\nu,n}|^{2r}\mid\G_{K_n}\,]\le C_\nu<\infty$ for all $i,n$ --- equivalently $\E[\,|\nu_i|^{2r}\mid\G_{K_n}\,]\le C_\nu\,\sigma_{\nu,n}^{2r}$, the natural requirement that the noise has no heavier tails than a rescaled fixed distribution (it is not a bare finite-moment bound: the $2r$-th moment must scale as the $2r$-th power of the standard deviation, which is what the leverage argument uses). The canonical case is a uniformly bounded standardized fourth moment ($r=2$), which is also what the quadratic-form variance bound of Lemma~\ref{lem-atten} uses; the fourth moments need not be identical across observations.
\end{enumerate}
\end{assumption}

Assumption~\ref{ass-nu} imposes no tail condition beyond a finite fourth moment. Two moment orders appear, and we keep them separate. The quadratic-form variance in Lemma~\ref{lem-atten} --- and hence the denominator concentration used throughout, including the leverage ratio of Lemma~\ref{lem-lev-star} and Theorem~\ref{thm-noncentral} --- uses $r=2$; the \emph{maximal-noise} bound of Lemma~\ref{lem-lev-star} (its first display) and Remark~\ref{rem-lev-moment} need only $r>1$. We therefore take $r=2$ as the standing requirement for the EIV theorems and flag the two places where $r>1$ alone suffices. This is the routine light-tail requirement of the measurement-error literature --- weaker than the sub-Gaussianity one might expect a maximal-leverage argument to need; sub-Gaussian tails only sharpen the rate (Remark~\ref{rem-lev-moment}) and are not required.

The analyst runs FE-OLS on $(Y, X^*)$, obtaining
\begin{equation}\label{eq-est-star}
\hbeta_K^* = \frac{X^{*\prime} M_K Y}{X^{*\prime} M_K X^*}.
\end{equation}

We adopt the \emph{local measurement-error drift}:
\begin{equation}\label{eq-local-drift}
\sigma_{\nu,n}^2 = \frac{c^2}{n}
\end{equation}
for a fixed scalar $c^2 \geq 0$.

\begin{lemma}[Attenuation under the drift]\label{lem-atten}
Under Assumption~\ref{ass-reg}, Conditions~(B)--(C) of the Online Supplement, Assumption~\ref{ass-nu} with $r = 2$ (uniformly bounded standardized fourth moments), and the drift~\eqref{eq-local-drift}:
\begin{enumerate}[label=(\alph*)]
\item (Weak information, $\tau^2 < \infty$.) The true and contaminated Hessians and the attenuation factor satisfy
\[
\begin{gathered}
X'M_{K_n}X \to_p (1-\rho)\tau^2, \qquad
X^{*\prime} M_{K_n} X^* \to_p \tau^{*2} := (1-\rho)(\tau^2 + c^2), \\[4pt]
\lambda := \frac{X'M_{K_n}X}{X^{*\prime}M_{K_n}X^*} \to_p \frac{\tau^2}{\tau^2+c^2} \in (0, 1].
\end{gathered}
\]
Under Condition~(B), the signal and measurement noise are discounted by the same asymptotic factor $(1-\rho)$ (exactly in conditional mean under (B1)), so $\lambda$, the within reliability, does not depend on $\rho$: saturation shrinks the design's total identifying variation but does not change the limiting relative share of signal to noise within it. The slope estimator is not consistent; its limiting law --- centered at $\beta_0\lambda$ with nondegenerate Gaussian error --- is recorded in Corollary~\ref{cor-slope}, once the score CLT of Theorem~\ref{thm-noncentral} is available.
\item (Strong information, $nQ_{K_n} \to \infty$.) The slope estimator is consistent for the (now asymptotically unattenuated) value, $\hbeta_K^* \to_p \beta_0$, since $\lambda \to 1$.
\end{enumerate}
\end{lemma}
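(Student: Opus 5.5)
The plan is to expand the contaminated Hessian as
\[
X^{*\prime}M_{K_n}X^* = X'M_{K_n}X + 2X'M_{K_n}\nu + \nu'M_{K_n}\nu
\]
and handle the three terms separately, conditioning throughout on $\F_n=\sigma(X,D_{K_n})$.

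\emph{Signal term.} The second coordinate of Lemma~\ref{lem-clt} gives $X'M_{K_n}X/(nQ_{K_n})\to_p 1-\rho$. Because $nQ_{K_n}=\tau_n^2\to\tau^2<\infty$, this yields $X'M_{K_n}X\to_p(1-\rho)\tau^2$. Nothing new is needed here beyond Condition~(B), which the lemma already invokes.

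\emph{Noise quadratic form.} Conditional on $\F_n$ the $\nu_i$ are independent with variance $c^2/n$. Hence
\[
\E[\nu'M_{K_n}\nu\mid\F_n]=\frac{c^2}{n}\tr(M_{K_n})=c^2(1-\rho_n)\to c^2(1-\rho).
\]
This is the exact trace identity noted after Lemma~\ref{lem-clt}. For the variance, I would use the standard formula for a quadratic form in independent, mean-zero, non-identically distributed variables:
\[
\Var(\nu'M\nu\mid\F_n)=\sum_i(\E\nu_i^4-3\sigma_{\nu,n}^4)M_{ii}^2+2\sigma_{\nu,n}^4\tr(M^2).
\]
Assumption~\ref{ass-nu}(ii) with $r=2$ gives $\E\nu_i^4\le C_\nu c^4/n^2$. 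Since $M_{ii}\le1$ and $\tr(M^2)=\tr(M)\le n$, the variance is $O(n\cdot n^{-2})=O(1/n)\to0$. Chebyshev then gives $\nu'M_{K_n}\nu\to_p(1-\rho)c^2$.

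\emph{Cross term.} Conditional on $\F_n$, $X'M_{K_n}\nu$ has mean zero by independence. Its variance is $(c^2/n)\,X'M_{K_n}X=(c^2/n)\,O_p(1)\to0$, so the cross term is $o_p(1)$. Summing the three pieces gives $X^{*\prime}M_{K_n}X^*\to_p(1-\rho)(\tau^2+c^2)=\tau^{*2}$.

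The ratio limit for $\lambda$ then follows by the continuous mapping theorem. The limit is strictly positive because $\tau^2>0$, and the factors $(1-\rho)$ cancel, which is the $\rho$-freeness claimed in the statement.

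\emph{Main obstacle.} The delicate point is that the two $(1-\rho)$ factors arise by different routes. For the noise it is an exact conditional trace identity. For the signal it holds only through Condition~(B) of the Online Supplement: exactly under (B1), and up to $1+o_p(1)$ via leverage balance under (B2). I will state carefully that the cancellation is asymptotic under (B2) and exact in conditional mean only under (B1). I will also verify that conditioning on $\G_{K_n}$ in Assumption~\ref{ass-nu} suffices when conditioning on $\F_n$. This holds because $\nu\perp(X,u)\mid\G_{K_n}$ and $D_{K_n}$ is $\G_{K_n}$-measurable.

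\emph{Part (b).} I would normalize by $nQ_{K_n}$ instead of using raw limits. The signal term gives $X'M_{K_n}X/(nQ_{K_n})\to_p1-\rho$. The noise term divided by $nQ_{K_n}$ is $O_p(c^2/(nQ_{K_n}))\to0$. The cross term divided by $nQ_{K_n}$ has standard deviation $O(\sqrt{c^2/(n\,nQ_{K_n})}\,\sqrt{nQ_{K_n}})/(nQ_{K_n})\to0$. Together these give $\lambda\to_p1$.

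For consistency of $\hbeta^*_K$, write
\[
\hbeta^*_K-\beta_0\lambda=\frac{X^{*\prime}M_{K_n}u-\beta_0 X'M_{K_n}\nu-\beta_0\nu'M_{K_n}\nu}{X^{*\prime}M_{K_n}X^*}\cdot\text{(appropriate grouping)},
\]
that is, $\hbeta^*_K=\beta_0\lambda+X^{*\prime}M_{K_n}u/X^{*\prime}M_{K_n}X^*$. The score term $X'M_{K_n}u$ is $O_p(\sqrt{nQ_{K_n}})$ by the first coordinate of Lemma~\ref{lem-clt}. The term $\nu'M_{K_n}u$ has conditional variance $O(c^2)$, so it is $O_p(1)$. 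Dividing both by the denominator, which is of exact order $nQ_{K_n}\to\infty$, makes the ratio vanish. Hence $\hbeta^*_K\to_p\beta_0$.
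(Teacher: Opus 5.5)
Your proposal is correct and follows the paper's proof essentially step for step. It uses the same split $X^{*\prime}M_{K_n}X^*=X'M_{K_n}X+2X'M_{K_n}\nu+\nu'M_{K_n}\nu$, the conditional variance $\sigma_{\nu,n}^2X'M_{K_n}X$ for the cross term, and the trace and fourth-moment variance bound for the noise form. Under strong information it normalizes by $nQ_{K_n}$, with $X'M_{K_n}u=O_p(\sqrt{nQ_{K_n}})$ and $\nu'M_{K_n}u=O_p(1)$.

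There is one small algebra slip. With the lemma's $\lambda=X'M_{K_n}X/X^{*\prime}M_{K_n}X^*$, the exact identity is $\hbeta^*_K=\beta_0X^{*\prime}M_{K_n}X/X^{*\prime}M_{K_n}X^*+X^{*\prime}M_{K_n}u/X^{*\prime}M_{K_n}X^*$, which is how the paper writes it. Your version therefore drops the term $\beta_0X'M_{K_n}\nu/X^{*\prime}M_{K_n}X^*$. This is harmless, because you already showed $X'M_{K_n}\nu=o_p(nQ_{K_n})$.
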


\begin{proof}
Deferred to the Online Supplement. In the weak-information regime, Lemma~\ref{lem-clt} gives $X'M_{K_n}X\to_p(1-\rho)\tau^2$. Expanding $X^{*\prime}M_{K_n}X^*$, the cross term $X'M_{K_n}\nu$ is $o_p(1)$, while a bounded-kurtosis quadratic-form variance bound in the Online Supplement gives $\nu'M_{K_n}\nu\to_p c^2(1-\rho)$ --- exactly the same discount, applied to the noise. In the strong-information regime the same cross and noise terms are negligible relative to $nQ_{K_n}$; the estimator claim then follows by comparing the score with the denominator.
\end{proof}

The finite-$n$ attenuation factor is
\begin{equation}\label{eq-lambda}
\lambda_n := \frac{Q_{K_n}}{Q_{K_n} + \sigma_{\nu, n}^2} = \frac{\tau_n^2}{\tau_n^2 + c^2} \to \lambda,
\end{equation}
which is exactly the population conditional-within reliability of the observed regressor. Under Condition~(B) it is also the asymptotic share of sample post-projection variation in $X^*$ that is signal (and under (B1) the signal and noise conditional means have the same finite-sample trace factor). Thus no $\rho_n$ appears in this ratio; the common asymptotic factor survives only in the total scale $\tau_n^{*2}=(1-\rho_n)(\tau_n^2+n\sigma_{\nu,n}^2)$. The primitive theory is organized around this data property; the diagnostic translates it into the specification's breakdown reliability $\lambda^{\dagger}$.

\subsection{Interpreting the drift}\label{sec-drift}

The drift $\sigma_\nu^2 = c^2/n$ needs justifying, because it is doing real work: it is the unique rate at which the projected noise $c^2(1-\rho)$ and the residual signal $\tau^2$ remain of the same order, and hence the unique rate at which the limit experiment produces a non-degenerate threshold. Fixed $\sigma_\nu^2 > 0$ makes attenuation $O(1)$ and the size distortion total; $\sigma_\nu^2 = o(1/n)$ makes measurement error vanish from the limit entirely. It belongs to the small-measurement-error tradition: \citet{HahnHausmanKim2021} use small-$\sigma$ expansions to form approximate moments under conditional heteroskedasticity, while \citet{EvdokimovZeleneev2024} let an error scale $\tau\to0$ to characterize and identify nonparametric bias under weakly classical and nonclassical error. Our distinct calibration couples the error variance to $n$ so that standardized inferential bias remains $O(1)$. Three points establish that the knife-edge is the empirically relevant calibration rather than a mathematical convenience.

First, the drift is an \emph{approximation device with a finite-sample target}, in precisely the sense of \citet{SS}: weak-instrument asymptotics model the first stage as local to zero not because first stages literally shrink with $n$, but because the resulting limit distribution approximates the finite-sample distribution well whenever the concentration parameter is moderate. The same logic applies here. For a balanced treatment design satisfying (B1), the finite-$n$ calibration is
\begin{equation}\label{eq-eta-n}
\eta_n \;=\; -\frac{\beta_0\, \sigma_\nu^2\, \sqrt{n - d_{K_n}}}{\sigma\, \sqrt{\, Q_{K_n} + \sigma_\nu^2\,}}
\;=\; -\frac{\beta_0}{\sigma}\,(1 - \lambda_n)\,\sqrt{(n-d_{K_n})(Q_{K_n}+\sigma_\nu^2)},
\end{equation}
every ingredient of which is a fixed-$n$ quantity: no drifting sequence appears. The noise identity $\E[\nu'M_{K_n}\nu]=\sigma_\nu^2(n-d_{K_n})$ is exact; under (B1), $\E[X'M_{K_n}X\mid\mathcal D_n]=Q_{K_n}(n-d_{K_n})$ is exact as well. Under (B2), equation~\eqref{eq-eta-n} is the corresponding first-order calibration because the signal trace equality is only asymptotic. The theorems below say that $N(\eta_n, 1)$ approximates the law of the $t$-statistic whenever $\eta_n$ is moderate, and the balanced simulations in Section~\ref{sec-sim} (Design 4) quantify the quality of that approximation under a fixed finite-$n$ $\sigma_\nu^2$ (a single constant per cell), which is the direct check that the device is an approximation and not an artifact.

Second, the regime $\eta_n$ moderate is where applied work actually lives when FE saturate. Rewrite the noise-to-signal ratio as
\[
\frac{c^2}{\tau_n^2} = \frac{\sigma_\nu^2}{Q_{K_n}} = \frac{1 - \lambda_n}{\lambda_n},
\]
a population conditional-within ratio in which the fixed-effect dimension $\rho_n$ does not appear. Under Condition~(B), it also equals the limiting sample post-projection noise-to-signal ratio.
The empirical question is therefore: is the \emph{within reliability} $\lambda_n$ of real saturated-FE regressors close enough to one that measurement error is negligible, or close enough to zero that attenuation is total, or in between, where the threshold binds? The validation-study literature answers: in between, and increasingly so as saturation rises. Cross-sectional measurement reliability of survey earnings is high (on the order of $0.7$--$0.85$ in the CPS--Social Security match of \citealp{BK}); but within reliability after differencing or within transformations drops sharply, to roughly $0.5$--$0.65$ in the PSID validation study \citep{BoundEtAl94}, precisely because the transformation removes persistent signal while retaining transitory noise (the \citet{GH} amplification). Within reliabilities of $0.5$--$0.8$ correspond to $(1-\lambda)/\lambda$ between $0.25$ and $1$, exactly the range in which $\eta_n$ is moderate and the local approximation is designed to operate. The lead application of Section~\ref{sec-app-vdem} confirms this directly on real data: the realized $(1-\hlambda_n)/\hlambda_n$ is $0.11$ for the passing aggregate polyarchy specification and $0.83$ for the flagged legislative-constraints specification. Both sit inside the moderate window $[0.05, 10]$, so the local-drift regime is the empirically relevant one at both poles of the diagnostic.

Third, the drift formalizes a genuine comparative statics, not just an approximation: as researchers saturate ($\rho_n \uparrow$, $Q_{K_n} \downarrow$), the signal shrinks toward the noise floor while the noise itself is design-invariant. The sequence $\sigma_\nu^2 = c^2/n$, $nQ_K \to \tau^2$ is the stylized path of that empirical practice: it holds the noise-to-residual-signal ratio at the level the practitioner's own saturation choices have produced. In this sense the drift is not modeling the world shrinking; it is modeling the analyst saturating.

\begin{remark}[Nesting]\label{rem-nesting}
The limit experiment nests the familiar cases. $c = 0$ gives $\eta = 0$: no measurement error, no threshold, Theorem~\ref{thm-no-distortion} applies. $\tau^2 = \infty$ gives $\eta = 0$: with abundant residual signal, measurement error of order $1/n$ is asymptotically irrelevant and standard strong-identification inference is recovered, consistent with the textbook observation that classical EIV bias is proportional to the noise-to-signal ratio. $\rho = 0$ recovers the cross-sectional (no-FE) threshold exactly, since the common discount $\sqrt{1-\rho}$ is then $1$; more generally $\rho$ enters $\eta$ only through this overall scale factor, not through the within reliability $\lambda$, which is $\rho$-free by Lemma~\ref{lem-atten}. The interesting region is $\tau^2 < \infty$, $c > 0$: saturated designs with residual signal of the same order as residual noise.
\end{remark}

\subsection{Non-centrality}\label{sec-eiv-thm}

The asymptotic normality of the score in the theorem below rests on a leverage condition for the contaminated regressor $X^*$. We isolate it as a lemma; it is the one place where the moment bound of Assumption~\ref{ass-nu}(ii) is used, and it goes through under that bound alone, with no tail condition beyond a finite $2r$-th moment ($r>1$) needed.

\begin{lemma}[Leverage condition for the contaminated regressor]\label{lem-lev-star}
Suppose Assumption~\ref{ass-reg} and Conditions~(B)--(C) of the Online Supplement hold for the true regressor $X$ (so $\max_i \tX_{K_n,i}^2/(X'M_{K_n}X)\to_p 0$ and $X'M_{K_n}X\to_p(1-\rho)\tau^2$, Lemma~\ref{lem-clt}), Assumption~\ref{ass-nu} holds (with moment order $r>1$ for the maximal-noise bound below, and $r=2$ for the leverage-ratio conclusion, which invokes Lemma~\ref{lem-atten}), and the drift~\eqref{eq-local-drift} holds with $\tau^2<\infty$. Write $\widetilde X^{*}_{K_n,i}:=(M_{K_n}X^*)_i=\tX_{K_n,i}+\widetilde\nu_{K_n,i}$ with $\widetilde\nu_{K_n}:=M_{K_n}\nu$. Then
\[
\max_{1\le i\le n}\ \widetilde\nu_{K_n,i}^2 = O_p\!\big(c^2\, n^{1/r-1}\big)=o_p(1),
\qquad
\max_{1\le i\le n}\ \frac{\widetilde X^{*2}_{K_n,i}}{X^{*\prime}M_{K_n}X^*}\to_p 0 .
\]
\end{lemma}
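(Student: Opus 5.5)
The proof splits into two parts that match the two displays: a maximal bound for the projected noise, and a leverage-ratio bound that combines it with the leverage control for the true regressor.

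\emph{Maximal-noise bound.} The projected noise is $\widetilde\nu_{K_n,i}=\nu_i-\sum_j (P_{K_n})_{ij}\nu_j$, so $\widetilde\nu_{K_n,i}^2\le 2\nu_i^2+2\big((P_{K_n}\nu)_i\big)^2$ and we bound the two pieces separately.

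For the raw term, a union bound with Markov's inequality at order $2r$, applied conditionally on $\G_{K_n}$, gives
\[
\Pp\big(\max_i \nu_i^2> t\big)\le n\,C_\nu\,\sigma_{\nu,n}^{2r}/t^{r}.
\]
Setting $t=M n^{1/r}\sigma_{\nu,n}^2=M c^2 n^{1/r-1}$ makes the right-hand side equal to $C_\nu/M^r$. Hence $\max_i\nu_i^2=O_p(c^2n^{1/r-1})$, which is $o_p(1)$ because $r>1$.

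For the projected term, write $(P\nu)_i=\sum_j P_{ij}\nu_j$. Conditionally on the design and $\G_{K_n}$, this is a weighted sum of independent mean-zero variables, and $\sum_j P_{ij}^2=P_{ii}\le1$. Rosenthal's (or Burkholder's) inequality at order $2r$ then gives
\[
\E\big|(P\nu)_i\big|^{2r}\lesssim \sigma_{\nu,n}^{2r}\Big(P_{ii}^{r}+\sum_j |P_{ij}|^{2r}\Big)\le C\sigma_{\nu,n}^{2r}.
\]
Here I use $\sum_j|P_{ij}|^{2r}\le \big(\sum_jP_{ij}^2\big)^r\le1$, and the bounded scaled moment is what lets the bound scale as $\sigma_{\nu,n}^{2r}$. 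The same union bound as before then yields $\max_i (P\nu)_i^2=O_p(c^2n^{1/r-1})$. Only Assumption~\ref{ass-nu}(ii) with $r>1$ is used in this part. This moment step is the one that needs care, because the $P_{ij}$ are random (through $D_{K_n}$) and the noise moments are only conditional on $\G_{K_n}$. I would handle it by conditioning on $\F_n=\sigma(X,D_{K_n})$, using $\nu\perp X\mid\G_{K_n}$ to transfer the moment bound to that conditioning, and then integrating out.

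\emph{Leverage ratio.} Using $(a+b)^2\le 2a^2+2b^2$,
\[
\max_i\frac{\widetilde X^{*2}_{K_n,i}}{X^{*\prime}M_{K_n}X^*}\le \frac{2\max_i\tX_{K_n,i}^2+2\max_i\widetilde\nu_{K_n,i}^2}{X^{*\prime}M_{K_n}X^*}.
\]
Assumption~\ref{ass-reg}(iii) gives $\max_i\tX^2_{K_n,i}=o_p(nQ_{K_n})=o_p(\tau_n^2)=o_p(1)$, since $\tau^2<\infty$. The first part gives $\max_i\widetilde\nu^2_{K_n,i}=o_p(1)$. For the denominator, Lemma~\ref{lem-atten}(a), which is where $r=2$ is invoked, gives $X^{*\prime}M_{K_n}X^*\to_p(1-\rho)(\tau^2+c^2)$. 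This limit is bounded away from zero because $\tau^2>0$ and $\rho<1$. The numerator is therefore $o_p(1)$ and the denominator is bounded below in probability, so the ratio tends to zero in probability by the continuous mapping theorem. This completes the argument.
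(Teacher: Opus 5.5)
Your proof is correct and follows essentially the same route as the paper. Both arguments take the same three steps:
\begin{itemize}
\item a Rosenthal bound on the $2r$-th conditional moment of the projected noise, which scales as $\sigma_{\nu,n}^{2r}$;
\item Markov's inequality plus a union bound, giving $\max_i\widetilde\nu_{K_n,i}^2=O_p(c^2n^{1/r-1})$;
\item the inequality $\widetilde X^{*2}_{K_n,i}\le 2\tX^2_{K_n,i}+2\widetilde\nu^2_{K_n,i}$, combined with Assumption~\ref{ass-reg}(iii) and Lemma~\ref{lem-atten} to control the leverage ratio.
\end{itemize}
The differences are cosmetic. The paper applies Rosenthal directly to the rows of $M_{K_n}$, whose squared norms are $(M_{K_n})_{ii}\le1$ and whose entries are bounded by one, and it conditions only on $\G_{K_n}$. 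Your split $M_{K_n}=I-P_{K_n}$ and your extra conditioning on $\F_n$ are therefore unnecessary, though harmless.
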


\begin{proof}
Deferred to the Online Supplement. Applying Rosenthal's inequality directly to each row of the projection $M_{K_n}$ gives $\sup_i\E[|\widetilde\nu_{K_n,i}|^{2r}\mid\G_{K_n}]\lesssim\sigma_{\nu,n}^{2r}$, because every row has squared Euclidean norm $(M_{K_n})_{ii}\le1$. Markov's inequality and a union bound then give $\max_i\widetilde\nu_{K_n,i}^2=O_p(c^2n^{1/r-1})=o_p(1)$. The leverage ratio follows from $\widetilde X^{*2}_{K_n,i}\le2\tX_{K_n,i}^2+2\widetilde\nu_{K_n,i}^2$ and Lemma~\ref{lem-atten}.
\end{proof}

\begin{theorem}[Non-centrality under EIV with local drift]\label{thm-noncentral}
Let $T_n^{\mathrm{CJN}*}(\beta_0) = (\hbeta_K^* - \beta_0) / (\hsigma^*_{\mathrm{CJN}} / \sqrt{X^{*\prime} M_{K_n} X^*})$ be the CJN $t$-statistic computed from the contaminated regressor $X^*$, with $\hsigma^{*2}_{\mathrm{CJN}} = \widehat u^{*\prime}\widehat u^*/(n - d_{K_n} - 1)$ and $\widehat u^* = M_{K_n}(Y - X^*\hbeta^*_K)$. The degrees of freedom match the baseline estimator of Section~\ref{sec-baseline}: the residual is formed after projecting out both the fixed effects and $X^*$, so $d_{K_n}+1$ parameters have been fitted. Nothing asymptotic turns on the $-1$, but it is the convention the replication code uses and the one printed regression output reports. Under Assumption~\ref{ass-reg} and Conditions~(B)--(C) of the Online Supplement applied to the true treatment $X$, Assumption~\ref{ass-nu} on $\nu$ with $r = 2$, the local drift~\eqref{eq-local-drift}, conditional homoskedasticity with the error conditionally independent of the regressor, $u \perp X \mid \G_{K_n}$ (so that, with $u \perp \nu \mid \G_{K_n}$, conditioning on $X^*$ preserves the mean-zero, variance-$\sigma^2$ conditional law of each $u_i$; the $u_i$ remain independent but need not be identically distributed across cells, and the uniform conditional moment of Assumption~\ref{ass-reg}(vi) discharges the contaminated-score Lindeberg condition), $\tau^2 < \infty$, and $H_0: \beta = \beta_0$,
\[
T_n^{\mathrm{CJN}*}(\beta_0) \Rightarrow N(\eta, 1), \qquad
\eta = -\frac{\beta_0\, c^2\sqrt{1-\rho}}{\sigma \sqrt{\tau^2 + c^2}}.
\]
\end{theorem}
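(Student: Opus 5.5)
The plan is to decompose the centered estimator into a score term and a deterministic-in-the-limit bias term. Since $Y = X\beta_0 + u = X^*\beta_0 - \nu\beta_0 + u$, we have
\[
\hbeta^*_K - \beta_0 = \frac{X^{*\prime}M_{K_n}u}{X^{*\prime}M_{K_n}X^*} - \beta_0\,\frac{X^{*\prime}M_{K_n}\nu}{X^{*\prime}M_{K_n}X^*}.
\]
Multiplying by $\sqrt{X^{*\prime}M_{K_n}X^*}/\hsigma^*_{\mathrm{CJN}}$ gives
\[
T_n^{\mathrm{CJN}*}(\beta_0) = \frac{1}{\hsigma^*_{\mathrm{CJN}}}\left[\frac{X^{*\prime}M_{K_n}u}{\sqrt{X^{*\prime}M_{K_n}X^*}} - \beta_0\frac{X^{*\prime}M_{K_n}\nu}{\sqrt{X^{*\prime}M_{K_n}X^*}}\right].
\]
Note that under the drift all quadratic forms are $O_p(1)$ without the $n$ normalization used in Lemma~\ref{lem-clt}, because $nQ_{K_n}\to\tau^2<\infty$. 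We treat the three ingredients in turn.

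\emph{Bias term.} Write $X^{*\prime}M_{K_n}\nu = X'M_{K_n}\nu + \nu'M_{K_n}\nu$. From the proof of Lemma~\ref{lem-atten}, $X'M_{K_n}\nu = o_p(1)$ and $\nu'M_{K_n}\nu \to_p c^2(1-\rho)$; the latter holds because its conditional mean is exactly $\sigma_{\nu,n}^2(n-d_{K_n})$ and the bounded-kurtosis variance bound makes it concentrate. Combined with $X^{*\prime}M_{K_n}X^*\to_p (1-\rho)(\tau^2+c^2)$, the bias contribution converges in probability to
\[
-\beta_0\,\frac{c^2(1-\rho)}{\sqrt{(1-\rho)(\tau^2+c^2)}} = -\beta_0\,\frac{c^2\sqrt{1-\rho}}{\sqrt{\tau^2+c^2}}.
\]
After division by $\sigma$ this is exactly $\eta$.

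\emph{Score term.} Condition on $\F_n^*=\sigma(X^*,D_{K_n})$. By the stated conditional independence, the $u_i$ remain independent with mean zero and variance $\sigma^2$ given $\F_n^*$. The statistic $X^{*\prime}M_{K_n}u/\sqrt{X^{*\prime}M_{K_n}X^*}$ is therefore a weighted sum $\sum_i w_i u_i$ with $\sum_i w_i^2 = 1$ and weights $w_i=\widetilde X^*_{K_n,i}/\sqrt{X^{*\prime}M_{K_n}X^*}$. Lemma~\ref{lem-lev-star} gives $\max_i w_i^2\to_p 0$, and Assumption~\ref{ass-reg}(vi) then yields Lyapunov's condition conditionally:
\[
\sum_i |w_i|^{2+\epsilon}\E\big[|u_i|^{2+\epsilon}\mid\cdot\big]\le C_u\max_i|w_i|^{\epsilon}\to_p 0.
\]
Hence, conditionally and then unconditionally by dominated convergence of characteristic functions, the score term converges in distribution to $N(0,\sigma^2)$. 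It is asymptotically independent of the bias term, which converges to a constant, so Slutsky applies directly.

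\emph{Variance estimator.} It remains to show $\hsigma^{*2}_{\mathrm{CJN}}\to_p\sigma^2$; I expect this to be the main bookkeeping obstacle. The residual is $\widehat u^* = R^*(u-\nu\beta_0)$ with $R^*=M_{K_n}-q^*q^{*\prime}$ and $q^*=M_{K_n}X^*/\|M_{K_n}X^*\|$. Expanding, $\widehat u^{*\prime}\widehat u^* = u'R^*u - 2\beta_0\nu'R^*u + \beta_0^2\nu'R^*\nu$. The last two terms are $O_p(1)$: $\nu'M_{K_n}\nu=O_p(1)$ and the cross term is $O_p(1)$ by Cauchy--Schwarz, so after division by $n-d_{K_n}-1\asymp(1-\rho)n$ they vanish. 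For the first term, $u'q^*q^{*\prime}u=O_p(1)$ by the score step, and $u'M_{K_n}u/(n-d_{K_n})\to_p\sigma^2$ by Lemma~\ref{lem-clt} under homoskedasticity. Thus the local noise, although it shifts the numerator by an $O(1)$ amount, is invisible to the residual scale. Combining the three pieces via Slutsky gives $T_n^{\mathrm{CJN}*}(\beta_0)\Rightarrow \big(N(0,\sigma^2)+\sigma\eta\big)/\sigma = N(\eta,1)$.

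The delicate points are two. First, the score must be conditioned on $X^*$ rather than $X$, which is where $u\perp X\mid\G_{K_n}$ is needed. Second, the uniformity of the Lyapunov bound when the $u_i$ are not identically distributed requires the random weights' maximal leverage from Lemma~\ref{lem-lev-star}.
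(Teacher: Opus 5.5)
Your proposal follows the paper's proof essentially step for step: the same score-minus-bias decomposition of $T_n^{\mathrm{CJN}*}(\beta_0)$, the conditional Lyapunov CLT for the score given $X^*$ with the maximal weight controlled by Lemma~\ref{lem-lev-star} and Assumption~\ref{ass-reg}(vi), Lemma~\ref{lem-atten} for the bias and Hessian limits, and residual-scale consistency via $\widehat u^*=R^*(u-\beta_0\nu)$ (your term-by-term expansion of $e'R^*e$ is the paper's identity $(n-d_{K_n}-1)\hsigma^{*2}_{\mathrm{CJN}}=e'Me-N_n^2/H_n^*$ written out). One small slip: Cauchy--Schwarz gives only $|\nu'R^*u|\le\sqrt{(\nu'R^*\nu)(u'R^*u)}=O_p(\sqrt n)$, not $O_p(1)$; the $O_p(1)$ rate does hold, but from the conditional variance $\Var(\nu'R^*u\mid X,\nu,\G_{K_n})=\sigma^2\nu'R^*\nu$, and in any case $O_p(\sqrt n)$ suffices after dividing by $n-d_{K_n}-1$, which is exactly how the paper argues.
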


\begin{proof}[Proof sketch]
Write $M=M_{K_n}$ and $H_n^*=X^{*\prime}MX^*$. Under $H_0$ the statistic has the exact decomposition
\[
T_n^{\mathrm{CJN}*}(\beta_0)
=\frac{X^{*\prime}Mu}{\sigma\sqrt{H_n^*}}\frac{\sigma}{\hsigma^*_{\mathrm{CJN}}}
-\frac{\beta_0X^{*\prime}M\nu}{\hsigma^*_{\mathrm{CJN}}\sqrt{H_n^*}}.
\]
The first factor converges to $N(0,1)$ by the conditional Lyapunov CLT: Lemma~\ref{lem-lev-star} supplies its maximal-weight condition, and Assumptions~\ref{ass-reg}(vi) and~\ref{ass-nu}(i) preserve the required moments after conditioning on $X^*$. Lemma~\ref{lem-atten} gives $H_n^*\to_p\tau^{*2}=(1-\rho)(\tau^2+c^2)$ and $X^{*\prime}M\nu=X'M\nu+\nu'M\nu\to_p c^2(1-\rho)$. Finally, expanding the measured-equation residual sum of squares gives $\hsigma^{*2}_{\mathrm{CJN}}\to_p\sigma^2$. Slutsky therefore yields the displayed $N(\eta,1)$ limit. The Online Supplement verifies the conditional CLT and residual-scale limit in full.
\end{proof}

In words: once the treatment is measured with noise at this local scale, the usual $t$-test no longer centers at zero under the truth it is testing, but at $\eta$. Its size is governed by how much measurement noise there is relative to residual signal, and discounted only by the overall saturation factor $\sqrt{1-\rho}$, not by any preferential absorption of noise over signal. This single number is what the rest of the paper turns into a critical value, a breakdown reliability, and a certificate.

\begin{corollary}[Limiting law of the slope estimator]\label{cor-slope}
Under the conditions of Theorem~\ref{thm-noncentral} (in particular conditional homoskedasticity), with $\lambda=\tau^2/(\tau^2+c^2)$ and $\tau^{*2}=(1-\rho)(\tau^2+c^2)$ as in Lemma~\ref{lem-atten},
\[
\hbeta_K^* \;\Rightarrow\; \beta_0\lambda + N\!\big(0,\ \sigma^2/\tau^{*2}\big).
\]
The estimator is asymptotically centered at the attenuated value $\beta_0\lambda$, the \emph{classical} attenuation factor (no $\rho$-dependence); when $\tau^2 < \infty$ it is not consistent, and consistency ($\hbeta_K^*\to_p\beta_0\lambda$) is recovered only under strong information $nQ_{K_n}\to\infty$. The sampling variance $\sigma^2/\tau^{*2}$ does carry the $\rho$-dependence, since heavier saturation shrinks the total identifying variation $\tau^{*2}$ at fixed $(\tau^2,c^2)$.
\end{corollary}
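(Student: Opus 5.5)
The plan is to work directly from the exact algebraic decomposition of the contaminated estimator and reuse the pieces already assembled for Theorem~\ref{thm-noncentral}. Write $M=M_{K_n}$ and $H_n^*=X^{*\prime}MX^*$. Substituting $Y=X\beta_0+u=X^*\beta_0-\nu\beta_0+u$ into \eqref{eq-est-star} gives the exact identity
\[
\hbeta_K^*-\beta_0\lambda_n^{\mathrm{s}}
=\frac{X^{*\prime}Mu}{H_n^*},
\qquad
\lambda_n^{\mathrm{s}}:=1-\frac{X^{*\prime}M\nu}{H_n^*}=\frac{X^{*\prime}MX}{H_n^*}.
\]
Note that $\lambda_n^{\mathrm{s}}$ uses the cross product $X^{*\prime}MX$, not the ratio $\lambda$ of Lemma~\ref{lem-atten}. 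The proof then splits into two pieces: a probability limit for $\lambda_n^{\mathrm{s}}$, and a weak limit for the score term.

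\emph{Attenuation term.} By Lemma~\ref{lem-atten}(a), $H_n^*\to_p(1-\rho)(\tau^2+c^2)$. The proof of Lemma~\ref{lem-atten} also gives $X^{*\prime}M\nu=X'M\nu+\nu'M\nu\to_p c^2(1-\rho)$: the cross term is $o_p(1)$, and the noise quadratic form concentrates by the bounded-kurtosis variance bound. Hence
\[
\lambda_n^{\mathrm{s}}\to_p 1-\frac{c^2}{\tau^2+c^2}=\frac{\tau^2}{\tau^2+c^2}=\lambda .
\]
The factor $(1-\rho)$ cancels here, which is exactly why the centering carries no $\rho$-dependence.

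\emph{Score term.} Write
\[
\frac{X^{*\prime}Mu}{H_n^*}=\frac{1}{\sqrt{H_n^*}}\cdot\frac{X^{*\prime}Mu}{\sqrt{H_n^*}} .
\]
The second factor is the self-normalized contaminated score already treated in the proof of Theorem~\ref{thm-noncentral}. Conditional on $\F_n^*=\sigma(X^*,D_{K_n})$, the $u_i$ are independent, mean zero, with variance $\sigma^2$ (by conditional homoskedasticity and $u\perp(X,\nu)\mid\G_{K_n}$). Its conditional variance is therefore exactly $\sigma^2$. Lemma~\ref{lem-lev-star} supplies the maximal-weight condition, and Assumption~\ref{ass-reg}(vi) supplies Lyapunov, so
\[
\frac{X^{*\prime}Mu}{\sqrt{H_n^*}}\Rightarrow N(0,\sigma^2),
\]
conditionally and hence unconditionally by dominated convergence of characteristic functions. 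Multiplying by $1/\sqrt{H_n^*}\to_p 1/\tau^*$ (with $\tau^*>0$ since $\tau^2>0$ and $\rho<1$), Slutsky gives $N(0,\sigma^2/\tau^{*2})$.

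\emph{Combining.} Since $\beta_0\lambda_n^{\mathrm{s}}\to_p\beta_0\lambda$ is a constant limit, Slutsky again yields $\hbeta_K^*\Rightarrow\beta_0\lambda+N(0,\sigma^2/\tau^{*2})$.

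The remaining claims are read off this limit:
\begin{itemize}
\item Inconsistency when $\tau^2<\infty$ holds because the limit is nondegenerate.
\item The $\rho$-dependence sits only in $\tau^{*2}$.
\item Under strong information, $H_n^*\to\infty$ kills the Gaussian term while $\lambda_n^{\mathrm{s}}\to_p1$, by Lemma~\ref{lem-atten}(b).
\end{itemize}

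The main obstacle is the conditional CLT for the contaminated score. The weights $\widetilde X^*_{K_n,i}$ involve $\nu$, so conditioning on $X^*$ must not disturb the law of $u$. This is exactly where the independence assumption $u\perp X\mid\G_{K_n}$ is needed. I would take this step verbatim from the supplement's proof of Theorem~\ref{thm-noncentral} rather than reprove it. Everything else is Slutsky applied to quantities whose limits are already established.
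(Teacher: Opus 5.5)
Your proposal is correct and follows the paper's proof essentially step for step. Both arguments use the same exact identity $\hbeta_K^*=(\beta_0X^{*\prime}MX+X^{*\prime}Mu)/X^{*\prime}MX^*$, the Hessian and bias limits from Lemma~\ref{lem-atten}, the contaminated-score CLT established in the proof of Theorem~\ref{thm-noncentral}, and Slutsky; the only cosmetic difference is that you write the centering coefficient as $1-X^{*\prime}M\nu/H_n^*$, where the paper takes $X^{*\prime}MX\to_p(1-\rho)\tau^2$ directly, which is the same computation.
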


\begin{proof}
Deferred to the Online Supplement: write $\hbeta_K^* = (\beta_0\, X^{*\prime}MX + X^{*\prime}Mu)/X^{*\prime}MX^*$ and apply Lemma~\ref{lem-atten} together with the score limit established in Theorem~\ref{thm-noncentral} and Slutsky.
\end{proof}

\begin{remark}[Sub-Gaussian tails only sharpen the rate]\label{rem-lev-moment}
Lemma~\ref{lem-lev-star} needs nothing beyond the finite $2r$-th moment of Assumption~\ref{ass-nu}(ii): any $r>1$ delivers $\max_i\widetilde\nu_{K_n,i}^2=O_p(c^2n^{1/r-1})=o_p(1)$, which is all the theorem uses. Stronger tails only sharpen the rate --- sub-Gaussian noise replaces $n^{1/r}$ by $\log n$ --- so we state the condition in the moment form standard in the measurement-error literature.
\end{remark}

\begin{remark}[Role of $\rho$ in the non-centrality]\label{rem-rho}
It is tempting to read $\eta = -\beta_0 c^2\sqrt{1-\rho}/(\sigma\sqrt{\tau^2+c^2})$ as saying that saturation preferentially absorbs measurement noise, so that adding fixed effects is a targeted remedy for attenuation. It is not. Under Condition~(B), the within reliability $\lambda=\tau^2/(\tau^2+c^2)$ does not depend on $\rho$ (Lemma~\ref{lem-atten}): the signal and noise Hessians have the same asymptotic degrees-of-freedom factor, with the signal statement relying on treatment balance. What $\rho$ does is shrink the total identifying variation $\tau^{*2}=(1-\rho)(\tau^2+c^2)$ uniformly, so the whole $\rho$-dependence of $\eta$ collapses to $(1-\rho)/\sqrt{1-\rho}=\sqrt{1-\rho}$: heavier saturation lowers $|\eta|$ only by making the estimator noisier overall, exactly as any further covariate would. At $|\beta_0|/\sigma=1$, $\tau^2=c^2=5$ (so $\lambda=0.5$ throughout), $|\eta|$ falls from $1.50$ at $\rho=0.1$ to $1.00$ at $\rho=0.6$, the ratio $\sqrt{0.4/0.9}=0.667$ being exactly the $\sqrt{1-\rho}$ channel.

This is consistent with \citet{GH}, whose amplification result concerns a different comparative static: they hold the absolute noise variance fixed and let a finer transformation shrink the persistent component of the signal while leaving transitory noise intact, so $\lambda$ falls. That mechanism operates on $X$'s covariance structure, which Assumption~\ref{ass-reg} does not engage; $\rho$ enters the within reliability in neither analysis.
\end{remark}

\subsection{Local power}\label{sec-power}

The same limit experiment delivers the power side of the diagnostic.

\begin{proposition}[Local power]\label{prop-power}
Under the conditions of Theorem~\ref{thm-noncentral}, but with the true coefficient set to the information-standardized alternative $\beta_n = \beta_0 + b/\sqrt{nQ_{K_n}}$ for fixed $b \in \R$ (when $\tau^2 < \infty$ this is a fixed, non-vanishing displacement $\beta_n - \beta_0 \to b/\tau$, the natural scaling since the total information $nQ_{K_n}$ stays finite; it reduces to a conventional vanishing local alternative only in the strong-information limit $\tau^2 = \infty$),
\[
T_n^{\mathrm{CJN}*}(\beta_0) \Rightarrow N\!\Big(\frac{b}{\sigma}\sqrt{(1-\rho)\lambda} \;+\; \eta,\ 1\Big),
\qquad \lambda = \frac{\tau^2}{\tau^2 + c^2}.
\]
Consequently, relative to the fully diffuse no-FE benchmark (slope $b/\sigma$), measurement error and FE saturation jointly attenuate the information-standardized power slope by the factor $\sqrt{(1-\rho)\lambda}$ and shift the power curve by $\eta$. Relative to the error-free case at the same $\rho$, whose slope is $(b/\sigma)\sqrt{1-\rho}$, measurement error alone contributes the factor $\sqrt\lambda$. Thus within reliability supplies the classical attenuation-of-power channel, while $\sqrt{1-\rho}$ is the overall-scale effect that also governs $\eta$ (Remark~\ref{rem-rho}).
\end{proposition}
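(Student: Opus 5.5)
The plan is to reuse the exact algebraic decomposition from the proof of Theorem~\ref{thm-noncentral}, now with the data generated at $\beta_n$ rather than $\beta_0$. Write $M=M_{K_n}$ and $H_n^*=X^{*\prime}MX^*$. Substituting $X=X^*-\nu$ into $Y=X\beta_n+u$ gives $Y=X^*\beta_n-\nu\beta_n+u$. Hence $\hbeta_K^*-\beta_0=(\beta_n-\beta_0)-\beta_n X^{*\prime}M\nu/H_n^*+X^{*\prime}Mu/H_n^*$, exactly. Multiplying by $\sqrt{H_n^*}/\hsigma^*_{\mathrm{CJN}}$ splits the statistic into three pieces:
\[
T_n^{\mathrm{CJN}*}(\beta_0)=\frac{X^{*\prime}Mu}{\sigma\sqrt{H_n^*}}\frac{\sigma}{\hsigma^*_{\mathrm{CJN}}}
+\frac{b}{\sqrt{nQ_{K_n}}}\frac{\sqrt{H_n^*}}{\hsigma^*_{\mathrm{CJN}}}
-\frac{\beta_n X^{*\prime}M\nu}{\hsigma^*_{\mathrm{CJN}}\sqrt{H_n^*}} .
\]
I would treat these as a score term, a displacement term and an attenuation term, and handle each with results already established.

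The score term does not involve $\beta_n$. It therefore converges to $N(0,1)$ by the same conditional Lyapunov argument as in Theorem~\ref{thm-noncentral}, with Lemma~\ref{lem-lev-star} supplying the maximal-weight condition. Lemma~\ref{lem-atten} gives $H_n^*\to_p(1-\rho)(\tau^2+c^2)$ and $X^{*\prime}M\nu\to_p c^2(1-\rho)$. Using $\sqrt{nQ_{K_n}}=\tau_n\to\tau$ and $\beta_n\to\beta_0+b/\tau$, and provided $\hsigma^*_{\mathrm{CJN}}\to_p\sigma$, Slutsky gives the two deterministic limits
\[
\frac{b}{\sigma}\frac{\sqrt{(1-\rho)(\tau^2+c^2)}}{\tau}
\qquad\text{and}\qquad
\eta-\frac{b}{\tau}\frac{c^2\sqrt{1-\rho}}{\sigma\sqrt{\tau^2+c^2}} .
\]
The two $b$-terms combine to
\[
\frac{b}{\sigma}\sqrt{1-\rho}\,\frac{(\tau^2+c^2)-c^2}{\tau\sqrt{\tau^2+c^2}}=\frac{b}{\sigma}\sqrt{(1-\rho)\lambda}.
\]
This yields the stated mean $\tfrac{b}{\sigma}\sqrt{(1-\rho)\lambda}+\eta$ with unit variance. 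The cancellation is worth displaying, because it is what turns the naive factor $1/\sqrt\lambda$ from the displacement term alone into $\sqrt\lambda$. It reflects that the estimator only tracks the attenuated displacement $\lambda(\beta_n-\beta_0)$. The comparative statements then follow by setting $c=0$ (so $\lambda=1$, giving slope $(b/\sigma)\sqrt{1-\rho}$) and $\rho=0$.

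The step I expect to be the main obstacle is the residual scale $\hsigma^*_{\mathrm{CJN}}\to_p\sigma$ under the alternative. When $\tau^2<\infty$ the displacement $\beta_n-\beta_0\to b/\tau$ does not vanish, so the proof of Theorem~\ref{thm-noncentral} cannot be quoted verbatim. I would expand
\[
\widehat u^*=M\big(u-\nu\beta_n-X^*(\hbeta_K^*-\beta_n)\big).
\]
All that is needed is that $\beta_n$ and $\hbeta_K^*$ are $O_p(1)$, which holds by Corollary~\ref{cor-slope}'s argument with $\beta_n$ in place of $\beta_0$. Then $\beta_n^2\nu'M\nu=O_p(1)$, the projection onto $X^*$ removes an $O_p(1)$ quantity, and the cross terms $\nu'Mu$ and $X^{*\prime}Mu$ are $O_p(1)$. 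Every contamination of $\widehat u^{*\prime}\widehat u^*$ is therefore $O_p(1)$, and dividing by $n-d_{K_n}-1\asymp(1-\rho)n$ makes it negligible. Meanwhile $u'Mu/(n-d_{K_n})\to_p\sigma^2$ by Lemma~\ref{lem-clt} under homoskedasticity. A secondary check is that the conditional score CLT is unaffected by $\beta_n$, since conditioning on $X^*$ (and on the noise) leaves the law of $u$ unchanged.
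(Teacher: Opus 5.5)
Your proof is correct and takes essentially the paper's route. Like the paper, you reuse the exact decomposition of Theorem~\ref{thm-noncentral}, take each piece's limit via Lemma~\ref{lem-atten} and Slutsky, and check that the residual scale is unchanged because the added error component has $O_p(1)$ squared $M$-norm. The only difference is bookkeeping: the paper writes the alternative as $X(\beta_n-\beta_0)$ added to the null data, so the attenuation term stays at $\beta_0$ and the single new term $\frac{b}{\sigma}X^{*\prime}MX/(\sqrt{nQ_{K_n}}\sqrt{X^{*\prime}MX^*})$ yields $\sqrt{(1-\rho)\lambda}$ directly, which is your cancellation done in advance via $X^{*\prime}MX=X^{*\prime}MX^*-X^{*\prime}M\nu$.
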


\begin{proof}
Deferred to the Online Supplement. Relative to Theorem~\ref{thm-noncentral}, the alternative adds only
\[
\frac{b}{\sigma}\,
\frac{X^{*\prime}MX}{\sqrt{nQ_{K_n}}\sqrt{X^{*\prime}MX^*}}
\;\to_p\;\frac{b}{\sigma}\sqrt{1-\rho}\,
\frac{\tau}{\sqrt{\tau^2+c^2}}
=\frac{b}{\sigma}\sqrt{(1-\rho)\lambda}
\]
to the standardized numerator; all other terms have the limit established there.
\end{proof}

\begin{remark}[Reading the power result]\label{rem-power}
Proposition~\ref{prop-power} says the threshold is not only about size. Even a specification that passes the size threshold pays a power tax of $\sqrt{(1-\rho)\lambda}$; at within reliability $\lambda = 0.6$ and modest saturation $\rho=0.1$, the local slope falls to $\sqrt{0.9\times0.6}\approx74\%$ of its error-free value. Reporting $\hlambda_n$ and $\hrho_n$ alongside the size diagnostic therefore serves double duty --- $\hlambda_n$ summarizes the within-reliability tax, $\hrho_n$ the saturation tax, and the two multiply rather than one subsuming the other. In the applications audited here $\hrho_n$ is small ($0.03$ and $0.14$), so the $\sqrt{1-\rho}$ correction is minor in practice; it need not be for more heavily saturated designs (e.g.\ worker--firm networks), where reporting it separately matters more.
\end{remark}

\subsection{Cluster dependence}\label{sec-cluster}

Clustered inference is part of the diagnostic rather than an after-the-fact adjustment: because the diagnostic is a bias-to-standard-error ratio, changing the variance estimator can reverse its verdict. A rescaling imposed by assumption would establish neither the relevant limit law nor the validity of the reported standard error. \citet{AnatolyevNg2026} provide the general neighboring theory: asymptotic normality for a fixed number of OLS coefficients with many nuisance covariates and clustered errors, together with a consistent leave-cluster-out covariance estimator under unbalanced and potentially growing clusters. We establish the narrower auxiliary result required here, in a form parallel to theirs: a cluster-score CLT and consistency of the ordinary Arellano estimator under a projection-compatibility condition computable from the design. The distinct contribution used by the diagnostic is then the measurement-error drift and its reported-$t$ map, including the fact that population $\psi$ and reported $\hpsi$ need not agree.

Theorem~\ref{thm-noncentral} standardizes by the homoskedastic-i.i.d.\ variance, whereas applied panel practice clusters standard errors by unit \citep{Arellano1987, BDM}; see \citet{CameronMiller2015} and \citet{MNW2023} for current practice. The arguments below identify the cluster score, state the conditions under which its empirical analogue consistently estimates the relevant variance, and derive the resulting non-centrality.

\paragraph{Notation.} Partition $\{1,\dots,n\}$ into clusters $\mathcal C_1,\dots,\mathcal C_{G_n}$ with $n_g := |\mathcal C_g|$ and $\bar n_n := \max_g n_g$. Write $\F^*_n := \sigma(X,\nu,D_{K_n})$, so that conditional on $\F^*_n$ the residualized regressor $\widetilde X^*_{K_n}=M X^*$ is fixed. Define the cluster loading vectors and their energies
\[
\begin{gathered}
a^{(g)} \in \R^n, \quad a^{(g)}_i := \widetilde X^{*}_{K_n,i}\,\mathbf 1\{i \in \mathcal C_g\}, \\[3pt]
A_g := \|a^{(g)}\|^2, \qquad \sum_{g} A_g = X^{*\prime}MX^* = \tau^{*2}_n ,
\end{gathered}
\]
the cluster scores $\zeta_g := a^{(g)\prime} u$ with $S_n := \sum_g \zeta_g = X^{*\prime}Mu$, and
\[
\Sigma_g := \Var(u_{\mathcal C_g} \mid \F^*_n), \qquad
\Psi_n := \Var(S_n \mid \F^*_n) = \sum_{g} a^{(g)\prime}\Sigma_g\, a^{(g)} .
\]
The Arellano cluster meat at score scale and the cluster-robust statistic are
\[
\widehat V^{\mathrm{sc}}_{CR} := \sum_{g}\big(a^{(g)\prime}\widehat u^*\big)^2 ,
\qquad
T^{CR}_n(\beta_0) := \frac{\hbeta^*_K - \beta_0}{\sqrt{\widehat V_{CR}}}
= \frac{X^{*\prime}M(Y - X^*\beta_0)}{\sqrt{\widehat V^{\mathrm{sc}}_{CR}}},
\]
where $\widehat V_{CR} = \widehat V^{\mathrm{sc}}_{CR}/\tau^{*4}_n$ is the CRVE of the coefficient and $\hpsi := \tau^{*2}_n \widehat V_{CR}/\hsigma^{*2}_{\mathrm{CJN}}$ is the reported variance-inflation factor, i.e.\ the ratio of the squared cluster-robust to the squared i.i.d.\ standard error.

\begin{assumption}[Cluster regularity]\label{ass-cluster}\hfill
\begin{enumerate}[label=(\roman*)]
\item \emph{(Between-cluster independence.)} Conditional on $\F^*_n$, the subvectors $u_{\mathcal C_1},\dots,u_{\mathcal C_{G_n}}$ are independent with mean zero and arbitrary within-cluster dependence; $\sup_{i,n}\E[|u_i|^{2+\delta}\mid\F^*_n]\le C_u$ for some $\delta \in (0,2]$ and $\sup_{i,n}\E[u_i^4\mid\F^*_n]\le\kappa_u$.
\item \emph{(Many clusters, bounded cluster sizes, no dominant cluster.)} $G_n\to\infty$; the cluster sizes are bounded, $\bar n_n\le\bar n<\infty$ for all $n$; and $\max_g A_g/\tau^{*2}_n\to_p0$. (Given boundedness, the last condition is implied by the maximal-leverage limit in (v), since $\max_gA_g\le\bar n\max_i\widetilde X^{*2}_{K_n,i}$; it is stated separately because it is what the proofs use. Growing cluster sizes are possible under the strengthened conditions of Remark~\ref{rem-clustersize}.)
\item \emph{(Non-degeneracy, and the baseline for $\sigma^2$.)} $n^{-1}\sum_i\E[u_i^2\mid\F^*_n]\to_p\sigma^2\in(0,\infty)$, and $\Psi_n/(\sigma^2\tau^{*2}_n)\to_p\psi\in(0,\infty)$. The first limit is a normalization rather than a restriction, and it is needed: the second alone pins down only the product $\sigma^2\psi$, so without a declared baseline the assertion $\psi>1$ has no content. Fixing $\sigma^2$ as the average marginal error variance makes $\psi$ the ratio of the realized score variance to the score variance the same design would carry with independent, homoskedastic errors of that same average variance --- the reading used in Remark~\ref{rem-psi-sign}. Under heteroskedasticity that benchmark is not the independent-error score variance $\sum_i\widetilde X^{*2}_{K_n,i}\E[u_i^2\mid\F^*_n]$, so $\psi$ then mixes dependence with the covariance between the squared residualized regressor and the conditional variance; the theory is unaffected, since only the product $\psi\sigma^2$ enters, but the interpretation of $\psi$ as a pure dependence effect requires the homoskedastic reading. Under the conditional homoskedasticity of Theorem~\ref{thm-noncentral} it coincides with the $\sigma^2$ there.
\item \emph{(Projection compatibility.)} $\sum_{g}\big\|Ma^{(g)}-a^{(g)}\big\|^2 = o_p(\tau^{*2}_n)$.
\item \emph{(Design.)} The blocks $\{X_{\mathcal C_g}\}_{g\le G_n}$ are independent across clusters, with unrestricted dependence within a cluster; the measurement errors continue to satisfy Assumption~\ref{ass-nu}; and the design-side conclusions of Lemmas~\ref{lem-atten} and~\ref{lem-lev-star} hold:
\[
X'M\nu = o_p(1), \quad \nu'M\nu\to_p c^2(1-\rho), \quad X'MX\to_p(1-\rho)\tau^2, \quad \hlambda_n\to_p\lambda,
\]
\[
\max_i\widetilde\nu^2_{K_n,i}=o_p(1), \qquad \max_i\widetilde X^{*2}_{K_n,i}\big/\tau^{*2}_n\to_p0 ,
\]
the first three giving $X^{*\prime}MX^*\to_p\tau^{*2}=(1-\rho)(\tau^2+c^2)$. The two bias limits are listed separately rather than deduced from the Hessian limits, which would deliver only the combination $2X'M\nu+\nu'M\nu\to_p c^2(1-\rho)$, and it is $X'M\nu+\nu'M\nu$ that the non-centrality needs.
\end{enumerate}
\end{assumption}

Conditions (i)--(iii) are the standard many-clusters apparatus, with cluster sizes bounded. Boundedness is imposed rather than assumed away: the proofs of Lemma~\ref{lem-crve} use it at three separate points, and Remark~\ref{rem-clustersize} states exactly what must be strengthened to allow clusters to grow.

Condition (v) exists because Assumption~\ref{ass-reg}(i) cannot simply be carried over. That condition makes $\{(X_i,u_i)\}$ i.i.d.\ across $i$, which is inconsistent with the within-cluster dependence in $u$ that this section is about, and also excludes the serially dependent regressor that drives the empirically common case (Remark~\ref{rem-psi-sign}). What the cluster proofs actually use from the i.i.d.\ theory is not the sampling scheme but the six design-side limits listed in (v), all of which concern $(X,\nu,D_{K_n})$ alone and none of which involves $u$. Lemmas~\ref{lem-atten} and~\ref{lem-lev-star} deliver them under i.i.d.\ sampling. Analogous bounded-cluster primitive conditions can also deliver them with within-cluster dependence in $X$, but bounded cluster size alone is not asserted to be sufficient; the six limits are stated as hypotheses precisely to keep the cluster results free of a stronger sampling claim they do not need.

Condition (iv) is the one with no counterpart in the i.i.d.\ theory, and it is where the many-fixed-effect regime bites. It asks that the fixed-effect projection not distort the cluster loadings, and it fails silently if the fixed effects cut across clusters in a high-dimensional way. This is the design-specialized counterpart of the central trade-off in \citet{AnatolyevNg2026} between cluster-size growth and the accuracy of the auxiliary regression of the target regressors on nuisance covariates. Their conditions are more general; (N1)--(N2) below expose the same tension directly for nested and non-nested fixed-effect blocks. The next lemma gives checkable primitive conditions.

\begin{lemma}[When projection compatibility holds]\label{lem-nest}
\begin{enumerate}[label=(\alph*)]
\item \emph{(Nested fixed effects.)} If every fixed-effect cell is contained in a single cluster, then $Ma^{(g)} = a^{(g)}$ exactly for every $g$, and Assumption~\ref{ass-cluster}(iv) holds with the left-hand side equal to zero.
\item \emph{(Nested plus a low-dimensional non-nested block.)} Suppose $D_{K_n} = [\,D^{\mathrm{nest}}, D^{\mathrm{ne}}\,]$ with $D^{\mathrm{nest}}$ nested in clusters, and write $\Lambda_n := D^{\mathrm{ne}\prime}M_{\mathrm{nest}}D^{\mathrm{ne}}$. If $\rank(\Lambda_n)=0$, then $M=M_{\mathrm{nest}}$ and part~(a) applies. Otherwise,
\[
\sum_{g}\big\|Ma^{(g)}-a^{(g)}\big\|^2 \;\le\; \varpi_n \max_g A_g ,
\qquad
\varpi_n := \frac{\tr\big(D^{\mathrm{ne}\prime}D^{\mathrm{ne}}\big)}{\lambda^+_{\min}(\Lambda_n)} ,
\]
with $\lambda^+_{\min}$ the smallest non-zero eigenvalue. The displayed inequality is unconditional. The calibration that follows is not, and needs two balance conditions stated as such: (N1) a spectral condition $\lambda^+_{\min}(\Lambda_n)\asymp n/d^{\mathrm{ne}}_n$ with $d^{\mathrm{ne}}_n:=\rank(D^{\mathrm{ne}})$, which together with $\tr(D^{\mathrm{ne}\prime}D^{\mathrm{ne}})\asymp n$ gives $\varpi_n\asymp d^{\mathrm{ne}}_n$; and (N2) an energy condition $\max_gA_g = O_p(\tau^{*2}_n/G_n)$. Under (N1)--(N2), \emph{condition (iv) reduces to $d^{\mathrm{ne}}_n/G_n \to 0$}: the number of fixed effects that cut across clusters must be small relative to the number of clusters. Neither condition follows from cell counts alone. (N1) holds exactly in the balanced two-way panel --- with $G$ units and $T$ periods, $\Lambda_n = G(I_T-T^{-1}\mathbf 1\mathbf 1')$, so $\lambda^+_{\min}=G=n/T$ --- but equal cell sizes do not by themselves control the smallest non-zero eigenvalue of the residualized Gram matrix. (N2) concerns the residualized regressor rather than the design: equal cluster sizes with $\widetilde X^*$ energy concentrated in $\sqrt{G_n}$ clusters give $\max_gA_g/\tau^{*2}_n\asymp G_n^{-1/2}$, and the reduction then requires $d^{\mathrm{ne}}_n/\sqrt{G_n}\to0$ instead. Both are computable from the design and the residualized regressor before any inference, which is the point of stating them.
\end{enumerate}
\end{lemma}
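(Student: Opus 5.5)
For part (a), the plan is to show that the fixed-effect projection $P=P_{K_n}$ is block-diagonal with respect to the cluster partition. If every fixed-effect cell lies inside a single cluster, every column of $D_{K_n}$ is supported on one cluster. Hence $\mathrm{col}(D_{K_n})=\bigoplus_g V_g$, where $V_g$ is spanned by the dummies supported on $\mathcal C_g$. The $V_g$ have disjoint supports and are therefore mutually orthogonal. This holds even for several nested fixed-effect dimensions whose dummies are collinear within a cluster, because only supports matter.

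The orthogonal projector onto an orthogonal direct sum is the sum of the projectors, so $P=\sum_g P_{V_g}$. Each $P_{V_g}$ acts only on coordinates in $\mathcal C_g$. Consequently $P$ commutes with the coordinate restriction $v\mapsto v\,\mathbf 1_{\mathcal C_g}$. Applying this to $v=MX^*$ gives
\[
Pa^{(g)}=\mathbf 1_{\mathcal C_g}\odot P M X^* .
\]
Since $PM=0$, this vanishes, so $Ma^{(g)}=a^{(g)}$ for every $g$ and the sum in Assumption~\ref{ass-cluster}(iv) is zero.

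For part (b), I would first partial out the nested block using the Frisch--Waugh--Lovell theorem. Write $W:=M_{\mathrm{nest}}D^{\mathrm{ne}}$, so that $\Lambda_n=W'W$. Then
\[
M=M_{\mathrm{nest}}-W\Lambda_n^{+}W' ,
\]
with the Moore--Penrose inverse covering the rank-deficient case. If $\rank(\Lambda_n)=0$, then $W=0$ and $M=M_{\mathrm{nest}}$, which is the case already handled. Otherwise, note that $a^{(g)}$ is built from $MX^*$, which is orthogonal to $\mathrm{col}(D_{K_n})\supseteq\mathrm{col}(D^{\mathrm{nest}})$. The part~(a) argument applied to $M_{\mathrm{nest}}$ then gives $M_{\mathrm{nest}}a^{(g)}=a^{(g)}$. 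Hence
\[
Ma^{(g)}-a^{(g)}=-W\Lambda_n^+b_g,\qquad b_g:=W'a^{(g)}=D^{\mathrm{ne}\prime}a^{(g)} .
\]
Its squared norm is $b_g'\Lambda_n^+\Lambda_n\Lambda_n^+b_g=b_g'\Lambda_n^+b_g\le\|b_g\|^2/\lambda^+_{\min}(\Lambda_n)$.

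The remaining task is to bound $\sum_g\|b_g\|^2$. Because $a^{(g)}$ is supported on $\mathcal C_g$, we have $b_g=D^{\mathrm{ne}\prime}_{\mathcal C_g}a^{(g)}$, where $D^{\mathrm{ne}}_{\mathcal C_g}$ denotes the rows of $D^{\mathrm{ne}}$ indexed by $\mathcal C_g$. Cauchy--Schwarz then gives $\|b_g\|^2\le\|D^{\mathrm{ne}}_{\mathcal C_g}\|_F^2A_g$. Summing over the disjoint row blocks yields $\sum_g\|b_g\|^2\le\tr(D^{\mathrm{ne}\prime}D^{\mathrm{ne}})\max_gA_g$. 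Dividing by $\lambda^+_{\min}(\Lambda_n)$ gives the unconditional inequality with constant $\varpi_n$.

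For the calibration, (N1) together with $\tr(D^{\mathrm{ne}\prime}D^{\mathrm{ne}})\asymp n$ gives $\varpi_n\asymp d^{\mathrm{ne}}_n$. Then (N2) bounds the left side of (iv) by $O_p(d^{\mathrm{ne}}_n\tau^{*2}_n/G_n)$, which is $o_p(\tau^{*2}_n)$ when $d^{\mathrm{ne}}_n/G_n\to0$. The variant with energy concentrated in $\sqrt{G_n}$ clusters follows by substituting $\max_gA_g\asymp\tau^{*2}_n/\sqrt{G_n}$.

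For the balanced two-way panel, $M_{\mathrm{nest}}$ demeans within units. The time-dummy Gram matrix after demeaning is $\sum_{g=1}^G(I_T-T^{-1}\mathbf 1\mathbf 1')=G(I_T-T^{-1}\mathbf 1\mathbf 1')$. This matrix has nonzero eigenvalue $G$ with multiplicity $T-1$, which verifies the (N1) example.

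I expect the main obstacle to be bookkeeping rather than estimation. In (a), the delicate point is the commuting claim when several nested fixed-effect dimensions produce non-orthogonal, collinear dummies. I would handle it through subspace supports rather than through an explicit $(D'D)^{-1}$. In (b), the care is in applying the pseudo-inverse correctly: the key identity is $\Lambda_n^+\Lambda_n\Lambda_n^+=\Lambda_n^+$, and $\lambda^+_{\min}$ rather than $\lambda_{\min}$ is the right quantity because $b_g\in\mathrm{row}(W)=\mathrm{range}(\Lambda_n)$.
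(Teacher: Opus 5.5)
Your proposal is correct and follows essentially the paper's proof. Part (a) rests on disjoint supports together with $MX^*\perp\mathrm{col}(D_{K_n})$, and part (b) on the Frisch--Waugh--Lovell split $M=M_{\mathrm{nest}}-W\Lambda_n^{+}W'$, the identity $\Lambda_n^+\Lambda_n\Lambda_n^+=\Lambda_n^+$ with the $\lambda^+_{\min}$ bound, and $\sum_g\|D^{\mathrm{ne}\prime}a^{(g)}\|^2\le(\max_gA_g)\tr(D^{\mathrm{ne}\prime}D^{\mathrm{ne}})$. The differences are cosmetic: the paper checks $d'a^{(g)}=0$ for each spanning dummy instead of using commutation of the block-diagonal projector, and it bounds the sum via $\|\sum_ga^{(g)}a^{(g)\prime}\|_{\mathrm{op}}=\max_gA_g$ instead of clusterwise Cauchy--Schwarz; your explicit Moore--Penrose choice (with $b_g\in\mathrm{range}(\Lambda_n)$) also makes precise the $\lambda^+_{\min}$ step the paper states for a generic reflexive inverse $\Lambda_n^-$.
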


\begin{proof}
Deferred to the Online Supplement.
\end{proof}

Part (b) is the practically binding condition and it is computable before any estimation. In the clustered lead application (Section~\ref{sec-app-vdem}) the country effects are nested in country clusters while the $59$ year effects are not, against $G_n=163$ clusters, so $d^{\mathrm{ne}}_n/G_n\approx0.36$. The direct projection diagnostic reported there is much smaller than this conservative rank ratio. The twin-pair application (Section~\ref{sec-app-twins}) instead samples independent pairs and therefore exercises the baseline i.i.d.\ theorem without invoking this cluster condition.

\begin{lemma}[Cluster score CLT]\label{lem-cluster-clt}
Under Assumption~\ref{ass-nu}, Assumption~\ref{ass-cluster}(i)--(iii) and (v), and the drift~\eqref{eq-local-drift},
\[
\frac{X^{*\prime}M u}{\sqrt{\Psi_n}} \;\Rightarrow\; N(0,1).
\]
\end{lemma}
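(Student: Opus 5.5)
We prove the lemma by conditioning on $\F^*_n=\sigma(X,\nu,D_{K_n})$ and applying a Lindeberg--Feller (in fact Lyapunov) central limit theorem to the triangular array of independent cluster scores. Under $\F^*_n$ the loadings $a^{(g)}$ are fixed, so $S_n=X^{*\prime}Mu=\sum_g\zeta_g$ with $\zeta_g=a^{(g)\prime}u_{\mathcal C_g}$. Assumption~\ref{ass-cluster}(i) makes the $\zeta_g$ conditionally independent across $g$, with mean zero and conditional variance $a^{(g)\prime}\Sigma_g a^{(g)}$; these variances sum to $\Psi_n$ by definition. The identity $X^{*\prime}Mu=\widetilde X^{*\prime}u$ holds because $M$ is symmetric and idempotent, so the fixed-effect projection of $u$ never has to be handled. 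This is why projection compatibility (iv) is not needed for the CLT; it is reserved for the variance-estimator consistency in Lemma~\ref{lem-crve}.

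The key step is the conditional Lyapunov ratio
\[
L_n:=\Psi_n^{-(2+\delta)/2}\sum_g \E\big[|\zeta_g|^{2+\delta}\,\big|\,\F^*_n\big]\to_p 0 .
\]
We bound each term in three moves:
\begin{itemize}
\item By Hölder over the at most $\bar n$ entries of cluster $g$, $|\zeta_g|^{2+\delta}\le A_g^{(2+\delta)/2}\big(\sum_{i\in\mathcal C_g}u_i^2\big)^{(2+\delta)/2}$.
\item By the power-mean inequality, $\big(\sum_{i\in\mathcal C_g}u_i^2\big)^{(2+\delta)/2}\le \bar n^{\delta/2}\sum_{i\in\mathcal C_g}|u_i|^{2+\delta}$.
\item The uniform moment bound in (i) then gives $\E[|\zeta_g|^{2+\delta}\mid\F^*_n]\le \bar n^{1+\delta/2}C_u\,A_g^{1+\delta/2}$.
\end{itemize}
Summing over $g$,
\[
L_n\le \bar n^{1+\delta/2}C_u\,\frac{\big(\max_g A_g\big)^{\delta/2}\sum_gA_g}{\Psi_n^{1+\delta/2}}
= \bar n^{1+\delta/2}C_u\Big(\frac{\max_gA_g}{\tau^{*2}_n}\Big)^{\delta/2}\Big(\frac{\tau^{*2}_n}{\Psi_n}\Big)^{1+\delta/2},
\]
using $\sum_gA_g=\tau^{*2}_n$. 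The middle factor is $o_p(1)$ by the no-dominant-cluster condition in (ii), which (v) also implies. The last factor is $O_p(1)$ because $\Psi_n/\tau^{*2}_n\to_p\sigma^2\psi>0$ by (iii). Note that $\tau^{*2}_n\to_p(1-\rho)(\tau^2+c^2)>0$ by (v), so no degenerate normalization arises even though $\tau^2<\infty$. This is the only place where Assumption~\ref{ass-nu} and the drift enter, and they enter only through the design limits in (v).

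Since $L_n\to_p0$, the conditional Lyapunov CLT gives
\[
\sup_x\big|\Pp(S_n/\sqrt{\Psi_n}\le x\mid\F^*_n)-\Phi(x)\big|\to_p0 .
\]
To obtain this from a convergence in probability, we pass to subsequences: along any subsequence there is a further subsequence on which $L_n\to0$ almost surely, and the CLT then applies pathwise. Dominated convergence then turns the conditional statement into the unconditional $S_n/\sqrt{\Psi_n}\Rightarrow N(0,1)$.

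The main obstacle is the legitimacy of conditioning on $\F^*_n$. The argument needs $u$ to retain its cluster-independent, mean-zero structure given $(X,\nu,D_{K_n})$, and (i) states exactly this. A secondary point is that $\Psi_n$ is random, so the normalization must stay conditional, with no conversion to $\sigma^2\psi\tau^{*2}$ inside the CLT. The lemma is stated with the random $\Psi_n$, so no Slutsky step is needed here.
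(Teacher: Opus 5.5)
Your proof is correct and matches the paper's argument essentially step for step. Conditional on $\F^*_n$ you use the same Cauchy--Schwarz/power-mean bound $\E[|\zeta_g|^{2+\delta}\mid\F^*_n]\le C_u\bar n^{1+\delta/2}A_g^{1+\delta/2}$. You reduce the Lyapunov ratio to $(\max_gA_g/\tau^{*2}_n)^{\delta/2}$ times an $O_p(1)$ factor from (iii), as the paper does. Your subsequence and dominated-convergence step is exactly the conditional-to-unconditional transfer that the paper packages as Lemma~\ref{lem-transfer}.
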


\begin{lemma}[Consistency of the cluster-robust variance estimator]\label{lem-crve}
Under the conditions of Lemma~\ref{lem-cluster-clt} and Assumption~\ref{ass-cluster}(iv),
\[
\widehat V^{\mathrm{sc}}_{CR}\big/\Psi_n \;\to_p\; 1 .
\]
The statement is for the uncorrected meat. Applying the conventional small-sample factor $\tfrac{G_n}{G_n-1}\cdot\tfrac{n-1}{n-K_n}$, $K_n = d_{K_n}+1$, multiplies the estimator by a factor converging to $1/(1-\rho)$ rather than to one, and so over-corrects whenever $\rho$ is non-negligible. Under nesting there is no residual-shrinkage bias to correct --- part (a) of Lemma~\ref{lem-nest} makes $a^{(g)\prime}\widehat u^*$ an exact linear functional of the errors with no projection loss --- and the $\tfrac{n-1}{n-K_n}$ factor should be omitted.
\end{lemma}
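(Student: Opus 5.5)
We decompose the residual loading and compare the resulting meat with $\Psi_n$ term by term. Under $H_0$ the measured-equation residual is $\widehat u^*=M(Y-X^*\hbeta^*_K)=Mu-\beta_0M\nu-MX^*(\hbeta^*_K-\beta_0)$, since $Y=X^*\beta_0+u-\beta_0\nu$ and $M$ annihilates the fixed effects. The cluster residual score is therefore
\[
a^{(g)\prime}\widehat u^*
= a^{(g)\prime}Mu-\beta_0\,a^{(g)\prime}M\nu-(\hbeta^*_K-\beta_0)\,a^{(g)\prime}MX^* .
\]
We write $a^{(g)\prime}Mu=\zeta_g+(Ma^{(g)}-a^{(g)})'u$. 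The plan is to show that $\sum_g\zeta_g^2/\Psi_n\to_p1$ and that each remaining piece, once squared and summed over $g$, is $o_p(\Psi_n)$. By Assumption~\ref{ass-cluster}(iii), $\Psi_n\asymp\tau^{*2}_n\asymp1$, so it suffices to show each remainder is $o_p(1)$ in absolute scale. The cross terms are then handled by Cauchy--Schwarz.

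\emph{Leading term.} Conditional on $\F^*_n$, the $\zeta_g^2$ are independent with $\sum_g\E[\zeta_g^2\mid\F^*_n]=\Psi_n$. We bound the conditional variance using the fourth moments in (i) and bounded cluster size:
\[
\Var\Big(\sum_g\zeta_g^2\,\Big|\,\F^*_n\Big)\le\sum_g\E[\zeta_g^4\mid\F^*_n]\lesssim\bar n^2\kappa_u\sum_gA_g^2\le\bar n^2\kappa_u\,\max_gA_g\,\tau^{*2}_n .
\]
Since $\max_gA_g/\tau^{*2}_n\to_p0$ by (ii), this variance is $o_p(\Psi_n^2)$, and conditional Chebyshev gives $\sum_g\zeta_g^2/\Psi_n\to_p1$.

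\emph{Projection remainder.} Let $b^{(g)}:=Ma^{(g)}-a^{(g)}$. We have
\[
\E\Big[\sum_g(b^{(g)\prime}u)^2\,\Big|\,\F^*_n\Big]\le\sum_g\|b^{(g)}\|^2\,\|\Var(u\mid\F^*_n)\|_{\mathrm{op}} .
\]
The operator norm is $O(\bar n\,C_u^{2/(2+\delta)})$, because the conditional covariance of $u$ is block diagonal with bounded blocks. Assumption~\ref{ass-cluster}(iv) makes $\sum_g\|b^{(g)}\|^2=o_p(\tau^{*2}_n)$, so this term is $o_p(\Psi_n)$. Boundedness of cluster size is used here.

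\emph{Noise and estimation remainders.} For the noise term, $\sum_g(a^{(g)\prime}M\nu)^2\le\sum_g\|a^{(g)}\|^2\|(M\nu)_{\mathcal C_g}\|^2\cdot$(const). Here we must avoid using the full row $M\nu$ across clusters. We therefore write $a^{(g)\prime}M\nu=a^{(g)\prime}\widetilde\nu_{\mathcal C_g}+b^{(g)\prime}\nu$, which gives
\[
\sum_g(a^{(g)\prime}\widetilde\nu)^2\le\sum_gA_g\,\bar n\max_i\widetilde\nu^2_{K_n,i}=\tau^{*2}_n\cdot o_p(1)
\]
by (v). The term $\sum_g(b^{(g)\prime}\nu)^2$ is $o_p(1)$ by the same projection-compatibility argument, because $\sigma^2_{\nu,n}=c^2/n$. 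For the estimation term, $\hbeta^*_K-\beta_0=O_p(1)$ by Corollary~\ref{cor-slope}; this is tight but not vanishing when $\tau^2<\infty$. Its contribution is $(\hbeta^*_K-\beta_0)^2\sum_g(a^{(g)\prime}MX^*)^2$, and $\sum_g(a^{(g)\prime}MX^*)^2=\sum_g(A_g+b^{(g)\prime}\widetilde X^*)^2\lesssim\max_gA_g\,\tau^{*2}_n+o_p(\tau^{*4}_n)$, which is $o_p(1)$ by (ii) and (iv).

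\emph{Main obstacle.} This last step is the hard one. In the many-fixed-effect, weak-information regime the coefficient error does not vanish, so the argument cannot rely on $\hbeta^*_K$ being consistent. Negligibility has to come entirely from the no-dominant-cluster condition: the squared cluster loadings $\sum_gA_g^2$ must be small relative to $(\sum_gA_g)^2$. The first thing we would try is exactly this $\max_gA_g$ bound, which we expect to suffice.

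\emph{Small-sample factor.} For the remark on the correction, note that $G_n/(G_n-1)\to1$, while $(n-1)/(n-K_n)\to1/(1-\rho)$ because $K_n/n\to\rho$. The product therefore converges to $1/(1-\rho)$. Under nesting, part~(a) of Lemma~\ref{lem-nest} gives $b^{(g)}=0$, so the leading term carries no shrinkage and no correction is needed.
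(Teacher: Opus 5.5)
Your proof follows the paper's own argument: the split $a^{(g)\prime}\widehat u^*=\zeta_g+\epsilon_g$, conditional Chebyshev for $\sum_g\zeta_g^2$ via $\kappa_u\bar n^2(\max_gA_g)\tau^{*2}_n$, the operator-norm bound for the projection remainder, the $\bar n\max_i\widetilde\nu^2_{K_n,i}\,\tau^{*2}_n$ bound for the noise term, and the $(\hbeta^*_K-\beta_0)^2(\max_gA_g)\tau^{*2}_n$ bound for the estimation term. One step is not justified as written, however. You take $\hbeta^*_K-\beta_0=O_p(1)$ from Corollary~\ref{cor-slope}. That corollary is proved under the conditions of Theorem~\ref{thm-noncentral}, which include the i.i.d.\ sampling of Assumption~\ref{ass-reg}(i) and conditional homoskedasticity. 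These are exactly what the cluster section gives up: Assumption~\ref{ass-reg}(i) is incompatible with within-cluster dependence in $u$, which is why Assumption~\ref{ass-cluster}(v) lists the design-side limits as hypotheses instead. The paper's proof deliberately avoids this corollary. The repair is one line. Use the exact identity $\hbeta^*_K-\beta_0=\{S_n-\beta_0(X'M\nu+\nu'M\nu)\}/X^{*\prime}MX^*$. Since $\E[S_n^2\mid\F^*_n]=\Psi_n=O_p(1)$ by Assumption~\ref{ass-cluster}(iii),(v), or directly by Lemma~\ref{lem-cluster-clt}, you get $S_n=O_p(1)$. The bias bracket converges to $c^2(1-\rho)$ and the denominator to $\tau^{*2}>0$, both by (v).

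There are two smaller inaccuracies. First, your noise identity $a^{(g)\prime}M\nu=a^{(g)\prime}\widetilde\nu_{\mathcal C_g}+b^{(g)\prime}\nu$ double counts. Because $a^{(g)}$ is supported on $\mathcal C_g$, you have exactly $a^{(g)\prime}M\nu=\sum_{i\in\mathcal C_g}\widetilde X^*_i\widetilde\nu_{K_n,i}$. The other correct split is $a^{(g)\prime}\nu+b^{(g)\prime}\nu$ with the raw $\nu$. So there is no cross-cluster obstacle to avoid, and your first bound already finishes the noise term. The spurious term $\sum_g(b^{(g)\prime}\nu)^2$ is harmless, but your reason for discarding it (``because $\sigma^2_{\nu,n}=c^2/n$'') does not work as a conditional-variance argument. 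The vector $b^{(g)}$ is built from $\widetilde X^*=M(X+\nu)$, so it is not independent of $\nu$. Cauchy--Schwarz, $\sum_g(b^{(g)\prime}\nu)^2\le\|\nu\|^2\sum_g\|b^{(g)}\|^2$ with $\|\nu\|^2=O_p(c^2)$, would be the valid route. Second, $b^{(g)\prime}\widetilde X^*=0$ identically because $M$ is idempotent. Hence $a^{(g)\prime}MX^*=A_g$ exactly, and the $o_p(\tau^{*4}_n)$ term in your estimation bound is zero.
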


The scope of Lemma~\ref{lem-crve} is deliberately narrow. When projection compatibility is doubtful, the leave-cluster-out covariance estimator of \citet{AnatolyevNg2026}, if its own conditions and block-invertibility checks hold, is the better variance estimator to feed into the reported $t$. This substitution changes the numerical breakdown reliability through the $t$-statistic but not its algebraic form (Corollary~\ref{cor-cluster-feasible}(c)). A separate warning applies to the heteroskedastic CJN estimator: its Hadamard correction requires invertibility of $M\odot M$, for which maximum nuisance leverage below $1/2$ is a familiar sufficient condition; \citet{AnatolyevNg2026} emphasize, citing \citet{Jochmans2022}, that numerical non-existence can occur even at smaller leverage. A breakdown calculation cannot repair a variance estimator that does not exist.

\begin{remark}[Growing cluster sizes]\label{rem-clustersize}
Assumption~\ref{ass-cluster}(ii) imposes bounded cluster sizes, and the proofs use boundedness in three identifiable bounds in Lemma~\ref{lem-crve}: the conditional-covariance operator norm $\|\Var(u\mid\F^*_n)\|_{\mathrm{op}}\le\bar n_n\kappa_u^{1/2}$; the measurement-error remainder $\max_g\sum_{i\in\mathcal C_g}\widetilde\nu^2_{K_n,i}\le\bar n_n\max_i\widetilde\nu^2_{K_n,i}$; and the fourth-moment bound $\E[\zeta_g^4\mid\F^*_n]\le\kappa_u\bar n_n^2A_g^2$. Clusters may be allowed to grow, $\bar n_n\to\infty$, provided each is restored explicitly:
\begin{enumerate}[label=(\roman*$'$)]
\item \emph{(CLT.)} $\bar n_n^{(2+\delta)/\delta}\,\max_gA_g/\tau^{*2}_n\to_p0$, which is what the Lyapunov bound of Lemma~\ref{lem-cluster-clt} actually requires;
\item \emph{(Score-square concentration.)} $\bar n_n^{2}\,\max_gA_g/\tau^{*2}_n\to_p0$, implied by (i$'$) when $\delta\le2$;
\item \emph{(Projection remainder.)} $\bar n_n\sum_g\|Ma^{(g)}-a^{(g)}\|^2 = o_p(\tau^{*2}_n)$, a strengthening of Assumption~\ref{ass-cluster}(iv) by the factor $\bar n_n$;
\item \emph{(Measurement-error remainder.)} $\bar n_n\max_i\widetilde\nu^2_{K_n,i}=o_p(1)$, which by Lemma~\ref{lem-lev-star} holds whenever $\bar n_n = o(n^{1-1/r})$.
\end{enumerate}
We state the bounded-size version in the text because it is the empirically relevant one for the clustered lead application ($\bar n_n\le59$ against $G_n=163$) and because the projection-compatibility condition (iv) alone does not absorb the extra $\bar n_n$ in (iii$'$), so advertising ``bounded or slowly growing'' without (i$'$)--(iv$'$) would leave a gap between the statement and the proof.
\end{remark}

\begin{theorem}[Cluster-robust non-centrality]\label{thm-cluster}
Under Assumption~\ref{ass-nu} with $r=2$, Assumption~\ref{ass-cluster}, the local drift~\eqref{eq-local-drift}, $0<\tau^2<\infty$, and $H_0:\beta=\beta_0$,
\[
T^{CR}_n(\beta_0) \;\Rightarrow\; N\big(\eta_{CR},\,1\big),
\qquad
\eta_{CR} \;=\; \frac{\eta}{\sqrt\psi} \;=\; -\frac{\beta_0\,c^2\sqrt{1-\rho}}{\sigma\sqrt{\psi}\,\sqrt{\tau^2+c^2}} ,
\]
with $\eta$ as in Theorem~\ref{thm-noncentral}. Cluster dependence therefore enters the diagnostic through the denominator alone: the attenuation bias in the numerator is $\E[\nu'M_{K_n}\nu]\beta_0$, a functional of the measurement error only, and $\nu\perp u$ by Assumption~\ref{ass-nu}(i), so no dependence structure in $u$ can touch it. In words: clustering cannot change the bias itself, only how alarming that bias looks relative to a bigger or smaller standard error, and $\sqrt\psi$ is exactly that rescaling.
\end{theorem}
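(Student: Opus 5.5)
The plan mirrors the proof of Theorem~\ref{thm-noncentral}, with the cluster score CLT and Arellano consistency standing in for the i.i.d.\ score CLT and residual-scale limit. Write $M=M_{K_n}$. Under $H_0$ we have $Y-X^*\beta_0=u-\nu\beta_0$, so the numerator of $T^{CR}_n(\beta_0)$ splits exactly as
\[
X^{*\prime}M(Y-X^*\beta_0)=X^{*\prime}Mu-\beta_0\big(X'M\nu+\nu'M\nu\big).
\]
Dividing by $\sqrt{\widehat V^{\mathrm{sc}}_{CR}}$ gives
\[
T^{CR}_n(\beta_0)=\frac{X^{*\prime}Mu}{\sqrt{\Psi_n}}\sqrt{\frac{\Psi_n}{\widehat V^{\mathrm{sc}}_{CR}}}
-\frac{\beta_0(X'M\nu+\nu'M\nu)}{\sqrt{\Psi_n}}\sqrt{\frac{\Psi_n}{\widehat V^{\mathrm{sc}}_{CR}}}.
\]
Each factor is then handled separately. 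Lemma~\ref{lem-cluster-clt} sends the first ratio to $N(0,1)$. Lemma~\ref{lem-crve}, which is where projection compatibility (iv) enters, gives $\Psi_n/\widehat V^{\mathrm{sc}}_{CR}\to_p1$. Assumption~\ref{ass-cluster}(v) lists $X'M\nu=o_p(1)$ and $\nu'M\nu\to_p c^2(1-\rho)$ directly, so the bias term converges in probability to $\beta_0c^2(1-\rho)$.

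For the denominator of the bias term, Assumption~\ref{ass-cluster}(iii) gives $\Psi_n=\sigma^2\tau^{*2}_n\cdot(\psi+o_p(1))$. Combined with $\tau^{*2}_n=X^{*\prime}MX^*\to_p(1-\rho)(\tau^2+c^2)$ from (v), this yields $\sqrt{\Psi_n}\to_p\sigma\sqrt{\psi}\sqrt{1-\rho}\sqrt{\tau^2+c^2}$. The limit is strictly positive because $\psi>0$, $\sigma^2>0$, $\rho<1$ and $\tau^2>0$, so the continuous mapping theorem applies. The bias term therefore converges in probability to
\[
\frac{\beta_0c^2(1-\rho)}{\sigma\sqrt\psi\sqrt{1-\rho}\sqrt{\tau^2+c^2}}=-\eta_{CR}.
\]
Slutsky's lemma then gives $T^{CR}_n(\beta_0)\Rightarrow N(0,1)+\eta_{CR}=N(\eta_{CR},1)$. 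The identity $\eta_{CR}=\eta/\sqrt\psi$ is immediate once $\sigma^2$ is read as the normalization in (iii); that normalization coincides with the homoskedastic $\sigma^2$ of Theorem~\ref{thm-noncentral}.

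The main obstacle is the convergence mode of the CLT. Lemma~\ref{lem-cluster-clt} is naturally a conditional statement given $\F^*_n$, whereas the bias term and $\widehat V^{\mathrm{sc}}_{CR}$ are random. I would not build a joint limit. Instead I would note that the bias term is $\F^*_n$-measurable, so conditional on $\F^*_n$ it is a constant shift converging in probability. The residual $\widehat u^*$ in $\widehat V^{\mathrm{sc}}_{CR}$ depends on $u$, but Lemma~\ref{lem-crve} already delivers its ratio to $\Psi_n$ in probability, and a ratio converging in probability to a constant is enough for Slutsky. Finally, the conditional convergence of the score ratio to $N(0,1)$ passes to unconditional convergence by dominated convergence applied to the bounded conditional characteristic functions. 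The standing interior condition $0<\tau^2<\infty$ is what makes the bias nondegenerate; with $\tau^2=\infty$ the bias term would vanish instead.
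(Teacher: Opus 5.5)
Your proposal is correct and follows essentially the same route as the paper: it splits the numerator exactly into $S_n=X^{*\prime}Mu$ minus the bias $\beta_0(X'M\nu+\nu'M\nu)$, uses Lemmas~\ref{lem-cluster-clt} and~\ref{lem-crve} for the score and the Arellano ratio, takes $\Psi_n\to_p\psi\sigma^2\tau^{*2}$ from Assumption~\ref{ass-cluster}(iii),(v), and concludes with Slutsky. Your extra discussion of conditional versus unconditional convergence is harmless but unnecessary, because Lemma~\ref{lem-cluster-clt} is already stated, and proved via the transfer lemma, as an unconditional limit.
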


Three objects must now be kept apart, because they have different statuses and the distinction is exactly where a careless argument goes wrong. The \emph{oracle} non-centrality is evaluated at the true $\beta_0$; the \emph{point plug-in} diagnostic is evaluated at the corrected pilot $\hbeta_0^{\mathrm{corr}}$, which is not consistent in the weak-information regime (Corollary~\ref{cor-slope}, Remark~\ref{rem-corr-noise}); and the \emph{formal certificate} adds sampling uncertainty to the reported $t$ and, when necessary, uses a lower confidence bound for the within reliability. We take them in that order.

\begin{corollary}[Oracle rescaling, and the algebraic cancellation]\label{cor-cluster-feasible}
Define the \emph{cluster-robust scale}
\[
\widehat s^2_{CR} \;:=\; \hsigma^{*2}_{\mathrm{CJN}}\,\hpsi \;=\; \widehat V^{\mathrm{sc}}_{CR}\big/\tau^{*2}_n ,
\]
the second equality being an algebraic identity, not an approximation: $\hsigma^{*2}_{\mathrm{CJN}}$ cancels exactly. Under the conditions of Theorem~\ref{thm-cluster}:
\begin{enumerate}[label=(\alph*)]
\item $\widehat s^2_{CR}\to_p\psi\sigma^2$. This holds even though $\hsigma^{*2}_{\mathrm{CJN}}$ is not consistent for $\sigma^2$ under cluster dependence, and $\hpsi$ is not consistent for $\psi$: if $\hsigma^{*2}_{\mathrm{CJN}}\to_p\varsigma^2\in(0,\infty)$ --- an extra requirement, since nothing in Assumption~\ref{ass-cluster} forces this estimator to have a limit --- then $\hpsi\to_p\psi\sigma^2/\varsigma^2$, and the two errors are reciprocal. Part~(a) itself does not need $\varsigma^2$ to exist: $\widehat s^2_{CR}=\widehat V^{\mathrm{sc}}_{CR}/\tau^{*2}_n$ never mentions $\hsigma^{*2}_{\mathrm{CJN}}$.
\item \emph{(Oracle.)} At the true $\beta_0$, the feasible non-centrality formed with $\widehat s_{CR}$ in place of $\hsigma^*_{\mathrm{CJN}}$ is consistent for the cluster-robust target:
\[
\frac{|\beta_0|}{\widehat s_{CR}}\,(1-\hlambda_n)\sqrt{\tau^{*2}_n} \;\to_p\; \frac{|\beta_0|}{\sigma\sqrt\psi}(1-\lambda)\sqrt{\tau^{*2}} \;=\; \frac{|\eta|}{\sqrt\psi} \;=\; |\eta_{CR}| .
\]
More generally, for any deterministic $b$ substituted for $|\beta_0|$, the same statement holds with $|\eta|$ replaced by its value at $b$.
\item \emph{(Breakdown, exactly.)} $t^*_n/\sqrt{\hpsi}$ is identically the reported cluster-robust $t$-statistic $t^{CR}_n := |\hbeta^*_K|/\sqrt{\widehat V_{CR}}$, so
\[
\lambda^{\dagger}_{CR} \;=\; \frac{t^{CR}_n}{\,t^{CR}_n+\eta^{\dagger}(\alpha,\delta)\,}
\]
is Definition~\ref{def-breakdown} evaluated at the cluster-robust $t$. This is an algebraic identity requiring no limit theory, and it is the operational content of the rescaling: \emph{the cluster-robust diagnostic is the i.i.d.\ diagnostic run on the reported cluster-robust $t$-statistic.}
\end{enumerate}
\end{corollary}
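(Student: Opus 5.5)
The plan has three parts. First I verify the algebraic identities, which need no limit theory. Then I get (a) from Lemma~\ref{lem-crve} and Assumption~\ref{ass-cluster}(iii). Finally I get (b) by the continuous mapping theorem, using the design limits listed in Assumption~\ref{ass-cluster}(v).

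\emph{Identities.} By the definition $\hpsi=\tau^{*2}_n\widehat V_{CR}/\hsigma^{*2}_{\mathrm{CJN}}$ and $\widehat V_{CR}=\widehat V^{\mathrm{sc}}_{CR}/\tau^{*4}_n$, we get
\[
\hsigma^{*2}_{\mathrm{CJN}}\hpsi=\tau^{*2}_n\widehat V_{CR}=\widehat V^{\mathrm{sc}}_{CR}/\tau^{*2}_n ,
\]
so $\hsigma^{*2}_{\mathrm{CJN}}$ cancels exactly. For (c), the reported i.i.d.\ $t$ is $t^*_n=|\hbeta^*_K|\sqrt{\tau^{*2}_n}/\hsigma^*_{\mathrm{CJN}}$. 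Dividing by $\sqrt{\hpsi}=\sqrt{\tau^{*2}_n\widehat V_{CR}}/\hsigma^*_{\mathrm{CJN}}$ gives $|\hbeta^*_K|/\sqrt{\widehat V_{CR}}=t^{CR}_n$. Substituting this into the formula of Definition~\ref{def-breakdown} yields $\lambda^\dagger_{CR}$; this is pure algebra.

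\emph{Part (a).} I write
\[
\widehat s^2_{CR}=\frac{\widehat V^{\mathrm{sc}}_{CR}}{\Psi_n}\cdot\frac{\Psi_n}{\sigma^2\tau^{*2}_n}\cdot\sigma^2 .
\]
Lemma~\ref{lem-crve} sends the first factor to $1$ in probability, and Assumption~\ref{ass-cluster}(iii) sends the second to $\psi$. Their product therefore converges to $\psi\sigma^2$, and $\hsigma^{*2}_{\mathrm{CJN}}$ never enters. For the side claim, I assume $\hsigma^{*2}_{\mathrm{CJN}}\to_p\varsigma^2>0$. Then $\hpsi=\widehat s^2_{CR}/\hsigma^{*2}_{\mathrm{CJN}}\to_p\psi\sigma^2/\varsigma^2$ by Slutsky, which exhibits the reciprocal errors. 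No separate argument for $\varsigma^2$ is needed, because it is hypothesized.

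\emph{Part (b).} I take $\tau^{*2}_n=X^{*\prime}MX^*\to_p(1-\rho)(\tau^2+c^2)=\tau^{*2}$ and $\hlambda_n\to_p\lambda$ directly from Assumption~\ref{ass-cluster}(v). Since $\psi\sigma^2>0$ and the square root is continuous, (a) gives $\widehat s_{CR}\to_p\sigma\sqrt\psi$. The continuous mapping theorem then yields
\[
\frac{b}{\widehat s_{CR}}(1-\hlambda_n)\sqrt{\tau^{*2}_n}\to_p\frac{b}{\sigma\sqrt\psi}(1-\lambda)\sqrt{\tau^{*2}}
\]
for any deterministic $b$, in particular $b=|\beta_0|$. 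It remains to identify this limit with $|\eta|/\sqrt\psi$. With $\lambda=\tau^2/(\tau^2+c^2)$ we have $(1-\lambda)=c^2/(\tau^2+c^2)$. Hence
\[
(1-\lambda)\sqrt{\tau^{*2}}=\frac{c^2\sqrt{1-\rho}}{\sqrt{\tau^2+c^2}},
\]
which reproduces the formula of Theorem~\ref{thm-noncentral}. Theorem~\ref{thm-cluster} then identifies $|\eta|/\sqrt\psi$ as $|\eta_{CR}|$.

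\emph{Main obstacle.} The only substantive step is (a), and it is entirely carried by Lemma~\ref{lem-crve}. The point needing care is to normalize by $\Psi_n$, a random, $\F^*_n$-measurable quantity, and to pass to $\psi\sigma^2$ only through (iii). Going instead through $\hsigma^{*2}_{\mathrm{CJN}}$ and $\hpsi$ separately would require a limit for $\hsigma^{*2}_{\mathrm{CJN}}$ that the assumptions do not supply. I would also check that $\hlambda_n$ in (b) is the within-reliability input named in Assumption~\ref{ass-cluster}(v), so that no additional consistency argument for it is needed.
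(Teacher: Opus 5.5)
Your proposal is correct and follows the paper's proof essentially step for step: the same cancellation identity, part (a) from Lemma~\ref{lem-crve} combined with Assumption~\ref{ass-cluster}(iii), part (b) by Slutsky and continuous mapping, and part (c) as a pure algebraic identity. The only difference is cosmetic: you take $\tau^{*2}_n\to_p\tau^{*2}$ and $\hlambda_n\to_p\lambda$ directly from Assumption~\ref{ass-cluster}(v) and verify $(1-\lambda)\sqrt{\tau^{*2}}=c^2\sqrt{1-\rho}/\sqrt{\tau^2+c^2}$ inline, whereas the paper attributes those limits to Lemma~\ref{lem-atten} and cites Corollary~\ref{cor-feasible}; your citation is, if anything, the more consistent one, since the paper imposes (v) precisely because Lemma~\ref{lem-atten} relies on i.i.d.\ sampling.
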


\begin{remark}[The point plug-in is descriptive, not consistent]\label{rem-plugin}
Substituting the corrected pilot $\hbeta_0^{\mathrm{corr}}=\hbeta^*_K/\hlambda_n$ for $|\beta_0|$ in Corollary~\ref{cor-cluster-feasible}(b) does not yield a consistent estimator of $|\eta_{CR}|$, and no argument in this paper claims otherwise. In the weak-information regime $\tau^2<\infty$ the pilot is asymptotically centred at $\beta_0$ but nondegenerate. Its limiting variance is not the i.i.d.\ one of Corollary~\ref{cor-slope} and Proposition~\ref{prop-pilot}(ii): under cluster dependence the score variance is $\Psi_n\to_p\psi\sigma^2\tau^{*2}$ rather than $\sigma^2\tau^{*2}$, so the same Slutsky argument applied to Lemma~\ref{lem-cluster-clt} in place of the i.i.d.\ score CLT gives
\[
\hbeta_0^{\mathrm{corr}}\;\Rightarrow\;B \;:=\; \beta_0+N\big(0,\ \psi\,\sigma^2/(\lambda^2\tau^{*2})\big) ,
\]
the extra factor $\psi$ being exactly what makes $\widehat{\mathrm{se}}(\hbeta_0^{\mathrm{corr}})=\widehat s_{CR}/(\hlambda_n\sqrt{\tau^{*2}_n})$ the right standard error for it in Proposition~\ref{prop-certificate}. Consequently, by the continuous mapping theorem and part (a), when $\beta_0\ne0$,
\[
|\widehat\eta_{CR}| \;:=\; \frac{|\hbeta_0^{\mathrm{corr}}|}{\widehat s_{CR}}(1-\hlambda_n)\sqrt{\tau^{*2}_n}
\;\Rightarrow\; \frac{|B|}{|\beta_0|}\,|\eta_{CR}| ,
\]
a nondegenerate random multiple of the target, with median close to it but no concentration. (At $\beta_0=0$ the ratio is undefined and the statement is instead that $|\eta_{CR}|=0$ while $|\widehat\eta_{CR}|\Rightarrow|B|(1-\lambda)\sqrt{\tau^{*2}}/(\sigma\sqrt\psi)$ with $B$ centred at zero: the plug-in remains nondegenerate even though the target vanishes, which is the same point.) This is precisely why the plug-in verdict is labelled a \emph{point pass} and not a certificate: it is a descriptive statistic, and its sampling variability is a first-order feature of the weak-information regime rather than an approximation error that vanishes. Consistency is recovered in two boundary regimes, and only there: under strong information $nQ_{K_n}\to\infty$, where the Gaussian term disappears and $B$ itself collapses to $\beta_0$; and at $c^2=0$ (no measurement error), where $\lambda=1$ so $B$ stays nondegenerate but the multiplier $(1-\hlambda_n)\to_p0$ annihilates it, giving $|\widehat\eta_{CR}|\to_p0=|\eta_{CR}|$ trivially (see the proof in the Online Supplement). Away from both boundaries $|\widehat\eta_{CR}|$ is not consistent for $|\eta_{CR}|$, and a size-controlled statement requires Proposition~\ref{prop-certificate}.
\end{remark}

\begin{proposition}[Certified breakdown reliability]\label{prop-certificate}
Fix $0<\alpha<1$, a size tolerance $0<\delta<1-\alpha$, and a coefficient-uncertainty level $\gamma_\beta\in(0,\tfrac12]$. Let
\[
t^{CR}_n:=\frac{|\hbeta^*_K|\sqrt{\tau^{*2}_n}}{\widehat s_{CR}}
\]
be the absolute reported cluster-robust $t$-statistic, and let $\ell_n\in(0,1]$ be the within reliability supplied to the certification rule. Define
\begin{equation}\label{eq-certified-breakdown}
\widehat\eta^{\,U}_n(\ell_n)
:=\big(t^{CR}_n+z_{1-\gamma_\beta}\big)\frac{1-\ell_n}{\ell_n},
\qquad
\lambda^{\dagger}_{\gamma_\beta,n}
:=\frac{t^{CR}_n+z_{1-\gamma_\beta}}
{t^{CR}_n+z_{1-\gamma_\beta}+\eta^{\dagger}(\alpha,\delta)}.
\end{equation}
Then $\widehat\eta^{\,U}_n(\ell_n)\le\eta^{\dagger}$ if and only if
$\ell_n\ge\lambda^{\dagger}_{\gamma_\beta,n}$. If $\ell_n\le0$, the rule does not certify.
Under the conditions of Theorem~\ref{thm-cluster}:
\begin{enumerate}[label=(\alph*)]
\item With the consistent pilot $\ell_n=\hlambda_n\to_p\lambda$, $\widehat\eta^{\,U}_n(\hlambda_n)$ is an asymptotically valid $(1-\gamma_\beta)$ upper confidence bound for $|\eta_{CR}|$. Consequently,
\[
\limsup_n\Pp\Big(\hlambda_n\ge\lambda^{\dagger}_{\gamma_\beta,n},\ 
\text{yet true asymptotic size}>\alpha+\delta\Big)\le\gamma_\beta .
\]
\item More generally, suppose $\ell_n$ is an asymptotic lower confidence bound for the within reliability with error $\gamma_\lambda$:
$\liminf_n\Pp(\ell_n\le\lambda)\ge1-\gamma_\lambda$. Then the same certification rule has false-certification probability at most $\gamma_\beta+\gamma_\lambda$. No independence between the coefficient and reliability bounds is required.
\end{enumerate}
The i.i.d.\ version replaces $t^{CR}_n$ by the reported i.i.d.\ $t$-statistic. When reliability is treated as known, set $\ell_n=\lambda$ and $\gamma_\lambda=0$.
\end{proposition}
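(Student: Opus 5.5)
The proof has three parts: an algebraic equivalence, a pivot for part (a), and a union bound for part (b).

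\emph{Step 1 (the equivalence).} For $\ell_n\in(0,1]$ the map $\ell\mapsto(1-\ell)/\ell$ is strictly decreasing. Write $s:=t^{CR}_n+z_{1-\gamma_\beta}>0$, which holds because $z_{1-\gamma_\beta}\ge0$ for $\gamma_\beta\le\tfrac12$. Then $\widehat\eta^{\,U}_n(\ell_n)\le\eta^{\dagger}$ is equivalent to $s(1-\ell_n)\le\eta^{\dagger}\ell_n$, that is, to $\ell_n\ge s/(s+\eta^{\dagger})=\lambda^{\dagger}_{\gamma_\beta,n}$. Nonpositive $\ell_n$ is excluded by convention. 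Throughout I use the fact, implicit in the definition of $\eta^{\dagger}$, that the asymptotic two-sided size $\Pp(|N(\eta,1)|>z_{1-\alpha/2})$ is even and strictly increasing in $|\eta|$. Hence ``true asymptotic size $>\alpha+\delta$'' is the event $|\eta_{CR}|>\eta^{\dagger}$.

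\emph{Step 2 (upper bound, part (a)).} Write $\widehat\eta^{\,U}_n(\ell)=(t^{CR}_n+z_{1-\gamma_\beta})(1-\ell)/\ell$. The key identity is
\[
t^{CR}_n\,\frac{1-\hlambda_n}{\hlambda_n}=\frac{|\hbeta^{\mathrm{corr}}_0|}{\widehat s_{CR}}(1-\hlambda_n)\sqrt{\tau^{*2}_n}=|\widehat\eta_{CR}|,
\]
which uses $\hbeta^{\mathrm{corr}}_0=\hbeta^*_K/\hlambda_n$. By Remark~\ref{rem-plugin}, $\hbeta^{\mathrm{corr}}_0\Rightarrow B=\beta_0+N(0,\psi\sigma^2/(\lambda^2\tau^{*2}))$. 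Combining this with Corollary~\ref{cor-cluster-feasible}(a) and $\hlambda_n\to_p\lambda$, the studentized quantity
\[
W_n:=\frac{(\hbeta^{\mathrm{corr}}_0-\beta_0)\,\hlambda_n\sqrt{\tau^{*2}_n}}{\widehat s_{CR}}\Rightarrow N(0,1).
\]
From $|\beta_0|\le|\hbeta^{\mathrm{corr}}_0|+|\hbeta^{\mathrm{corr}}_0-\beta_0|$ I get
\[
|\beta_0|\,\frac{\hlambda_n\sqrt{\tau^{*2}_n}}{\widehat s_{CR}}\le t^{CR}_n+|W_n|.
\]
The one-sided version is what is actually needed. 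If $\beta_0>0$, then $\beta_0\le\hbeta^{\mathrm{corr}}_0-(\hbeta^{\mathrm{corr}}_0-\beta_0)$, so on $\{-W_n\le z_{1-\gamma_\beta}\}$, an event of asymptotic probability $1-\gamma_\beta$, the left side is at most $t^{CR}_n+z_{1-\gamma_\beta}$. The case $\beta_0<0$ is symmetric, and $\beta_0=0$ is trivial since then $\eta_{CR}=0$. Multiplying by $(1-\hlambda_n)/\hlambda_n$ gives, on that event,
\[
\widehat\eta_n^{\,\mathrm{or}}:=\frac{|\beta_0|}{\widehat s_{CR}}(1-\hlambda_n)\sqrt{\tau^{*2}_n}\le\widehat\eta^{\,U}_n(\hlambda_n).
\]
By Corollary~\ref{cor-cluster-feasible}(b), $\widehat\eta_n^{\,\mathrm{or}}\to_p|\eta_{CR}|$. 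I then bound $\Pp\bigl(\widehat\eta^{\,U}_n\le\eta^{\dagger},\,|\eta_{CR}|>\eta^{\dagger}\bigr)$ by $\Pp(-W_n>z_{1-\gamma_\beta})+\Pp\bigl(\widehat\eta_n^{\,\mathrm{or}}\le\eta^{\dagger}<|\eta_{CR}|\bigr)$. The second term vanishes because $|\eta_{CR}|$ is deterministic, so if it exceeds $\eta^{\dagger}$ a consistent estimator eventually exceeds $\eta^{\dagger}$ with probability one.

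\emph{Step 3 (general lower bound, part (b)).} The map $\ell\mapsto\widehat\eta^{\,U}_n(\ell)$ is decreasing. So on $\{\ell_n\le\lambda\}$, certification with $\ell_n$ implies certification at $\ell=\lambda$, and certification at $\lambda$ implies certification at any value below $\lambda$. I therefore rerun Step 2 with the deterministic $\lambda$ in the multiplier $(1-\lambda)/\lambda$. The obstacle is that $t^{CR}_n$ contains $\hbeta^*_K$, not $\hbeta^{\mathrm{corr}}_0$. I handle this by writing $t^{CR}_n=|\hbeta^*_K|\sqrt{\tau^{*2}_n}/\widehat s_{CR}$ and applying the one-sided bound to $\hbeta^*_K$ directly, which is centred at $\lambda\beta_0$ by the cluster analogue of Corollary~\ref{cor-slope} with studentized variance $\widehat s^2_{CR}/\tau^{*2}_n$. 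This gives, with probability $\ge1-\gamma_\beta-o(1)$,
\[
\lambda|\beta_0|\sqrt{\tau^{*2}_n}/\widehat s_{CR}\le t^{CR}_n+z_{1-\gamma_\beta}.
\]
Hence $(1-\lambda)|\beta_0|\sqrt{\tau^{*2}_n}/\widehat s_{CR}\le\widehat\eta^{\,U}_n(\lambda)$, and the left side is consistent for $|\eta_{CR}|$ because $\tau^{*2}_n(1-\lambda)^2$ matches $(1-\hlambda_n)^2\tau^{*2}_n$ asymptotically. This route also covers part (a) without the pilot step, via $\hlambda_n\to_p\lambda$.

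The false-certification event is then contained in
\[
\{\ell_n>\lambda\}\cup\{\text{the coefficient bound fails}\}\cup\{\text{an }o_p(1)\text{ consistency event}\}.
\]
A union bound gives $\limsup\le\gamma_\lambda+\gamma_\beta$, with no independence needed. I expect the main care to go into the uniformity and strict-monotonicity bookkeeping at the boundary $|\eta_{CR}|=\eta^{\dagger}$. That boundary is excluded by the strict inequality in the event, since the size is continuous and increasing in $|\eta|$.
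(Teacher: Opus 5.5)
Your proposal is correct, and its Step~3 is essentially the paper's own proof: a one-sided asymptotic upper bound for $|\theta|=\lambda|\beta_0|$ built from $\hbeta^*_K$ and its reported cluster-robust standard error, pushed through the decreasing map $\ell\mapsto(1-\ell)/\ell$, with a union bound over the coefficient- and reliability-coverage failures (your Step~2 route through the studentized corrected pilot and Corollary~\ref{cor-cluster-feasible}(b) is a valid but redundant variant for part~(a)). One slip to fix: because $\widehat\eta^{\,U}_n(\ell)$ is decreasing in $\ell$, certification at $\lambda$ implies certification at values above $\lambda$, not below; your argument only uses the correct implication, from certification at some $\ell_n\le\lambda$ to certification at $\lambda$.
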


\begin{remark}[The tolerance and the two uncertainty budgets]\label{rem-gamma-delta}
The size tolerance $\delta$ bounds the object being certified; $\gamma_\beta$ controls coefficient uncertainty and $\gamma_\lambda$ the coverage error of a supplied lower-reliability bound. They should be reported as $(\alpha,\delta,\gamma_\beta,\gamma_\lambda)$, with total false-certification probability bounded by $\gamma_\beta+\gamma_\lambda$. If reliability is known or treated as a consistent point pilot, $\gamma_\lambda=0$ and the original triple $(\alpha,\delta,\gamma_\beta)$ remains. A noisy finite-sample reliability estimate should not be inserted as though known: use its lower confidence bound in~\eqref{eq-certified-breakdown}. Design~6 shows why this distinction matters.
\end{remark}

\begin{proof}
Lemmas~\ref{lem-cluster-clt}--\ref{lem-crve}, Theorem~\ref{thm-cluster}, Corollary~\ref{cor-cluster-feasible} and Proposition~\ref{prop-certificate} are proved in the Online Supplement.
\end{proof}

\begin{remark}[The direction of the cluster adjustment]\label{rem-psi-sign}
It is tempting to reason that serial dependence in $u$ inflates standard errors, hence $\psi>1$, hence clustering mechanically deflates the measured distortion and the i.i.d.\ verdict is conservative. That reasoning is incomplete, and its conclusion can fail. Because $\Psi_n = \sum_g a^{(g)\prime}\Sigma_g a^{(g)}$ weights the within-cluster covariance by the residualized regressor, and because $\widetilde X^{*}_{K_n}$ is orthogonal to every fixed-effect dummy, an equicorrelated component of $u$ at a level that is itself a fixed effect is annihilated exactly, as in Proposition~\ref{prop-equicorr}. With a serially dependent error but a within-cluster serially independent regressor, the surviving terms are predominantly negative and $\psi<1$: clustering then makes the diagnostic more alarming, not less. What drives $\psi$ above one is persistence in the regressor and the error together, over a panel long enough for that persistence to accumulate. Simulations on a unit-and-time FE design confirm this: at $G=200$, an i.i.d.\ treatment with AR(1) errors ($\varrho_u=0.6$, $T=10$) gives $\psi\approx0.75$, while $\varrho_x=0.9,\varrho_u=0.9$ at $T=20$ gives $\psi\approx1.27$ and $\varrho_x=0.98,\varrho_u=0.95$ at $T=40$ gives $\psi\approx1.72$ (Section~\ref{sec-design5}(vi); the panel length rises with the persistence across the three, and $\psi$ depends on both).

Two cautions follow. First, the empirically common case --- a persistent regressor such as a democracy index or a log wage, together with persistent errors --- is the $\psi>1$ case, but that is an empirical fact about those designs and not a theorem. Second, and more easily missed, $\psi$ and its reported estimate $\hpsi$ are different objects whose comparisons with one need not agree: by Corollary~\ref{cor-cluster-feasible}(a), $\hpsi\to_p\psi\sigma^2/\varsigma^2$, and $\varsigma^2<\sigma^2$ whenever the fixed-effect projection removes positively correlated error. The gap is not small. In the nested configuration of Section~\ref{sec-design5} the population $\psi$ is $0.64$ --- clustering genuinely amplifies the distortion there --- while the reported $\hpsi$ sits at $0.99$. So $\psi<1$ does not show up as $\hpsi<1$, and one may not read the population statement off the reported number or the reverse. What settles the applied verdict is $\hpsi$, not $\psi$: by Corollary~\ref{cor-cluster-feasible}(c) the cluster-robust diagnostic is an exact function of $\hpsi$ in the sample, whereas $\psi$ enters only the population target. The direction of the adjustment must therefore be read off $\hpsi$, and this paragraph's mechanism read as an account of $\psi$.
\end{remark}

\begin{remark}[What the cluster results do and do not require]\label{rem-cluster-scope}
Theorem~\ref{thm-cluster} requires many clusters ($G_n\to\infty$) with no dominant cluster, bounded cluster sizes, projection compatibility, and the design-side limits of Assumption~\ref{ass-cluster}(v). It does not require the i.i.d.\ sampling of Assumption~\ref{ass-reg}(i), which would be incompatible both with clustered errors and with the persistent regressors of Remark~\ref{rem-psi-sign}. Cluster sizes may be allowed to grow only under the four strengthened conditions of Remark~\ref{rem-clustersize}, which we state explicitly rather than gesture at, because Assumption~\ref{ass-cluster}(iv) alone does not absorb the extra $\bar n_n$ factors the proofs require. It does not require the within-cluster dependence to have any particular form, nor cluster sizes to be equal, nor the errors to be homoskedastic within clusters. It does not cover the few-clusters regime, where $G_n$ is fixed and the CRVE is not consistent; there the appropriate tools are those of \citet{IbragimovMuller2010} or \citet{CRS2017}, and the diagnostic would have to be re-derived against a non-normal limit. It also does not cover multiway clustering.
\end{remark}

\section{Critical values}\label{sec-cv}

\subsection{Threshold and feasible form}

We invert the leading (quadratic) size distortion to obtain a Stock--Yogo-style threshold.

\begin{corollary}[Critical-value formula]\label{cor-cv}
Under the conditions of Theorem~\ref{thm-noncentral}, the asymptotic size of the nominal-$\alpha$ test $|T_n^{\mathrm{CJN}*}| > z_{1-\alpha/2}$ is even in $\eta$ with a vanishing first derivative, so its leading distortion is quadratic:
\[
\Pp(|T_n^{\mathrm{CJN}*}| > z_{1-\alpha/2}) = \alpha + z_{1-\alpha/2}\,\phi(z_{1-\alpha/2})\,\eta^2 + O(\eta^4)
\]
as $|\eta| \to 0$. The critical value bounding the leading distortion below $\delta$ is, writing $z := z_{1-\alpha/2}$,
\begin{equation}\label{eq-cv}
\tau^2_{\mathrm{crit}}(\rho, c^2, \beta_0, \sigma; \alpha, \delta)
= \frac{\beta_0^2\, c^4\,(1-\rho)\, z\,\phi(z)}{\sigma^2\delta} - c^2,
\end{equation}
positive whenever $\beta_0^2\, c^2 (1-\rho)\, z\,\phi(z) > \sigma^2\delta$. Two features of the formula trace directly to the Hessian limit of Lemma~\ref{lem-clt} and are worth flagging because they are easy to get wrong: the leading term is linear in $(1-\rho)$, not quadratic, and the subtracted term is $c^2$, not $c^2(1-\rho)$ (Remark~\ref{rem-rho}).
\end{corollary}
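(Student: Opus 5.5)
By Theorem~\ref{thm-noncentral}, $T_n^{\mathrm{CJN}*}\Rightarrow N(\eta,1)$. The boundary $\{|t|=z\}$ has probability zero under this continuous limit, so the portmanteau theorem gives convergence of the rejection probability to
\[
S(\eta):=\Pp(|Z+\eta|>z)=\Phi(-z-\eta)+1-\Phi(z-\eta)=\Phi(-z-\eta)+\Phi(-z+\eta),\qquad Z\sim N(0,1).
\]
The first step is to note that $S$ is visibly even, $S(\eta)=S(-\eta)$. It is also smooth, indeed entire in $\eta$. Hence all odd derivatives vanish at $\eta=0$, in particular $S'(0)=0$. The Taylor expansion about $0$ therefore contains only even powers: $S(\eta)=S(0)+\tfrac12 S''(0)\eta^2+O(\eta^4)$, and $S(0)=2\Phi(-z)=\alpha$.

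The second step is to compute $S''(0)$. Differentiating gives $S'(\eta)=-\phi(-z-\eta)+\phi(-z+\eta)$. Using $\phi'(x)=-x\phi(x)$,
\[
S''(\eta)=-(z+\eta)\phi(z+\eta)+(z-\eta)\phi(z-\eta).
\]
Evaluating at $\eta=0$ requires care with the signs, since the two terms do not cancel:
\[
S''(\eta)=\phi'(-z-\eta)+\phi'(-z+\eta),\qquad S''(0)=2\phi'(-z)=2z\phi(z)>0.
\]
So $\tfrac12S''(0)\eta^2=z\phi(z)\eta^2$, which is the displayed leading term. The $O(\eta^4)$ remainder follows from Taylor's theorem with the fourth derivative bounded on a neighbourhood, and the fifth-order term is odd and vanishes.

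The third step inverts the expansion. Setting $z\phi(z)\eta^2\le\delta$ gives $\eta^2\le\delta/(z\phi(z))$. Substituting $\eta^2=\beta_0^2c^4(1-\rho)/(\sigma^2(\tau^2+c^2))$ from Theorem~\ref{thm-noncentral} yields
\[
\tau^2+c^2\ge\beta_0^2c^4(1-\rho)z\phi(z)/(\sigma^2\delta),
\]
which rearranges exactly to \eqref{eq-cv}, since $\eta^2$ is strictly decreasing in $\tau^2$. The threshold is positive iff $\beta_0^2c^4(1-\rho)z\phi(z)/(\sigma^2\delta)>c^2$, that is, after dividing by $c^2>0$, iff $\beta_0^2c^2(1-\rho)z\phi(z)>\sigma^2\delta$.

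The two flagged features come from the square of $\sqrt{1-\rho}$ in $\eta$. That square makes the leading term linear in $(1-\rho)$, and because $\lambda$ is $\rho$-free (Lemma~\ref{lem-atten}) the denominator is $\tau^2+c^2$ rather than $(1-\rho)(\tau^2+c^2)$, so the subtracted term is $c^2$. The main obstacle is only bookkeeping: keeping the sign of $\phi'$ straight so the $\eta^2$ coefficient comes out as $+z\phi(z)$, and stating clearly that the threshold controls the \emph{leading} distortion, not the exact size.
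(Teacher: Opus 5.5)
Your argument is the paper's own. Evenness of $s(\eta)=\Phi(-z-\eta)+1-\Phi(z-\eta)$ removes the odd Taylor terms, $s''(0)=2z\phi(z)$ gives the quadratic coefficient, and substituting $\eta^2=\beta_0^2c^4(1-\rho)/(\sigma^2(\tau^2+c^2))$ and inverting yields \eqref{eq-cv}. Your portmanteau step, which passes from $T_n^{\mathrm{CJN}*}\Rightarrow N(\eta,1)$ to the limiting rejection probability, is a small justification the paper leaves implicit.

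There is one slip to fix. Your first display $S''(\eta)=-(z+\eta)\phi(z+\eta)+(z-\eta)\phi(z-\eta)$ has the wrong sign on its first term, and it would give $S''(0)=0$. The correct expression is your second one, $\phi'(-z-\eta)+\phi'(-z+\eta)=(z+\eta)\phi(z+\eta)+(z-\eta)\phi(z-\eta)$, which matches the paper, so delete the first display. Also, the odd term whose vanishing gives the $O(\eta^4)$ remainder is the third-order term, not the fifth.
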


\begin{proof}[Proof sketch]
For $T\sim N(\eta,1)$ the two-sided rejection probability $s(\eta)=\Phi(-z-\eta)+1-\Phi(z-\eta)$ is even with $s'(0)=0$ and $s''(0)=2z\phi(z)$, so $s(\eta)=\alpha+z\phi(z)\eta^2+O(\eta^4)$; substituting $\eta^2=\beta_0^2 c^4(1-\rho)/(\sigma^2(\tau^2+c^2))$ (Theorem~\ref{thm-noncentral}) and setting the leading term equal to $\delta$ rearranges to~\eqref{eq-cv}. Full derivation in the Online Supplement.
\end{proof}

\begin{remark}[Quadratic versus exact-inversion threshold]\label{rem-exact-cv}
Formula~\eqref{eq-cv} inverts the leading quadratic distortion $z\phi(z)\eta^2$, equivalently the threshold $|\eta| \le \eta^{\mathrm{quad}} := \sqrt{\delta/(z\phi(z))}$. For $0<\alpha<1$ and $0<\delta<1-\alpha$, the \emph{exact-inversion} threshold replaces $\eta^{\mathrm{quad}}$ by $\eta^{\dagger}(\alpha,\delta)$, the unique positive root of $\Phi(-z-\eta)+1-\Phi(z-\eta)=\alpha+\delta$, giving
\begin{equation}\label{eq-cv-exact}
\tau^{2,\mathrm{exact}}_{\mathrm{crit}} = \left(\frac{|\beta_0| c^2\sqrt{1-\rho}}{\sigma\,\eta^{\dagger}(\alpha,\delta)}\right)^2 - c^2.
\end{equation}
For either formula, the operational threshold is the positive part of the displayed algebraic boundary: if the right-hand side is negative, every admissible $\tau^2\ge0$ passes that inversion.
At $\alpha=\delta=0.05$, $\eta^{\mathrm{quad}}=0.661$ versus $\eta^{\dagger}=0.652$: the quadratic closed form is within $1.4\%$ of the exact root in $\eta$. Because $\eta^{\mathrm{quad}}>\eta^{\dagger}$ it is very slightly anti-conservative (at its $\delta=0.05$ boundary the true distortion is $0.0513$, just above the nominal $\delta$), so it is an accurate approximation rather than a strict upper bound. This is a different and far milder failure than the discarded linear surrogate $\delta/(2\phi)=0.428$, which over-demanded $\tau^2$ by a factor $(\eta^{\dagger}/0.428)^2\approx 2.3$ and, worse, \emph{fails to control size} at larger $\delta$ (at $\delta=0.15$ its boundary $\eta=1.28$ carries a true distortion of $0.20$). The quadratic form~\eqref{eq-cv} degrades further only in the far tail $\eta \gtrsim 1$ ($\lambda \lesssim 0.5$), where the exact inversion~\eqref{eq-cv-exact} should be used; Design~2 of Section~\ref{sec-sim} confirms that the two nearly coincide across $(n,\rho,c^2)$ at the moderate $\delta$ the paper targets. We recommend~\eqref{eq-cv-exact} as headline, with~\eqref{eq-cv} as its closed-form companion.
\end{remark}

The threshold has an equivalent form that is more useful in practice, because it removes the drift-scale quantities $c^2$ and $\tau^2$ in favor of directly observable objects.

\begin{corollary}[Feasible non-centrality]\label{cor-feasible}
Under the hypotheses of Lemma~\ref{lem-atten}, let $\tau^{*2}_n := X^{*\prime} M_{K_n} X^*$ denote the observed (random) residual variation of the contaminated regressor --- distinct from its deterministic limit $\tau^{*2} = (1-\rho)(\tau^2 + c^2)$, to which it converges, $\tau^{*2}_n \to_p \tau^{*2}$ --- and $\lambda_n$ the within reliability of~\eqref{eq-lambda}. \emph{This corollary and everything built on it (Definition~\ref{def-breakdown}, the protocol of Section~\ref{sec-protocol}, and the empirical applications of Section~\ref{sec-app}) never requires the practitioner to decompose the diagnostic into $(\rho,\tau^2,c^2)$: $\tau^{*2}_n$ is observed from the regression, while $\lambda_n$ is supplied by the external within-reliability or noise pilot. The feasible diagnostic is therefore insulated from the primitive decomposition once that pilot is specified.} The population non-centrality obeys the exact identity $|\eta| = (|\beta_0|/\sigma)(1-\lambda)\sqrt{\tau^{*2}}$, and the plug-in
\[
|\heta_n| := \frac{|\beta_0|}{\sigma}\,(1 - \lambda_n)\, \sqrt{\tau^{*2}_n} \quad\text{satisfies}\quad |\heta_n| = |\eta|\,(1 + o_p(1)),
\]
so the size condition ``leading distortion $\le \delta$'' is asymptotically equivalent to
\begin{equation}\label{eq-feasible}
\frac{|\beta_0|}{\sigma}\,(1 - \lambda_n)\, \sqrt{\tau^{*2}_n} \;\le\; \sqrt{\frac{\delta}{z_{1-\alpha/2}\,\phi(z_{1-\alpha/2})}}\;=\;\eta^{\mathrm{quad}},
\end{equation}
its exact-normal-inversion analogue being asymptotically equivalent to $(|\beta_0|/\sigma)(1-\lambda_n)\sqrt{\tau^{*2}_n} \le \eta^{\dagger}(\alpha,\delta)$ (both feasible forms carry a $1+o_p(1)$ from replacing $\tau^{*2}$ by the random $\tau^{*2}_n$).
\end{corollary}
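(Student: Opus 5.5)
The plan is to prove the exact identity first and then handle the plug-in and the size condition, each by a continuous-mapping argument resting on Lemma~\ref{lem-atten}.

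\emph{Step 1: the exact identity.} This is pure algebra in the limit quantities. With $\lambda=\tau^2/(\tau^2+c^2)$ we have $1-\lambda=c^2/(\tau^2+c^2)$. Together with $\tau^{*2}=(1-\rho)(\tau^2+c^2)$ this gives
\[
(1-\lambda)\sqrt{\tau^{*2}}=\frac{c^2\sqrt{1-\rho}}{\sqrt{\tau^2+c^2}} .
\]
Multiplying by $|\beta_0|/\sigma$ reproduces $|\eta|$ from Theorem~\ref{thm-noncentral} exactly.

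\emph{Step 2: the plug-in.} Here $\lambda_n=\tau_n^2/(\tau_n^2+c^2)$ is deterministic, and $\lambda_n\to\lambda$ by Assumption~\ref{ass-reg}(ii). Lemma~\ref{lem-atten}(a) gives $\tau^{*2}_n\to_p\tau^{*2}$, and $\tau^{*2}>0$ because $\rho<1$ and $\tau^2>0$. The map $(l,s)\mapsto(1-l)\sqrt{s}$ is continuous at $(\lambda,\tau^{*2})$, so by continuous mapping $|\heta_n|\to_p|\eta|$. When $c>0$ we have $|\eta|>0$ (assuming $\beta_0\neq0$), and dividing gives $|\heta_n|/|\eta|\to_p1$, which is the claimed $1+o_p(1)$ form. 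The degenerate cases need separate treatment:
\begin{itemize}
\item if $c=0$ or $\beta_0=0$, then $\lambda_n=1$ identically or the prefactor vanishes;
\item in either case $|\heta_n|=0=|\eta|$ exactly, so the multiplicative statement holds trivially.
\end{itemize}

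\emph{Step 3: the size condition.} Corollary~\ref{cor-cv} gives the leading distortion as $z\phi(z)\eta^2$. Hence ``leading distortion $\le\delta$'' is equivalent to $|\eta|\le\eta^{\mathrm{quad}}$, and exact normal inversion gives $|\eta|\le\eta^{\dagger}$ in the same way, since $s(\eta)$ is even and strictly increasing in $|\eta|$. Asymptotic equivalence then follows from Step 2. For any threshold $\bar\eta>0$ with $|\eta|\neq\bar\eta$, the events $\{|\heta_n|\le\bar\eta\}$ and $\{|\eta|\le\bar\eta\}$ agree with probability tending to one, because $|\heta_n|-|\eta|=o_p(1)$.

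The main obstacle is the knife-edge case $|\eta|=\bar\eta$, where the indicator is discontinuous and equivalence can hold only in the sense of the $1+o_p(1)$ ratio. I would state the equivalence in that sense, as the corollary's parenthetical does, rather than claim convergence of indicators at the boundary. A secondary point is to confirm that $\lambda_n$ here is the population within reliability rather than the sample ratio $\hlambda_n$. If it were the sample ratio, Lemma~\ref{lem-atten} still gives $\hlambda_n\to_p\lambda$, and the argument is unchanged.
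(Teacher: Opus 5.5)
Your proposal is correct and follows essentially the same route as the paper: the algebraic identity $(1-\lambda)\sqrt{\tau^{*2}}=c^2\sqrt{1-\rho}/\sqrt{\tau^2+c^2}$, the consistency $\tau^{*2}_n\to_p\tau^{*2}$ from Lemma~\ref{lem-atten}, and rearrangement of the quadratic condition $z\phi(z)\eta^2\le\delta$ from Corollary~\ref{cor-cv}. Your explicit continuous-mapping step with the deterministic $\lambda_n\to\lambda$, your treatment of the degenerate cases $c=0$ or $\beta_0=0$, and your caveat at the knife-edge $|\eta|=\bar\eta$ make explicit details that the paper's short proof leaves implicit.
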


\begin{proof}[Proof sketch]
The identity $|\eta| = (|\beta_0|/\sigma)(1-\lambda)\sqrt{\tau^{*2}}$ is algebraic: with $\lambda=\tau^2/(\tau^2+c^2)$ and $\tau^{*2}=(1-\rho)(\tau^2+c^2)$, $(1-\lambda)\sqrt{\tau^{*2}} = \frac{c^2}{\tau^2+c^2}\sqrt{(1-\rho)(\tau^2+c^2)} = \frac{c^2\sqrt{1-\rho}}{\sqrt{\tau^2+c^2}}$, which matches $\eta$ of Theorem~\ref{thm-noncentral} up to the sign and the $|\beta_0|/\sigma$ factor. Since $X^{*\prime}M_{K_n}X^*\to_p\tau^{*2}$ by Lemma~\ref{lem-atten}, rearranging $z\phi(z)\eta^2\le\delta$ of Corollary~\ref{cor-cv} gives~\eqref{eq-feasible}. See the Online Supplement.
\end{proof}

Equation~\eqref{eq-feasible} is the form we recommend reporting: $\tau^{*2}_n$ is printed by any FE regression, $\sigma$ is the CJN residual standard deviation, and $(1-\lambda_n)$, the within noise share, is the single quantity requiring external input. At $\alpha = 0.05$, $\eta^{\mathrm{quad}}=\sqrt{\delta/(z\phi(z))}=\sqrt{8.73\,\delta}$: for instance, tolerating one percentage point of size distortion ($\delta = 0.01$) requires $(|\beta_0|/\sigma)(1-\lambda_n)\sqrt{\tau^{*2}_n} \le 0.296$ (the exact-inversion bound $\eta^{\dagger}(0.05,0.01)=0.295$ nearly coincides, confirming the quadratic form).

\subsection{Pilots and the correction}\label{sec-pilot}

The threshold depends on four quantities. Only $\rho_n = d_{K_n}/n$ and $\tau^{*2}_n$ are directly observable, and $\sigma$ is estimated by $\hsigma^*_{\mathrm{CJN}}$ (shown consistent in the proof of Theorem~\ref{thm-noncentral}). The remaining inputs are $|\beta_0|$ and the noise level. Two warnings are in order, the first of which is substantive.

\begin{proposition}[The naive pilot is anti-conservative; the corrected pilot removes it]\label{prop-pilot}
Under the conditions of Theorem~\ref{thm-noncentral}, suppose the noise pilot $\widehat{a}_n \to_p c^2(1-\rho)$ (equivalently $\hlambda_n := \tau^{*2}_n{}^{-1}(\tau^{*2}_n - \widehat a_n) \to_p \lambda$), obtained from any of the sources in Remark~\ref{rem-est}. Then:
\begin{enumerate}[label=(\roman*)]
\item \emph{(Anti-conservative centering.)} By Corollary~\ref{cor-slope} the naive point estimate is asymptotically centered at the attenuated value $\beta_0\lambda$ --- the center of its weak-information limiting law, where $\lambda < 1$ and attenuation bites; under strong information $\lambda \to 1$ and the attenuation vanishes --- so $|\hbeta^*_K|$ is centered at $\lambda|\beta_0| < |\beta_0|$. Since the distortion $z\phi(z)\eta^2$ is strictly increasing in $|\beta_0|$, evaluating it at this center understates the true distortion by the factor $\lambda^2$, so the naive diagnostic is biased toward passing specifications that fail the true threshold. Two caveats bound this statement to the center: under weak information $|\hbeta^*_K|$ is a folded normal, not a constant, so its sampling spread is a separate source of uncertainty (addressed below); and because the limiting law of $\hbeta^*_K$ has second moment $\lambda^2\beta_0^2+\sigma^2/\tau^{*2}$, a squared plug-in must use the recentered $\hbeta_0^{\mathrm{corr}}$ rather than $\hbeta^{*2}_K$, whose limit is inflated by the variance term.
\item \emph{(Bias-corrected pilot.)} The within-reliability-corrected pilot
\[
\hbeta_0^{\mathrm{corr}} := \frac{\hbeta^*_K}{\hlambda_n}
\]
removes the attenuation bias: $\hbeta_0^{\mathrm{corr}} \Rightarrow \beta_0 + N(0,\sigma^2/(\lambda^2\tau^{*2}))$, i.e.\ it is asymptotically centered at $\beta_0$. In the strong-information regime $nQ_{K_n}\to\infty$ the Gaussian term vanishes and $\hbeta_0^{\mathrm{corr}}\to_p\beta_0$ (consistent). In the weak-information regime $\tau^2<\infty$ it is centered at $\beta_0$ but carries sampling error of order $\sigma/(\lambda\sqrt{\tau^{*2}})$; this is a coefficient-uncertainty source distinct from within-reliability uncertainty, and controlling it requires a confidence interval (or worst-case bound) for $\beta_0$, not merely the $\lambda$ band of Remark~\ref{rem-corr-noise}. Strong information removes the coefficient uncertainty; the precision of an external reliability pilot is a separate matter.
\end{enumerate}
\end{proposition}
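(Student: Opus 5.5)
The plan is to reduce both parts to Slutsky arguments built on Corollary~\ref{cor-slope} and the pilot consistency $\hlambda_n\to_p\lambda$, plus an elementary monotonicity argument for the distortion map. Throughout I work under the hypotheses of Theorem~\ref{thm-noncentral}, so that Lemma~\ref{lem-atten} gives $\tau^{*2}_n\to_p\tau^{*2}$. The stated equivalence between $\widehat a_n\to_p c^2(1-\rho)$ and $\hlambda_n\to_p\lambda$ is immediate. Indeed, $\hlambda_n=1-\widehat a_n/\tau^{*2}_n$, and
\[
1-\frac{c^2(1-\rho)}{(1-\rho)(\tau^2+c^2)}=\frac{\tau^2}{\tau^2+c^2}=\lambda ,
\]
so the continuous mapping theorem gives the equivalence in both directions, because $\tau^{*2}>0$.

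For part~(i), I would first read off the centering from Corollary~\ref{cor-slope}: $\hbeta^*_K\Rightarrow\beta_0\lambda+N(0,\sigma^2/\tau^{*2})$. The limiting law of $|\hbeta^*_K|$ is therefore a folded normal whose underlying normal is centered at $\lambda|\beta_0|$. Next I would note that the leading distortion is $z\phi(z)\eta^2$, and by Corollary~\ref{cor-feasible}
\[
\eta^2=\frac{\beta_0^2}{\sigma^2}(1-\lambda)^2\tau^{*2},
\]
which is strictly increasing in $|\beta_0|$ at fixed $(\lambda,\tau^{*2},\sigma)$. Substituting $\lambda|\beta_0|$ for $|\beta_0|$ multiplies this quantity by exactly $\lambda^2<1$ whenever $c>0$. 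Hence any specification with $z\phi(z)\eta^2>\delta\ge z\phi(z)\lambda^2\eta^2$ passes the naive check at its center while failing the true threshold. The two caveats then need short verifications. The folded-normal remark is just the limiting law above. The second-moment claim follows from the limit second moment of a Gaussian, $\lambda^2\beta_0^2+\sigma^2/\tau^{*2}$. That limit moment is legitimate under uniform integrability, or I would phrase it purely as a property of the limit law, which is all the statement needs.

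For part~(ii), I would write
\[
\hbeta_0^{\mathrm{corr}}-\beta_0=\frac{\hbeta^*_K-\beta_0\hlambda_n}{\hlambda_n}
=\frac{\hbeta^*_K-\beta_0\lambda}{\hlambda_n}+\beta_0\,\frac{\lambda-\hlambda_n}{\hlambda_n}.
\]
The first term converges to $N(0,\sigma^2/\tau^{*2})/\lambda$ by Corollary~\ref{cor-slope} and Slutsky, using $\hlambda_n\to_p\lambda>0$. The second term is $o_p(1)$. Together these give $\hbeta_0^{\mathrm{corr}}\Rightarrow\beta_0+N(0,\sigma^2/(\lambda^2\tau^{*2}))$.

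The main obstacle is the strong-information claim, because Corollary~\ref{cor-slope} is stated for $\tau^2<\infty$. There I would argue directly. By Lemma~\ref{lem-atten}(b), $\hbeta^*_K\to_p\beta_0$ and $\lambda\to1$. So I need $\hlambda_n\to_p1$, which holds provided $\widehat a_n/\tau^{*2}_n\to_p0$. That ratio vanishes because the noise pilot stays $O_p(1)$ while $\tau^{*2}_n\to\infty$; this is where I would check that the pilot sources of Remark~\ref{rem-est} deliver $\widehat a_n=O_p(1)$. The remaining interpretive sentences about coefficient versus reliability uncertainty need no proof beyond pointing to the nondegenerate variance $\sigma^2/(\lambda^2\tau^{*2})$ in the weak-information limit.
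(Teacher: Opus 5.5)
Your proposal is correct and takes essentially the same route as the paper. Part~(i) follows from the centering in Corollary~\ref{cor-slope} and the strict monotonicity of $z\phi(z)\eta^2$ in $|\beta_0|$, with the second moment read as a property of the limit law, and part~(ii) follows by Slutsky with $\hlambda_n\to_p\lambda>0$; your separate strong-information argument via Lemma~\ref{lem-atten}(b) and $\hlambda_n\to_p 1$ spells out what the paper's one-line remark leaves implicit, and the $\widehat a_n=O_p(1)$ check you flag is already implied by the pilot hypothesis $\widehat a_n\to_p c^2(1-\rho)$.
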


\begin{proof}[Proof sketch]
(i) follows from the centering of $\hbeta^*_K$ at $\beta_0\lambda$ (Corollary~\ref{cor-slope}) and the strict monotonicity of $z\phi(z)\eta^2$ in $|\beta_0|$. (ii) With $\hbeta^*_K\Rightarrow\beta_0\lambda+N(0,\sigma^2/\tau^{*2})$ and $\hlambda_n\to_p\lambda>0$, Slutsky gives $\hbeta_0^{\mathrm{corr}}\Rightarrow\beta_0+N(0,\sigma^2/(\lambda^2\tau^{*2}))$, a constant $\beta_0$ under strong information. See the Online Supplement.
\end{proof}

\begin{remark}[Sampling noise of the corrected pilot at small $\lambda$]\label{rem-corr-noise}
The correction $\hbeta_0^{\mathrm{corr}}=\hbeta^*_K/\hlambda_n$ divides by an estimated within reliability, so its sampling variance is inflated by roughly $\hlambda_n^{-2}$ (delta method), and the inflation is largest precisely in the low-within-reliability regime where the diagnostic bites. Two distinct uncertainties must therefore be separated. \emph{Within-reliability} uncertainty (how noisy is the regressor?) is handled by reporting the diagnostic over an interval of $\lambda$ values and, for a formal certificate, supplying a lower confidence bound $\ell_n$. \emph{Coefficient} uncertainty is separate: by Corollary~\ref{cor-slope}, $\hbeta_0^{\mathrm{corr}} \Rightarrow \beta_0 + N(0,\sigma^2/(\lambda^2\tau^{*2}))$ is nondegenerate under weak information, so a point plug-in yields a descriptive verdict rather than a certificate. Proposition~\ref{prop-certificate} absorbs this uncertainty by replacing $|t|$ with $|t|+z_{1-\gamma_\beta}$ and comparing $\ell_n$ with the resulting certified breakdown reliability. The coefficient and reliability coverage errors add, without an independence assumption (Remark~\ref{rem-gamma-delta}). Under strong information the coefficient uncertainty vanishes; reliability uncertainty remains whatever the external pilot supplies.
\end{remark}

Proposition~\ref{prop-pilot}(i) is the point made in Section~\ref{sec-notjust}: the most natural implementation of the diagnostic is biased in the dangerous direction, by exactly the within reliability $\lambda$ whose smallness motivates running it. The correction costs one line.

\begin{remark}[Sources for the noise pilot]\label{rem-est}
For $\sigma_\nu^2$ (and hence $c^2=n\sigma_\nu^2$ at the sample's local-drift scale, or the equivalent within noise share $1-\lambda_n$), three sources are available in decreasing order of directness. \emph{(i) Measurement-model posteriors}: producers of some expert-coded or latent-trait indices publish observation-level posterior uncertainty, as the V-Dem model and the Unified Democracy Scores of \citet{PMM} do, making $\hsigma_\nu^2$ directly available; this is the basis of the lead application. \emph{(ii) Validated survey, administrative-matched, or repeated measures}: after applying the identical projection and sample to two measurements $X^*_{i,1},X^*_{i,2}$, mutually uncorrelated classical reporting errors identify the first measure's within reliability as $\Cov(X^*_{1},X^*_{2})/\Var(X^*_{1})$ without requiring equal error variances. Under the additional equal-error-variance restriction, $\hsigma_\nu^2=\tfrac12\E_i[(X^*_{i,1}-X^*_{i,2})^2]$. With three common-scale measurements, the familiar difference-variance decomposition identifies product-specific noise variances only under pairwise-uncorrelated errors; this is a transparent special case of the multiple-proxy model of \citet{BernalMittagQureshi2016}, whose Assumption A3 states the required mutual orthogonality explicitly and whose design chooses proxies reported by different people. Their reported OLS attenuation of $44\%$ for the teaching proxy and $59\%$ for the resource proxy corresponds, as a rough common-scale calibration rather than an identification result, to reliabilities of about $0.56$ and $0.41$---the same range reached by the V-Dem pilots below. Different pairwise correlations can reject an equal-error-variance convention, but with unequal noise variances they do not by themselves reject classical error; under common scale, equality of the three pairwise \emph{covariances} is the relevant overidentifying implication. Reinterview and test--retest studies supply repeats; a survey measure matched to an administrative benchmark supplies a validation discrepancy, subject to error in the benchmark. Beyond the raw repeat formulas, \citet{EvdokimovZeleneev2025} show how a second measurement $Q=\alpha_1X+\epsilon_Q$, with unknown scale $\alpha_1$ and potentially nonclassical surrogate error, can identify the low-order noise moment inside a corrected-moment model. \citet{MeyerMittag2021} show how population administrative records linked to survey microdata can supply precisely this validation information. The twin reports of \citet{AshenfelterKrueger1994} furnish the second application, while \citet{Rouse1999} shows why correlation across reporting errors must be examined explicitly. Other examples include the CPS--Social Security earnings match of \citet{BK} and the PSID Validation Study of \citet{BoundEtAl94}. \emph{(iii) Sensitivity bands}: absent external information, evaluate the diagnostic over a grid of $\lambda$ and report the breakdown reliability (Definition~\ref{def-breakdown}). Thus the point diagnostic is usable beyond specialist expert-coded data whenever a defensible within-reliability range exists; a formal certificate still requires a defensible lower within-reliability bound. A sensitivity grid alone describes where the verdict changes but cannot manufacture that external information.
\end{remark}

\subsection{Breakdown reliability}\label{sec-breakdown}

The notation separates a measured property from a diagnostic requirement: $\lambda$ is the data's within reliability, whereas $\lambda^{\dagger}$ is the minimum within reliability the specification requires.

\begin{definition}[Breakdown reliability]\label{def-breakdown}
Fix $0<\alpha<1$ and $0<\delta<1-\alpha$. The \emph{breakdown reliability} $\lambda^{\dagger}$ of a specification is the minimum within reliability at which the feasible size condition~\eqref{eq-feasible}, evaluated with the within-reliability-corrected pilot $\hbeta_0^{\mathrm{corr}} = \hbeta^*_K/\lambda$, holds --- i.e.\ the fixed point $\lambda = \lambda^{\dagger}(\lambda)$ of the required-within-reliability map. Writing $t^*_n := |\hbeta^*_K|\sqrt{\tau^{*2}_n}/\hsigma^*_{\mathrm{CJN}}$ (the specification's conventional $t$-statistic), the fixed point has the closed form
\[
\lambda^{\dagger} := \frac{t^*_n}{\,t^*_n + \eta^{\dagger}(\alpha,\delta)\,},
\]
with the quadratic version replacing $\eta^{\dagger}$ by $\sqrt{\delta/(z\phi(z))}$. The descriptive point diagnostic passes iff its supplied within reliability $\hlambda_n \ge \lambda^{\dagger}$; under weak information this random plug-in comparison is not itself a population adequacy guarantee. When $\beta_0 = 0$ the population non-centrality vanishes and every within reliability is adequate (the population analogue of $t^*_n$ is zero, so $\lambda^{\dagger}=0$). The fixed point is the coherent point evaluation: at the boundary $\hlambda_n = \lambda^{\dagger}$ the corrected pilot $\hbeta^*_K/\lambda^{\dagger}$ is evaluated at exactly the within reliability used in the comparison. Two practical properties follow from the closed form: $\lambda^{\dagger}$ is computed from the \emph{printed regression output alone} --- indeed from the reported $t$-statistic alone --- with no noise pilot, and it is increasing in $|t^*_n|$: the more precisely a specification claims to estimate a large standardized effect, the higher the breakdown reliability. This is the headline number of the empirical section (Table~\ref{tab-vdem}), where we report it against known within reliabilities. \emph{Terminology (used without exception in what follows).} The paper reports two verdicts and never uses one word for both.
\begin{itemize}[leftmargin=1.4em,itemsep=1pt,topsep=2pt]
\item A specification obtains a \textbf{point pass} when $\hlambda_n \ge \lambda^{\dagger}$ with the pilot evaluated at its point estimate. This is a \emph{descriptive} verdict: it says the plug-in diagnostic clears the threshold, and it carries no coverage guarantee, because it ignores the sampling uncertainty of the coefficient pilot.
\item A specification is \textbf{certified} when a supplied within-reliability lower bound $\ell_n$ exceeds the certified breakdown reliability $\lambda^{\dagger}_{\gamma_\beta,n}$ of~\eqref{eq-certified-breakdown}. Equivalently, replace $|t|$ by $|t|+z_{1-\gamma_\beta}$ in the point-breakdown formula. This is the conservative, size-controlled object, and it is the only one the word ``certified'' is used for.
\end{itemize}
Every verdict in every table, figure and paragraph below is labelled as one or the other; a point pass is never reported as a certificate, and where a specification is a point pass but not a certificate we say so explicitly. The point verdict compares $\hlambda_n$ with $\lambda^{\dagger}$; the certificate compares a defensible lower bound $\ell_n$ with $\lambda^{\dagger}_{\gamma_\beta,n}$ and accounts for both coefficient and reliability uncertainty as in Proposition~\ref{prop-certificate}.
\end{definition}

The primitive threshold $\tau^2_{\mathrm{crit}}$ cannot have a universal
Stock--Yogo lookup: it depends on $(\rho,c^2,|\beta_0|/\sigma)$. The feasible
fixed point is different. Once the reported $|t^*_n|$ is given, $\rho$ and the
other primitives have already been absorbed into that statistic, so
$\lambda^{\dagger}$ has the compact lookup in Table~\ref{tab-breakdown-lookup}.

\begin{table}[ht]
\centering
\caption{Point and certified breakdown reliability from the reported $t$-statistic
($\alpha=\delta=0.05$, $\eta^{\dagger}=0.652$).}
\label{tab-breakdown-lookup}
\begin{tabular}{lrrrrrrr}
\toprule
$|t^*_n|$ & $0.5$ & $1.0$ & $1.96$ & $2.0$ & $3.0$ & $5.0$ & $10.0$ \\
\midrule
$\lambda^{\dagger}$ (point) & $0.434$ & $0.605$ & $0.750$ & $0.754$ & $0.821$ & $0.885$ & $0.939$ \\
$\lambda^{\dagger}_{0.05}$ (certified) & $0.767$ & $0.802$ & $0.847$ & $0.848$ & $0.877$ & $0.911$ & $0.947$ \\
\bottomrule
\end{tabular}
\par\smallskip
\begin{minipage}{0.93\textwidth}\footnotesize
Notes: $\lambda^{\dagger}=|t^*_n|/(|t^*_n|+0.652)$ and
$\lambda^{\dagger}_{0.05}=(|t^*_n|+1.645)/(|t^*_n|+1.645+0.652)$, treating
the supplied reliability as known ($\gamma_\lambda=0$). There is no separate
$\rho$ row because, conditional on the printed $t$-statistic, the fixed-point
threshold is exactly $\rho$-free; $\rho$ remains present in the primitive
$\tau^2$ threshold and in the statistic's sampling law.
\end{minipage}
\end{table}

\subsection{Verdict margins and clustered implementation}

\begin{proposition}[Exact-normal verdict margin]\label{prop-verdict-margin}
Fix $0<\alpha<1$ and $0<\delta<1-\alpha$ and write $z=z_{1-\alpha/2}$ and
\[
s(r):=\Phi(-z-r)+1-\Phi(z-r),\qquad r\ge0.
\]
Then: (i) $s(r)$ is strictly increasing for $r>0$, with
$s(0)=\alpha$ and $s(\eta^{\dagger})=\alpha+\delta$; (ii) at any supplied
within reliability $\ell\in(0,1]$, the corrected-pilot point diagnostic satisfies
\[
\widehat r(\ell):=|t^*_n|\frac{1-\ell}{\ell}\le\eta^{\dagger}
\quad\Longleftrightarrow\quad
\ell\ge\lambda^{\dagger};
\]
and (iii) if an alternative calibration produces a non-centrality magnitude
$\widetilde r$, its pass/fail verdict agrees with that based on $\widehat r$
whenever
\[
|\widetilde r-\widehat r|<|\widehat r-\eta^{\dagger}|.
\]
Thus $|\widehat r-\eta^{\dagger}|$ is an explicit robustness margin for the
binary verdict within the non-central limit experiment. This margin is exact
irrespective of how far $\widehat r$ lies into the tail; it is not a claim
about the finite-sample calibration accuracy of the associated percentage
$s(\widehat r)$ in that regime.
\end{proposition}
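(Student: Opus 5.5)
The three parts are elementary and I would prove them in order, each resting on the previous one.

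For (i), differentiate $s$ directly: $s'(r)=\phi(z+r)-\phi(z-r)$. For $r>0$ and $z>0$ we have $|z+r|>|z-r|$, so $\phi(z+r)<\phi(z-r)$ and $s'(r)<0$. The sign looks inverted relative to the claim, so I would re-derive it carefully:
\[
\frac{d}{dr}\Phi(-z-r)=-\phi(z+r), \qquad \frac{d}{dr}\bigl[1-\Phi(z-r)\bigr]=+\phi(z-r).
\]
Hence $s'(r)=\phi(z-r)-\phi(z+r)$, which is strictly positive for $r>0$, giving strict monotonicity. The endpoint $s(0)=2\Phi(-z)=\alpha$ holds by definition of $z=z_{1-\alpha/2}$. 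Since $s$ is continuous and $s(r)\to1$ as $r\to\infty$, and $\alpha<\alpha+\delta<1$, the intermediate value theorem together with strict monotonicity gives a unique positive root $\eta^{\dagger}$ with $s(\eta^{\dagger})=\alpha+\delta$. This also justifies the uniqueness asserted in Remark~\ref{rem-exact-cv}.

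For (ii), $\widehat r(\ell)=|t^*_n|(1-\ell)/\ell$ is the feasible form of Corollary~\ref{cor-feasible} evaluated at the corrected pilot $\hbeta^*_K/\ell$ with $\hsigma^*_{\mathrm{CJN}}$. The identity $|\hbeta^*_K|\sqrt{\tau^{*2}_n}/\hsigma^*_{\mathrm{CJN}}=|t^*_n|$ makes this exact algebra. Since $\ell>0$, I would multiply through by $\ell$: the inequality $|t^*_n|(1-\ell)\le\eta^{\dagger}\ell$ is equivalent to $|t^*_n|\le\ell(|t^*_n|+\eta^{\dagger})$, that is, $\ell\ge\lambda^{\dagger}$. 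This uses $|t^*_n|+\eta^{\dagger}>0$, which holds because $\eta^{\dagger}>0$. The degenerate case $t^*_n=0$ needs a separate note: there $\lambda^{\dagger}=0$ and $\widehat r(\ell)=0$, so both sides hold for every $\ell$.

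For (iii), the verdict is ``pass'' iff $r\le\eta^{\dagger}$, so the verdict is determined by which side of $\eta^{\dagger}$ the non-centrality lies on. Suppose $|\widetilde r-\widehat r|<|\widehat r-\eta^{\dagger}|$. Then $\widehat r\neq\eta^{\dagger}$, and $\widetilde r$ lies in the open interval centred at $\widehat r$ whose radius is the distance to $\eta^{\dagger}$. That interval does not contain $\eta^{\dagger}$, so $\widetilde r-\eta^{\dagger}$ and $\widehat r-\eta^{\dagger}$ share a strict sign and the verdicts agree. By (i) this translates into agreement of $s(\widetilde r)$ and $s(\widehat r)$ relative to $\alpha+\delta$. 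Nothing here depends on the size of $\widehat r$, which is the content of the final remark.

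I expect no genuine obstacle. The only points needing care are the sign bookkeeping in $s'$, the existence argument for $\eta^{\dagger}$ (which needs $\delta<1-\alpha$), and the boundary cases $\widehat r=\eta^{\dagger}$ and $t^*_n=0$.
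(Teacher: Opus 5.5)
Your proof is correct and follows the paper's argument essentially step for step. The paper also gets the sign of $s'(r)=\phi(z-r)-\phi(z+r)$ from $|z-r|<z+r$, obtains the unique root $\eta^{\dagger}$ from continuity, $s(r)\to1$ and the intermediate value theorem, proves (ii) by multiplying by $\ell>0$, and proves (iii) by noting that $\widetilde r$ cannot cross $\eta^{\dagger}$. In a write-up, drop the initial mis-signed derivative; your separate $t^*_n=0$ case is already covered by the general rearrangement, since $|t^*_n|+\eta^{\dagger}>0$.
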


\begin{proof}
For $r>0$, $s'(r)=\phi(z-r)-\phi(z+r)>0$, because
$|z-r|<z+r$ (using $z>0$, which holds since $\alpha<1$) and the
standard-normal density is strictly decreasing in its absolute argument, so
$s$ is strictly increasing on $(0,\infty)$. As $r\to\infty$,
$\Phi(-z-r)\to0$ and $\Phi(z-r)\to0$, so $s(r)\to1$; $s$ is continuous. Since
$s(0)=\alpha$ and $\alpha+\delta\in(\alpha,1)$ (using $0<\delta<1-\alpha$), the
intermediate value theorem gives a unique $r=\eta^{\dagger}\in(0,\infty)$ with
$s(\eta^{\dagger})=\alpha+\delta$, which is part (i). For (ii), multiply
$|t^*_n|(1-\ell)/\ell\le\eta^{\dagger}$ by $\ell>0$ and rearrange to obtain
$\ell\ge |t^*_n|/(|t^*_n|+\eta^{\dagger})$. For (iii), the verdict is
$r\le\eta^{\dagger}$ (pass) versus $r>\eta^{\dagger}$ (fail): if
$\widehat r\le\eta^{\dagger}$ then
$\widetilde r\le\widehat r+|\widetilde r-\widehat r|<\widehat
r+(\eta^{\dagger}-\widehat r)=\eta^{\dagger}$, the same verdict; if
$\widehat r>\eta^{\dagger}$ then
$\widetilde r\ge\widehat r-|\widetilde r-\widehat r|>\widehat
r-(\widehat r-\eta^{\dagger})=\eta^{\dagger}$, again the same verdict.
\end{proof}

\begin{remark}[Clustered and serially dependent regression errors]\label{rem-cluster}
Applied panel practice clusters standard errors by unit \citep{Arellano1987, BDM}; see \citet{CameronMiller2015} and \citet{MNW2023} for current practice. The cluster-robust version of the diagnostic is not a heuristic rescaling but a theorem: Section~\ref{sec-cluster} establishes a cluster-level score CLT (Lemma~\ref{lem-cluster-clt}), consistency of the Arellano variance estimator in the many-fixed-effect regime (Lemma~\ref{lem-crve}), and the limit $T^{CR}_n(\beta_0)\Rightarrow N(\eta/\sqrt{\psi},1)$ (Theorem~\ref{thm-cluster}). Clustering enters through the denominator alone, because the attenuation bias in the numerator of $\eta$ is a functional of $\E[\nu'M_{K_n}\nu]$ and $\nu\perp u$ by Assumption~\ref{ass-nu}(i); the feasible consequence is
\[
|\eta_{CR}| \;=\; \frac{|\eta|}{\sqrt{\psi}}, \qquad
|\widehat\eta_{CR}| \;=\; \frac{|\widehat\eta|}{\sqrt{\hpsi}}, \qquad
\lambda^{\dagger}_{CR} \;=\; \frac{t^*_n/\sqrt{\hpsi}}{\,t^*_n/\sqrt{\hpsi} + \eta^{\dagger}(\alpha,\delta)\,},
\]
with $\hpsi = \tau^{*2}_n\widehat V_{CR}/\hsigma^{*2}_{\mathrm{CJN}}$ the reported variance-inflation factor. The first display is the population target and the second the computed diagnostic; they are separate objects, since $\hpsi$ estimates $\psi\sigma^2/\varsigma^2$ rather than $\psi$ (Corollary~\ref{cor-cluster-feasible}(a), Remark~\ref{rem-psi-sign}). Corollary~\ref{cor-cluster-feasible} shows this recipe is legitimate even though, under cluster dependence, neither $\hpsi$ nor the feasible $|\widehat\eta|$ is individually consistent for its i.i.d.\ target: the two inconsistencies are the same factor and cancel. Equivalently, the cluster-robust diagnostic is the i.i.d.\ diagnostic run on the reported cluster-robust $t$-statistic.

Three conditions must be checked, and all three are cheap. (1) \emph{Many clusters}: $G_n\to\infty$ with no dominant cluster; the few-clusters regime is not covered (Remark~\ref{rem-cluster-scope}). (2) \emph{Projection compatibility}: fixed effects nested within clusters cost nothing, while Lemma~\ref{lem-nest}(b) supplies the coarse sufficient condition $d^{\mathrm{ne}}_n/G_n\to0$ for a balanced non-nested block. The direct sample analogue is more informative:
\[
\widehat\chi_{\mathrm{proj}}
:=\frac{\sum_g\|Ma^{(g)}-a^{(g)}\|^2}{\tau^{*2}_n}.
\]
Although the conservative rank ratio is $d^{\mathrm{ne}}_n/G_n=59/163\approx0.36$ for V-Dem, direct computation gives $\widehat\chi_{\mathrm{proj}}=0.0069$--$0.0073$ across its three static treatments. The realized loadings therefore satisfy the compatibility requirement closely; this finite-panel diagnostic supports the reported cluster columns but, by itself, does not prove the asymptotic sequence condition. The twin-pair application is i.i.d.\ and has no cluster column. (3) \emph{No degrees-of-freedom correction}: the conventional $\tfrac{n-1}{n-K_n}$ factor inflates the CRVE by an asymptotic factor $1/(1-\rho)$ and must be omitted (Lemma~\ref{lem-crve}).

Finally, the direction of the adjustment is an empirical matter and must be read off $\hpsi$ rather than assumed: clustering deflates the measured distortion only when the regressor is persistent as well as the error, and otherwise amplifies it (Remark~\ref{rem-psi-sign}). It deflates it in the V-Dem application, where $\hpsi$ runs from $19.2$ to $25.2$.
\end{remark}

\subsection{Protocol}\label{sec-protocol}

\begin{framed}
\noindent\textbf{Diagnostic protocol.} Given a saturated FE regression of $Y$ on observed $X^*$:
\begin{enumerate}[label=\arabic*., nosep, start=0]
\item \emph{Scope check.} Verify that $X^*$ is a continuous regressor whose measurement error is plausibly classical (Assumption~\ref{ass-nu}). If $X$ is binary, \textbf{stop}: misclassification error is nonclassical by construction and the diagnostic does not apply (Section~\ref{sec-scope}).
\item Record $\hrho_n = d_{K_n}/n$, $\tau^{*2}_n = X^{*\prime} M_{K_n} X^*$, $\hsigma^*_{\mathrm{CJN}}$, and $\hbeta^*_K$ from the regression output.
\item Obtain a noise pilot: $\hsigma_\nu^2$ from a measurement model, a validation study, or a within-reliability grid (Remark~\ref{rem-est}); set $\widehat a_n = \hsigma_\nu^2 (n - d_{K_n})$ and $\hlambda_n = 1 - \widehat a_n / \tau^{*2}_n$. If $\hlambda_n\le0$, flag the specification and stop; the corrected pilot and certificate require positive within reliability.
\item Correct the coefficient pilot: $\hbeta_0^{\mathrm{corr}} = \hbeta^*_K / \hlambda_n$ (Proposition~\ref{prop-pilot}).
\item \emph{Descriptive point verdict.} Compute the feasible non-centrality $|\widehat\eta| = (|\hbeta_0^{\mathrm{corr}}|/\hsigma^*_{\mathrm{CJN}})\,(1-\hlambda_n)\sqrt{\tau^{*2}_n}$ and compare it to the exact-inversion root $\eta^{\dagger}(\alpha,\delta)$ (or the quadratic $\eta^{\mathrm{quad}}=\sqrt{\delta/(z_{1-\alpha/2}\phi(z_{1-\alpha/2}))}$); equivalently compare $\hlambda_n$ to the breakdown $\lambda^{\dagger} = t^*_n/(t^*_n+\eta^{\dagger})$ of Definition~\ref{def-breakdown}, or the distortion $z_{1-\alpha/2}\phi(z_{1-\alpha/2})|\widehat\eta|^2$ to $\delta$, or $\tau^{*2}_n - \widehat a_n$ to $\tau^2_{\mathrm{crit}}$ of~\eqref{eq-cv}. This point verdict is descriptive.
\item \emph{Cluster-robust verdict.} If the reported inference clusters standard errors, form the variance-inflation factor $\hpsi$ from the CRVE --- without the $\tfrac{n-1}{n-K_n}$ small-sample factor, which over-corrects by $1/(1-\rho)$ (Lemma~\ref{lem-crve}) --- and repeat step~4 with $|\widehat\eta|/\sqrt{\hpsi}$ and $\lambda^{\dagger}_{CR}$ (Theorem~\ref{thm-cluster}). First check projection compatibility. The direct route is to evaluate $\sum_g\|Ma^{(g)}-a^{(g)}\|^2/\tau^{*2}_n$, which costs one pass over the residualized regressor; the shortcut is to verify many clusters together with $d^{\mathrm{ne}}_n/G_n$ small, where $d^{\mathrm{ne}}_n$ counts fixed effects not nested within clusters, but that reduction is licensed only under the balance conditions (N1)--(N2) of Lemma~\ref{lem-nest}(b) and should not be used without them. Report both verdicts; when they differ, the cluster-robust one is the one that matches the reported standard errors.
\item \emph{Formal certificate (conservative; the only step that yields a certificate).} Let $t$ be the operative absolute reported $t$-statistic---cluster robust when step~5 applies. Choose a coefficient-uncertainty budget $\gamma_\beta$ and compute
\[
\lambda^{\dagger}_{\gamma_\beta,n}
=\frac{|t|+z_{1-\gamma_\beta}}
{|t|+z_{1-\gamma_\beta}+\eta^{\dagger}(\alpha,\delta)}.
\]
Compare it with a defensible lower bound $\ell_n$ on within reliability. If reliability is known or consistently estimated, take $\ell_n=\hlambda_n$ and $\gamma_\lambda=0$; for a noisy finite-sample pilot, use a $(1-\gamma_\lambda)$ lower confidence bound. Certify iff $\ell_n\ge\lambda^{\dagger}_{\gamma_\beta,n}$. Proposition~\ref{prop-certificate} then bounds false certification by $\gamma_\beta+\gamma_\lambda$, without requiring the two bounds to be independent. A noisy reliability point estimate inserted as if known has no such guarantee (Design~6).
\item Report $\hlambda_n$ (the power tax of Proposition~\ref{prop-power}), the lower bound $\ell_n$ and uncertainty budgets used for any certificate, and the sensitivity band over $\lambda$ alongside the verdict. State in every table and figure whether the entry is a point pass, a certificate, or a cluster-rescaled reading. If the specification fails, use the Griliches--Hausman estimators or a fixed-effect Anderson--Rubin confidence set (companion work in progress).
\end{enumerate}
\end{framed}

Unlike the IV case, no universal numerical table for $\tau^2$ can exist. Stock--Yogo's IV thresholds are unit-free constants; here the primitive threshold is data-dependent through $|\beta_0|/\sigma$ and the external noise scale. What can be tabulated is the breakdown reliability as a function of the reported $t$-statistic, as Table~\ref{tab-breakdown-lookup} does. The feasible one-line check~\eqref{eq-feasible} together with Definition~\ref{def-breakdown} then delivers the verdict from regression output plus the noise pilot. When a numerical threshold for $\tau^2$ itself is wanted, it should be read from the exact inversion~\eqref{eq-cv-exact}; the quadratic closed form~\eqref{eq-cv} is mildly anti-conservative ($\eta^{\dagger} < \eta^{\mathrm{quad}}$), with exact-to-quadratic threshold ratios from about $1.01$ far from the vanishing boundary to $1.34$ near it.

\section{Simulations}\label{sec-sim}

Six Monte Carlo designs assess the calibration, threshold accuracy, plug-in correction, finite-$n$ mapping, cluster-robust extension, and false-certification guarantee. Designs~1, 4, 5, and 6 are reported here; Designs~2 and 3 are supporting checks in the Online Supplement. Designs~1--4 and~6 share a two-way FE DGP (units $i=1,\dots,N$, periods $t=1,\dots,T$, $d_K=N+T-1$, $\rho_n=(N+T-1)/(NT)$). Fixed effects are removed by the within transformation, so we generate $X_{it}=s\,\varepsilon_{it}$ with $s=\sqrt{\tau^2/n}$, making the primitive $nQ_{K_n}=\tau^2$ hit the target grid exactly, $u_{it}\sim N(0,\sigma^2)$, and $\nu_{it}\sim N(0,\sigma_\nu^2)$. Under this calibration Lemma~\ref{lem-clt} predicts $\E[X'M_{K_n}X]=(1-\rho_n)\tau^2$, not $\tau^2$ --- a prediction the simulations test directly (Design~1). We state the calibration convention explicitly because the alternative, $s^2=\tau^2/(n-d_K)$ (which holds $\E[X'M_{K_n}X]=\tau^2$), makes the $(1-\rho)$ discount invisible to the simulation and so cannot adjudicate it. Design~5 replaces the i.i.d. structural errors with within-cluster AR(1) dependence and is described separately below. Seeds are fixed and the Julia scripts are released with the paper. Reported table numbers use $4{,}000$--$5{,}000$ replications per cell; the calibration figure (Figure~\ref{fig-design1}) and Design~1 use $6{,}000$, while the tail-probability experiment in Design~6 uses $20{,}000$.

\subsection{Design 1 (calibration)} With $\sigma_\nu^2=c^2/n$, we verify that the empirical mean of $T_n^{\mathrm{CJN}*}$ tracks the $\eta_n$ of Theorem~\ref{thm-noncentral} and that empirical size matches the exact non-central prediction $\Phi(-z-\eta_n)+1-\Phi(z-\eta_n)$ across the $(\tau^2,c^2)$ grid. At $N=200$, $T=5$ ($\rho_n=0.204$), $|\beta_0|/\sigma=1$: for $(\tau^2,c^2)=(1,5)$, $\eta_n=-1.82$, mean $T = -1.79$, size $0.433$ (predicted $0.445$); for $(5,1)$, $\eta_n=-0.36$, mean $T=-0.36$, size $0.069$ (predicted $0.065$); for $(10,5)$, $\eta_n=-1.15$, mean $T=-1.17$, size $0.218$ (predicted $0.210$). The empirical within reliability matches the $\rho$-free $\lambda_n$ of~\eqref{eq-lambda} across the grid (maximum deviation $0.003$), and the empirical Hessian ratio $X'M_{K_n}X/(nQ_{K_n})$ sits on $1-\rho_n=0.796$ in every cell --- squarely on the limit of Lemma~\ref{lem-clt}, and far from the naive value $1$. The empirical mean and rejection frequency of $T_n^{\mathrm{CJN}*}$ match the $N(\eta_n,1)$ prediction across the grid (Figure~\ref{fig-design1}), confirming Theorem~\ref{thm-noncentral}.

\begin{figure}[htbp]\centering
\includegraphics[width=0.86\linewidth]{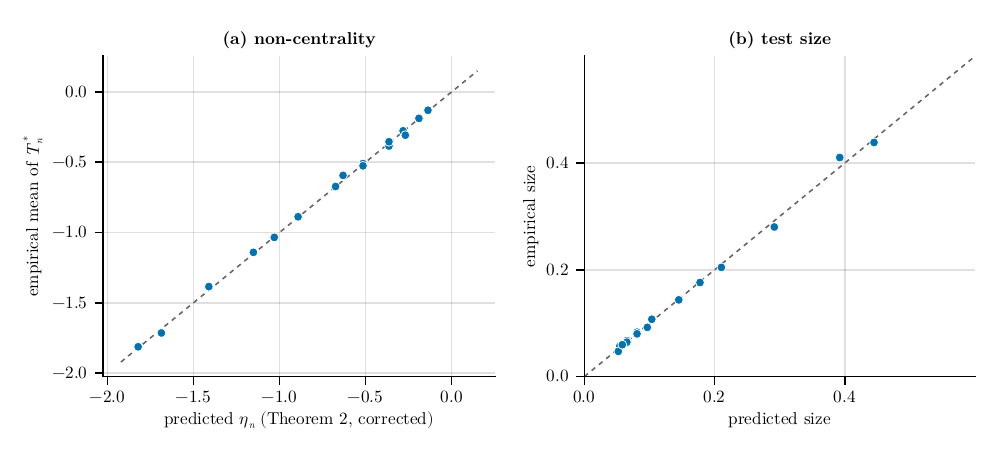}
\caption{Design 1 calibration ($N=200$, $T=5$, $6{,}000$ replications per cell over $\tau^2\in\{1,2,5,10\}$, $c^2\in\{0.5,1,2,5\}$). (a) empirical mean of the CJN $t$-statistic against the theoretical non-centrality $\eta_n$; (b) empirical size against the exact non-central prediction. Points lie on the $45^\circ$ line (dotted), confirming $T_n^{\mathrm{CJN}*}\Rightarrow N(\eta_n,1)$.}\label{fig-design1}
\end{figure}

\subsection{Designs 2 and 3 (supplement)} Two supporting checks are reported in full in the Online Supplement. Design~2 (threshold accuracy) sweeps $\tau^2$ and locates the empirical size-crossing: the corrected closed-form threshold~\eqref{eq-cv} matches the empirical threshold to within Monte Carlo accuracy (about $5\%$ on average, at most $9\%$ in any cell) with no systematic dependence on $(N,T,c^2)$, whereas the discarded linear surrogate over-demands the residual signal by a factor of $2.5$--$2.8$ and is anti-conservative at larger $\delta$. Design~3 (pilot sensitivity) confirms Proposition~\ref{prop-pilot}(i) on simulated data --- the naive attenuated pilot passes the point diagnostic for a specification whose true size is $16\%$, which the corrected pilot correctly flags --- and shows that misspecifying the noise pilot by a factor $k\in\{0.5,\dots,2\}$ maps monotonically and boundedly into the distortion estimate, motivating a within-reliability sensitivity band around the breakdown reliability (Definition~\ref{def-breakdown}).

\subsection{Design 4 (the finite-$n$ mapping check)} For each cell we fix a finite, non-drifting numerical $\sigma_\nu^2$ (a single constant per cell, not an asymptotic sequence) chosen to hit a target within reliability $\lambda_n\in\{0.95,0.9,0.8,0.65,0.5,0.3,0.2\}$, generate finite-sample data, and compare the exact empirical size to both the quadratic prediction $\alpha+z_{1-\alpha/2}\phi(z_{1-\alpha/2})\eta_n^2$ and the exact non-central prediction, with $\eta_n$ read off the finite-$n$ mapping~\eqref{eq-eta-n} (no drifting sequence appears in that mapping). Figure~\ref{fig-design4} and Table~\ref{tab-design4} show that the exact non-central prediction matches empirical size at every $\lambda_n$, and that size depends on the design only through $\lambda_n$. Reproducing the same $\lambda_n$ at $n=1000$ and $n=5000$ requires $\sigma_\nu^2=\tau^2(1-\lambda_n)/(\lambda_n n)=O(1/n)$, which is exactly the local drift, so the coincidence across $n$ is not a claim of robustness to a genuinely fixed $\sigma_\nu^2$ as $n\to\infty$ (under which $\lambda_n\to0$ and size $\to1$); it confirms that $\lambda_n$ is the sufficient statistic the drift makes it, and that the finite-$n$ non-central mapping is accurate at fixed finite noise. The drift is thus an \emph{approximation device}, not an artifact of the asymptotics. The quadratic prediction is accurate for $\lambda_n\ge0.8$, under-predicts mildly in the band $\lambda_n\in[0.5,0.65]$ (by at most $1.5$ percentage points), and over-predicts in the far tail ($\lambda_n\le0.3$); once $\eta_n$ is large the exact inversion~\eqref{eq-cv-exact} should be used, the behavior anticipated in Section~\ref{sec-drift}. The new $\lambda_n=0.20$ cell reaches $|\eta_n|=3.58$: empirical size is $0.943$ at $n=1000$ and $0.939$ at $n=5000$, against the exact-normal predictions $0.946$ and $0.947$. This extends the calibrated range past the twin application of Section~\ref{sec-app-twins}.

\begin{figure}[htbp]\centering
\includegraphics[width=0.66\linewidth]{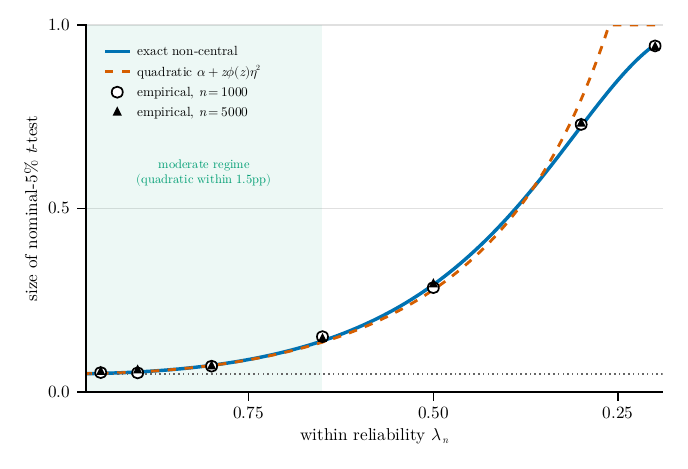}
\caption{Design 4 (fixed-noise validation). Size of the nominal-$5\%$ CJN $t$-test under fixed finite-$n$ noise (a single $\sigma_\nu^2$ per cell), against within reliability $\lambda_n$ (worsening to the right). The exact non-central prediction (solid) tracks the empirical size at both $n=1000$ and $n=5000$, which coincide, through $|\eta_n|=3.58$ and predicted size $0.95$. The quadratic prediction $\alpha+z\phi(z)\eta_n^2$ (dashed) tracks the exact curve to within $1.5$ percentage points in the moderate regime ($\lambda_n\ge0.65$, shaded) and departs from it once attenuation is severe (under-predicting near $\lambda_n=0.5$, over-predicting in the far tail). The drift is an approximation device, not an artifact.}\label{fig-design4}
\end{figure}

\begin{table}[htbp]\centering
\caption{Design 4. Exact finite-sample size of the nominal-$5\%$ CJN $t$-test under fixed finite-$n$ noise (a single $\sigma_\nu^2$ per cell) at target within-reliability $\lambda_n$, versus the quadratic and exact non-central predictions. DGP: two-way FE, $\tau^2=5$, $|\beta_0|/\sigma=1$; $5{,}000$ replications. The exact prediction tracks empirical size to within about one percentage point at every $\lambda_n$; size is invariant to $n$ at fixed $\lambda_n$. The quadratic prediction matches through $\lambda_n=0.8$, under-predicts mildly at $\lambda_n\in\{0.65,0.5\}$, and over-predicts in the far tail. The $\lambda_n=0.20$ cell validates the exact mapping through $|\eta_n|=3.58$.}\label{tab-design4}
\begin{adjustbox}{max width=\textwidth}
\begin{tabular}{lcccc}
\toprule
$\lambda_n$ & $n$ & empirical size & quadratic pred. & exact pred. \\
\midrule
$0.95$ & $1000$ & $0.053$ & $0.051$ & $0.051$ \\
$0.80$ & $1000$ & $0.071$ & $0.073$ & $0.073$ \\
$0.65$ & $5000$ & $0.149$ & $0.136$ & $0.140$ \\
$0.50$ & $5000$ & $0.290$ & $0.279$ & $0.293$ \\
$0.30$ & $5000$ & $0.726$ & $0.798$ & $0.724$ \\
$0.20$ & $5000$ & $0.939$ & $1.000$ & $0.947$ \\
\bottomrule
\end{tabular}
\end{adjustbox}
\end{table}

\subsection{Design 5 (cluster-robust validation)}\label{sec-design5}
The cluster theory of Section~\ref{sec-cluster} is validated on a unit-and-time FE design with $G$ units (which are also the clusters), $T$ periods, within-unit AR(1) errors, and clustering by unit, so that unit effects are nested in clusters and the $T-1$ time effects are not. Seven predictions are checked against \texttt{verify\_cluster\_clt.py} and \texttt{verify\_cluster\_high\_eta.jl} ($2{,}500$--$4{,}000$ replications per configuration), all confirmed to within Monte Carlo error: \emph{(i)} the limiting law itself, matching to within $0.04$ in mean and $0.03$ in standard deviation across configurations including strongly persistent designs; \emph{(ii)} $O(1/G_n)$ CRVE consistency, with $\widehat V^{\mathrm{sc}}_{CR}/\Psi_n$ rising from $0.981$ at $G=50$ to $0.998$ at $G=800$; \emph{(iii)} that breaking projection compatibility breaks the conclusion exactly as Lemma~\ref{lem-nest}(b) predicts, the condition required for the clustered application of Section~\ref{sec-app-vdem}; \emph{(iv)} that the conventional small-sample factor over-corrects by the predicted $1/(1-\rho_n)$, agreeing to three digits; \emph{(v)} the $\sigma$-cancellation of Corollary~\ref{cor-cluster-feasible} despite both of its ingredients being individually inconsistent; \emph{(vi)} that the sign of $\psi-1$ depends jointly on regressor and error persistence rather than on either alone --- three configurations at $G=200$ give $\psi=0.75$, $1.27$ and $1.72$, as Remark~\ref{rem-psi-sign} describes; and \emph{(vii)} the high-noncentrality mapping, where $|\eta_{CR}|=3.67$ gives empirical size $0.959$ against $0.956$ predicted. Full numerical details are in \ref{sec:supp-design5}.

\subsection{Design 6 (false-certification control)}\label{sec-design6}
This design targets Proposition~\ref{prop-certificate} directly. In four balanced two-way FE cells spanning $N\in\{120,240\}$, $T\in\{5,10\}$, $\tau^2\in\{2,8\}$, and $\lambda\in\{0.80,0.90\}$, we choose $\beta_0$ so that the true exact-normal non-centrality is $\eta^{\dagger}+0.01=0.6624$. Equivalently, true reliability lies $0.0014$--$0.0024$ below the oracle population counterpart of the point breakdown reliability in every cell. Its implied size is $0.1016$, just above the admissible $0.10$, so every certificate issued is false by construction. With known reliability and $\gamma_\beta=0.05$, the false-certification frequencies over $20{,}000$ replications per cell range from $0.0427$ to $0.0474$, never exceeding $0.05$. A noisy reliability point estimate treated as known does not inherit that control: with pilot standard error $0.020$, the rate ranges from $0.0598$ to $0.1935$. Using instead a $97.5\%$ lower confidence bound for reliability and assigning $\gamma_\beta=\gamma_\lambda=0.025$ keeps the total nominal bound at $0.05$; realized rates range from $0.0032$ to $0.0061$ (and from $0.0146$ to $0.0157$ with pilot standard error $0.005$). Thus the theorem's bound is respected, while the tempting noisy-point-pilot shortcut fails precisely where the amended rule says it can. Design~5 validates the clustered pivot to which the same algebra applies; Design~6 isolates the certificate's coverage event in the i.i.d.\ branch. The complete table and code are in Online Supplement Section~S7.

\section{Applications}\label{sec-app}

\subsection{Democracy and growth}\label{sec-app-vdem}

The ideal application has three properties: the regression is canonical and FE-saturated; the regressor is known to be noisily measured; and, rarest of all, the noise variance is externally observable rather than assumed. Country-year democracy--growth panels have all three. The specification of \citet{ANRR} regresses log GDP per capita on a democracy measure with country and year fixed effects (plus lags); the democracy regressor is constructed from expert-coded indices whose disagreement is well documented, and \citet{ANRR} themselves treat measurement error as a first-order concern, addressing it by instrumenting. Crucially, the V-Dem polyarchy index and the Unified Democracy Scores \citep{PMM} publish observation-level posterior standard deviations from an explicit measurement model: $\hsigma_{\nu, it}$ is data, not a calibration.

\emph{The within reliability is invariant to the outcome equation; the verdict is not.} The within reliability $\hlambda_n=1-\widehat a_n/\tau^{*2}_n$ depends only on the regressor's measurement model and the fixed-effect design, so the same $\hlambda_n$ applies to the static regression we run, to the dynamic specification of \citet{ANRR}, and to any other regression built on the same measure and FE structure. The verdict does not: it compares $\hlambda_n$ to a breakdown reliability $\lambda^{\dagger}$ that depends on the outcome-specific $t$-statistic and must be recomputed per specification (and Theorem~\ref{thm-noncentral}, derived for the static within regression, does not automatically transfer to dynamic GMM). Since the harder-won ingredient is the one that transfers, a reader can carry $\hlambda_n$ to their own growth equation and recompute the one-line breakdown reliability there; we report a transparent static regression to make that reading as clean as possible.

The exercise: run the country-and-year FE regression of log GDP per capita \citep{Maddison2020} on a continuous V-Dem index \citep{VDem16} as the treatment, 1960--2018; record $\hrho_n$, $\tau^{*2}_n$, $\hbeta^*$, $\hsigma^*_{\mathrm{CJN}}$; set $\widehat a_n = \overline{\hsigma^2_{\nu, it}}\,(n - d_{K_n})$ from the published posterior standard deviations, recovered as $\hsigma_{\nu,it}=(\text{codehigh}_{it}-\text{codelow}_{it})/2$; and run the protocol of Section~\ref{sec-protocol}.\footnote{The adequacy diagnostics in this section are computed \ifanon with an open-source software package (name and repository withheld for review) (the \texttt{eiv\_adequacy} routine)\else with the open-source \textsf{PanelAdequacy} software \citep{PanelAdequacySoftware} (the \texttt{eiv\_adequacy} routine)\fi; every entry in Table~\ref{tab-vdem} reproduces from the bundled V-Dem panel and supplementary replication archive. The twin-pair calculation uses the public repeated-report extract and the separate script described in \ref{sec:supp-twins}. One convention matters for V-Dem: Lemma~\ref{lem-crve} shows the CRVE must be formed without the $\tfrac{n-1}{n-K_n}$ small-sample factor, which inflates it by an asymptotic $1/(1-\rho)$; no V-Dem verdict turns on it, but users reproducing these numbers with default software settings (which apply the factor) should expect the discrepancy quantified in \ref{sec:supp-footnote}.}

The result, in Table~\ref{tab-vdem} and Figure~\ref{fig-vdem}, is a two-pole contrast within a single dataset and a single estimator, with an instructive middle case that only the cluster-robust rescaling exposes. The aggregate polyarchy index obtains a \emph{point pass}: within reliability $\hlambda_n=0.90$, above its breakdown reliability of $0.76$, implied size $5.6\%$ under i.i.d.\ standard errors and $5.0\%$ under the country-clustered standard errors an applied author would report. At the opposite pole, judicial constraints (a component on which expert coders disagree most) is flagged under the operative country-clustered inference: its point non-centrality is $|\widehat\eta_{CR}|=2.4$, nearly four times the breakdown value $\eta^{\dagger}=0.65$, even after the standard error is inflated by $\sqrt{\hpsi}\approx5$ ($\hpsi=25.2$). The i.i.d.\ comparison is still more extreme---$\hlambda_n=0.41$ against a breakdown reliability of $0.93$, a shortfall of $0.52$ in within reliability, and $|\widehat\eta|\approx12$---but is secondary because applied practice clusters by country. Legislative constraints is the middle case: flagged under i.i.d.\ standard errors ($\hlambda_n=0.55$ against a breakdown reliability of $0.62$, implied size $14\%$) but obtaining a point pass under the country-clustered standard errors an applied author would actually report ($\hpsi=24.0$, $\lambda^{\dagger}_{CR}=0.25$). The direct projection-compatibility ratios are only $0.0069$--$0.0073$ across the three treatments, so the realized design supports applying Theorem~\ref{thm-cluster} despite the conservative rank ratio $d_n^{\mathrm{ne}}/G_n=0.36$. Legislative constraints clears the cluster point threshold by $0.30$ in within reliability but not the certified threshold $0.74$. Its i.i.d.\ flag is also fragile: Proposition~\ref{prop-equicorr} shows that an equicorrelation share of about $\omega^{*}=0.26$ would overturn it. We therefore report the standardization-dependent middle row without leaning on it substantively. We lean heavily on the judicial-constraints flag, which remains far beyond the threshold after clustering.

\emph{On the implied-size column at large $|\eta|$.} The entries are exact non-central sizes~\eqref{eq-cv-exact} evaluated at the design-computed point non-centrality, not exact finite-sample rejection probabilities. Design~4 calibrates the i.i.d.\ local-drift mapping through $\lambda_n=0.20$, where $|\eta_n|=3.58$ and implied size reaches $0.95$. Design~5 reaches $|\eta_{CR}|=3.67$ under clustered errors, with empirical size $0.959$ against $0.956$ predicted. The cluster-rescaled judicial-constraints point, $|\widehat\eta_{CR}|=2.4$ with implied size $0.67$, lies inside both calibrated ranges; the i.i.d.\ point $|\widehat\eta|\approx12$ lies far outside them, so its size entry remains ordinal. Thus $0.67$ is a quantitative point approximation under the combined theory rather than an exact finite-sample claim. The binary verdict is considerably less demanding: with $\eta^{\dagger}=0.652$, Proposition~\ref{prop-verdict-margin} shows that the i.i.d.\ magnitude $12.10$ would have to fall by more than $11.45$ (a $94.6\%$ reduction), and the cluster magnitude $2.41$ by more than $1.76$ (a $72.9\%$ reduction), before either crosses the exact-normal threshold. These are robustness margins within the limit experiment, not finite-sample approximation-error bounds.

The mechanism is transparent: aggregate indices average over many indicators and so average out coding disagreement (small $\hsigma_\nu^2$), whereas the disaggregated constraint sub-indices retain it ($\hsigma_\nu^2$ three to seven times larger). The anti-conservative pilot of Proposition~\ref{prop-pilot} is visible in the table: the naive $|\hbeta^*|$ understates the corrected $|\hbeta_0^{\mathrm{corr}}|$ by the factor $\hlambda_n$ in every row. The economics is secondary here; the point is that the diagnostic is computable from published uncertainty estimates on a specification the literature cares about, and that it discriminates: one canonical regressor obtains a point pass with margin, a second is flagged under both standardizations, and a third changes from a flag to an uncertified point pass when inference is clustered as applied practice requires.

\begin{table}[htbp]\centering
\caption{Lead application. EIV adequacy diagnostic on a country--year democracy--growth panel (V-Dem regressors, Maddison log GDP per capita, country and year FE, 1960--2018, $N=163$ countries, $n\approx 8.5$--$8.9$k, $\hrho_n\approx0.025$). Posterior noise $\hsigma_{\nu,it}$ from the V-Dem measurement model. ``Breakdown'' is the fixed-point $\lambda^{\dagger}$ of Definition~\ref{def-breakdown} at $\delta=0.05$; a specification obtains a \emph{point pass} iff $\hlambda_n\ge\lambda^{\dagger}$. The ``verdict'' columns report point passes and flags. The formal certificate of Proposition~\ref{prop-certificate} is a separate, stronger object: conditional on treating the model-implied reliability as supplied ($\gamma_\lambda=0$), only aggregate polyarchy clears the certified breakdown at $\gamma_\beta=0.05$ (see the discussion following the table). ``Size'' is the local-drift implied size of the nominal-$5\%$ $t$-test, not an exact finite-sample rejection probability. Design~4 calibrates this mapping through implied size $0.95$, which covers the judicial-constraints cluster point ($0.67$) but not its i.i.d.\ point ($\ge0.99$); see the discussion following the table. The last three columns apply the cluster-robust diagnostic of Theorem~\ref{thm-cluster}: $\hpsi$ is the CRVE variance-inflation factor and the verdict compares $\hlambda_n$ to $\lambda^{\dagger}_{CR}$. Although the coarse sufficient-condition ratio is $d^{\mathrm{ne}}_n/G_n=59/163\approx0.36$, the direct realized projection ratios $\widehat\chi_{\mathrm{proj}}$ are $0.0069$--$0.0073$; this supports the cluster columns for this design but is not a proof of the asymptotic sequence condition.}\label{tab-vdem}
\begin{adjustbox}{max width=\textwidth}
\begin{tabular}{lcccccccc}
\toprule
& \multicolumn{2}{c}{pilots} & \multicolumn{3}{c}{i.i.d.\ standard errors (theory)} & \multicolumn{3}{c}{clustered by country (rescaling)} \\
\cmidrule(lr){2-3}\cmidrule(lr){4-6}\cmidrule(lr){7-9}
V-Dem treatment & $\hlambda_n$ & $|\hbeta^*|\,/\,|\hbeta_0^{\mathrm{corr}}|$ & $\lambda^{\dagger}$ & size & verdict & $\hpsi$ & size$_{CR}$ & verdict$_{CR}$ \\
\midrule
Polyarchy (aggregate)   & $0.90$ & $0.061\,/\,0.068$ & $0.76$ & $5.6\%$ & point pass       & $19.2$ & $5.0\%$ & point pass \\
Legislative constraints & $0.55$ & $0.024\,/\,0.045$ & $0.62$ & $14\%$  & \textbf{flagged} & $24.0$ & $5.4\%$ & point pass\textsuperscript{$\S$} \\
Judicial constraints    & $0.41$ & $0.224\,/\,0.543$ & $0.93$ & $\ge0.99$ & \textbf{flagged} & $25.2$ & $\approx67\%$ & \textbf{flagged} \\
\bottomrule
\end{tabular}
\end{adjustbox}
\par\smallskip
\begin{minipage}{0.95\textwidth}\footnotesize\raggedright
\textsuperscript{$\S$} Point pass under the cluster-robust diagnostic, flagged under the i.i.d.\ one. Since inference in this literature clusters by country, the cluster-robust column is operative; the point pass is not a formal certificate. Proposition~\ref{prop-equicorr} independently shows that the i.i.d.\ flag is sensitive to persistent coder error.
\end{minipage}
\end{table}

\begin{figure}[htbp]\centering
\includegraphics[width=0.72\linewidth]{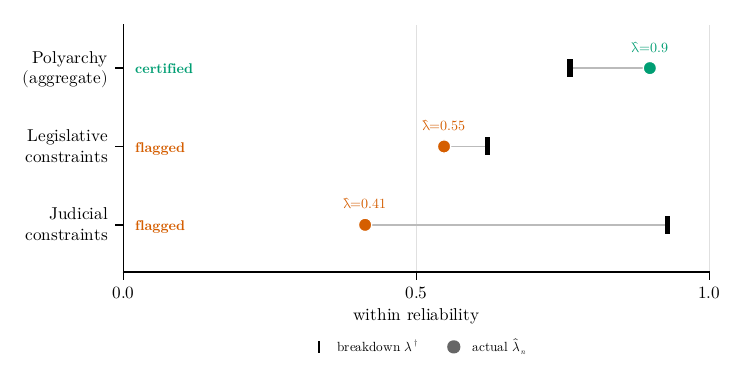}
\caption{The two-pole diagnostic on the V-Dem panel. For each treatment, the filled point is the actual within reliability $\hlambda_n$ and the vertical tick is the breakdown reliability $\lambda^{\dagger}$ (Definition~\ref{def-breakdown}, $\delta=0.05$). A specification obtains a \emph{point pass} iff $\hlambda_n\ge\lambda^{\dagger}$ --- a descriptive verdict, not a formal certificate: the aggregate polyarchy index clears its breakdown reliability, while the institutional-constraint sub-indices fall below theirs.}\label{fig-vdem}
\end{figure}

Three caveats accompany the table. \emph{First}, expert-coding error may be serially dependent within country, violating the i.i.d.\ assumption on $\nu$. The equicorrelated extension of Section~\ref{sec-ext} covers this, and its direction is one-sided: the marginal-variance diagnostic overstates the noise, so a point pass survives a fortiori under any within-country persistence, while a flag is conclusive only given a bound on the common-component share $\omega$. Proposition~\ref{prop-equicorr} makes the bound exact. Writing $\hlambda_n$ for the within reliability under the i.i.d.\ benchmark and $\lambda_\omega$ for the effective within reliability, $(1-\lambda_\omega)/\lambda_\omega=(1-\omega)(1-\hlambda_n)/\hlambda_n$, so a baseline flag is overturned when
\[
\omega\ge\omega^*:=1-\frac{\hlambda_n}{1-\hlambda_n}\frac{1-\lambda^\dagger}{\lambda^\dagger}.
\]
This gives $\omega^*\approx0.95$ for judicial constraints, so that flag stands unless coder error is almost entirely common within country, against $\omega^*\approx0.25$ for legislative constraints, which is one of the two reasons we lean on that row only lightly. \emph{Second}, the specification is static two-way FE; adding lagged output in the manner of \citet{ANRR} leaves the two-pole contrast intact but changes the magnitudes, since a lagged dependent variable introduces an incidental-parameter bias of different origin that we do not conflate with the measurement-error channel. \emph{Third}, the verdicts in Table~\ref{tab-vdem} are point passes and flags, but aggregate polyarchy also clears the \emph{formal certificate} of Proposition~\ref{prop-certificate}. Conditional on treating the model-implied reliability as supplied ($\ell_n=\hlambda_n$ and $\gamma_\lambda=0$), its certified breakdown reliability at $\gamma_\beta=0.05$ is $0.851$ under i.i.d.\ inference and $0.765$ under the operative country-clustered inference, both below $\hlambda_n=0.898$. Legislative constraints does not clear its cluster-certified threshold $0.741$, and judicial constraints does not clear $0.837$. Thus only polyarchy is certified, with conditional false-certification probability bounded by $5\%$. If uncertainty in the V-Dem reliability calibration is to be included, its lower confidence bound must replace $\hlambda_n$ and its error probability $\gamma_\lambda$ must be added; the claim is not an unconditional validation of the measurement model. Design~6 directly verifies the certificate's false-certification control and shows why a noisy reliability point estimate cannot be silently treated as known. The polyarchy verdict is unchanged when its implied reported $t$ is perturbed by $\pm20\%$ (\texttt{verification/certificate\_calc.py}), so it does not rest on rounding.

\subsection{Repeated schooling reports in identical twins}\label{sec-app-twins}

The second application uses the repeated-report design of \citet{AshenfelterKrueger1994}. Each identical twin reports both siblings' schooling, so the within-pair schooling contrast is measured twice. Let $Y_i$ be the difference in log hourly wages, $X_i^*$ the difference in self-reported schooling, $Z_i^*$ the corresponding difference constructed from the co-twin reports, and $W_i$ an intercept plus the within-pair differences in tenure, marital status, and union coverage. Differencing removes the pair effect before estimation; the $147$ complete twin pairs are the independent sampling units. This is therefore the clean i.i.d.\ branch of the theory, with no serial-clustering rescaling, projection-compatibility condition, or $\psi$.

The public extract reproduces the controlled first-difference regression: the coefficient on $X_i^*$ is $\hbeta^*=0.09088$, its conventional standard error is $0.02198$, and $|t|=4.134$. The self-consistent breakdown reliability is consequently
\[
\lambda^{\dagger}=\frac{4.134}{4.134+0.652358}=0.8637.
\]
For either within-reliability estimate to correspond to this regression, both reports must use the same projection and complete-case sample. Writing $\widetilde X=M_WX^*$ and $\widetilde Z=M_WZ^*$, mutually uncorrelated classical report errors identify
\[
\widehat\lambda_{\mathrm{cov}}
=\frac{\widetilde X'\widetilde Z}{\widetilde X'\widetilde X}=0.5749.
\]
Under the additional restriction that the two reporting errors have equal variances, the difference estimator gives
\[
\widehat\lambda_{\mathrm{eq}}
=1-\frac{(\widetilde X-\widetilde Z)'(\widetilde X-\widetilde Z)}{2\widetilde X'\widetilde X}=0.5514.
\]
Thus the independent-error within-reliability band $[0.551,0.575]$ lies far below the breakdown reliability. It implies $|\widehat\eta|=3.06$--$3.36$ and exact-normal size $86.4\%$--$92.0\%$: the conventional nominal-$95\%$ interval has only $8.0\%$--$13.6\%$ coverage at the true return. This does \emph{not} alter the test of a zero return, since classical error attenuates a zero to a zero. Instead, it shows that the uncorrected interval provides almost no coverage for the magnitude used in calculations of returns to schooling.

Design~4 validates the finite-sample mapping through $|\eta|=3.58$, beyond this application's range (Figure~\ref{fig-design4}).

\begin{table}[htbp]\centering
\caption{Repeated-report diagnostic in the twins design. The first two rows use the controlled $147$-pair public extract from the Ashenfelter--Krueger design ($|t|=4.134$, $\lambda^{\dagger}=0.864$); both schooling reports are residualized on the identical controls and sample. The last row is a sensitivity based on \citet{Rouse1999}'s controlled correlated-report specification ($0.071/0.016$, $445$ complete pairs) and her random-order mean within reliability $0.748$. ``Size'' is the exact-normal implied size of the nominal-$5\%$ test of the true nonzero coefficient; ``coverage'' is one minus that size. The Rouse row uses rounded published inputs and a different report construction designed for person-specific error.}\label{tab-twins}
\begin{adjustbox}{max width=\textwidth}
\begin{tabular}{lcccccc}
\toprule
sample and report-error model & $\hlambda$ & $\lambda^{\dagger}$ & $\hbeta^*/\hlambda$ & $|\widehat\eta|$ & size & coverage \\
\midrule
AK, equal error variances & $0.551$ & $0.864$ & $0.165$ & $3.36$ & $92.0\%$ & $8.0\%$ \\
AK, covariance repeat     & $0.575$ & $0.864$ & $0.158$ & $3.06$ & $86.4\%$ & $13.6\%$ \\
Rouse, correlated-report model & $0.748$ & $0.872$ & $0.095$ & $1.49$ & $32.1\%$ & $67.9\%$ \\
\bottomrule
\end{tabular}
\end{adjustbox}
\end{table}

The coefficient agreement is not independent validation. With the covariance-based within reliability, correcting the estimate is algebraically identical to forming a just-identified IV estimator from the same two reports:
\[
\frac{\hbeta^{\mathrm{OLS}}_X}{\widehat\lambda_{\mathrm{cov}}}
=\frac{(\widetilde X'Y)/(\widetilde X'\widetilde X)}{(\widetilde X'\widetilde Z)/(\widetilde X'\widetilde X)}
=\frac{\widetilde X'Y}{\widetilde X'\widetilde Z}.
\]
In Ashenfelter and Krueger's original no-control moment system, the two directional IV estimates are $0.157$ and $0.167$, and the restricted estimate is $0.162$. Their proximity to the within-reliability-corrected range $0.158$--$0.165$ is therefore largely mechanical. What the original correction did not report, and what Table~\ref{tab-twins} adds, is the implied coverage of the \emph{uncorrected} interval.

The independence qualification matters. \citet{Rouse1999} reproduces the original basic first-difference estimates, $0.092$ ($0.024$) before correction and $0.167$ ($0.043$) after IV correction, for $149$ pairs. In a larger $453$-pair sample she obtains $0.075$ ($0.017$) and $0.095$ ($0.027$), and formally rejects the independent-report model. Her controlled specification for person-specific reporting error instead constructs each schooling contrast from one sibling's reports of both twins. It yields $0.071$ ($0.016$) on $445$ complete pairs, and her random-order calculation gives within reliability $0.748$ when reporting correlation is allowed. The last row of Table~\ref{tab-twins} applies the diagnostic to those corresponding inputs. Because $0.748<0.872$, the specification remains flagged, but the implied size falls from $86$--$92\%$ under independent reports to $32\%$ in this sensitivity. Accordingly, this application contributes the coverage verdict and its sensitivity, not an independent recovery of a published corrected coefficient. \ref{sec:supp-twins} gives the moment audit and exact reproduction details.

\section{Extensions}\label{sec-ext}

\paragraph{Vector treatment.} With $X \in \R^p$, the construction of Theorem~\ref{thm-noncentral} carries over: the Wald statistic for $H_0:\beta=\beta_0$ is asymptotically non-central $\chi^2_p$, with a lower bound on $\lambda_{\min}(nQ_{K_n})$ sufficing for leading-order size control, mirroring \citet[\S 6]{SY}. We record this as the natural generalization. A sharp multivariate eigenvalue threshold requires a separate analysis and is not developed here; the vector-treatment subsection of \ref{sec:supp-ext} records the matrix notation and the boundary of the present claim.

\paragraph{Heteroskedasticity.} Under conditional heteroskedasticity a many-covariate-robust variance estimator replaces the homoskedastic RSS scale. The non-centrality retains its form with $\sigma$ replaced by a design-dependent limit $\omega_*$, coinciding with $\sigma$ under homoskedasticity, and the ``no $\tau^2$ threshold'' conclusion is unchanged. Candidate estimators include the leave-out approaches of \citet{KSS} and \citet{Jochmans2022}, the CJN Hadamard correction, and the almost-unbiased refinement of \citet{Anatolyev2018}; their existence and regularity conditions must be checked before their reported $t$ is used. The breakdown formula itself is estimator-agnostic conditional on that $t$. Recalculation with the CJN and Anatolyev corrections changes no V-Dem or twins verdict; \ref{sec:supp-anatolyev} reports the numbers and the exact implementation. For cluster dependence --- serial correlation of $u$ within unit, the empirically dominant departure in panels --- \citet{AnatolyevNg2026}'s leave-cluster-out estimator is the general alternative, while the corresponding rescaling for the ordinary Arellano estimator under our projection-compatibility conditions is Remark~\ref{rem-cluster}. The exact form of $\omega_*$ is given in \ref{sec:supp-ext}.

\paragraph{Serially dependent measurement error.} The classical assumption that $\nu$ is i.i.d.\ is violated when expert-coding error persists within country (Section~\ref{sec-app-vdem}) or survey error persists within person. The noise covariance enters the mean bias only through $\E[\nu'M_{K_n}\nu]=\tr(M_{K_n}\Sigma_\nu)$, so the bias term is $-\beta_0\lim\tr(M_{K_n}\Sigma_{\nu,n})$ for a general $\Sigma_\nu$. The trace alone does not, however, deliver the quadratic-form concentration or the score CLT that the non-centrality theorem also needs; those require additional structure on $\Sigma_\nu$ beyond its trace. The within-unit-equicorrelated case (the empirically relevant one for persistent coding error) admits such structure, a one-factor model, and has a clean, favorable answer.

\begin{proposition}[Equicorrelated measurement error]\label{prop-equicorr}
Under the remaining conditions of Theorem~\ref{thm-noncentral}, suppose the unit dummies are among the fixed effects and the measurement error follows the one-factor model
\[
\nu_{it}=\sigma_{\nu,n}\big(\sqrt{\omega}\,g_i+\sqrt{1-\omega}\,\epsilon_{it}\big),\qquad \omega\in[0,1),\ \ \sigma_{\nu,n}^2=c^2/n,
\]
with $\{g_i\}$ and $\{\epsilon_{it}\}$ mutually independent, mean zero, unit variance, each satisfying Assumption~\ref{ass-nu}(ii) (so $\{\epsilon_{it}\}$ is i.i.d.\ with $r=2$), and both independent of $(X,u)$ given $\G_{K_n}$. This induces exactly the within-unit-equicorrelated covariance $\Sigma_{\nu,n}=\sigma_{\nu,n}^2[(1-\omega)I_n+\omega B]$, $B_{(it),(js)}=\mathbf 1\{i=j\}$. Then Lemma~\ref{lem-atten}, Lemma~\ref{lem-lev-star}, and Theorem~\ref{thm-noncentral} hold verbatim with $c^2(1-\rho)$ replaced by $(1-\omega)\,c^2(1-\rho)$, giving
\[
\eta_\omega=-\frac{\beta_0\,(1-\omega)\,c^2\sqrt{1-\rho}}{\sigma\sqrt{\tau^2+(1-\omega)\,c^2}},\qquad |\eta_\omega|\le|\eta|\ \text{and non-increasing in }\omega.
\]
\end{proposition}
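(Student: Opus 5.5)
The key observation is that the unit dummies lie in $\mathrm{col}(D_{K_n})$. Write the panel noise as $\nu=\sigma_{\nu,n}(\sqrt{\omega}\,Eg+\sqrt{1-\omega}\,\epsilon)$, where $E$ is the $n\times N$ unit-indicator matrix, so that $B=EE'$. Because $E\in\mathrm{col}(D_{K_n})$, we have $M_{K_n}E=0$. Hence
\[
M_{K_n}\nu=\sigma_{\nu,n}\sqrt{1-\omega}\,M_{K_n}\epsilon .
\]
This holds exactly and in every sample, not merely in expectation. After projection, therefore, the equicorrelated noise is indistinguishable from i.i.d.\ classical noise with variance $\tilde\sigma^2_{\nu,n}:=(1-\omega)\sigma^2_{\nu,n}=\tilde c^2/n$, where $\tilde c^2:=(1-\omega)c^2$. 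The factor $g_i$ never enters any statistic the proofs use.

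The plan is to verify, term by term, that every object in the proofs of Lemma~\ref{lem-atten}, Lemma~\ref{lem-lev-star} and Theorem~\ref{thm-noncentral} depends on $\nu$ only through $M_{K_n}\nu$. The relevant objects are:
\begin{itemize}
\item the Hessian $X^{*\prime}MX^*=X'MX+2X'M\nu+\nu'M\nu$;
\item the bias term $X^{*\prime}M\nu$;
\item the residualized regressor $\widetilde X^*_{K_n}=\tX_{K_n}+M\nu$, which enters the score weights and the leverage bound;
\item the residual $\widehat u^*=M(Y-X^*\hbeta^*_K)$.
\end{itemize}
In each, the substitution $M\nu=\sqrt{1-\omega}\,\sigma_{\nu,n}M\epsilon$ turns the expression into its i.i.d.\ counterpart with noise $\sqrt{1-\omega}\,\sigma_{\nu,n}\epsilon$. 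We then check that this substituted noise satisfies Assumption~\ref{ass-nu}. Independence holds because $\{\epsilon_{it}\}$ is i.i.d., mean zero, and independent of $(X,u)$ given $\G_{K_n}$. The scaled moment bound with $r=2$ holds because the standardized variable is $\epsilon_{it}$ itself. The earlier results then apply verbatim with $c^2$ replaced by $\tilde c^2$. In particular, $\nu'M\nu\to_p\tilde c^2(1-\rho)$ and $X'M\nu=o_p(1)$ (Lemma~\ref{lem-atten}), and $\max_i\widetilde\nu^2_{K_n,i}=o_p(1)$ with vanishing leverage (Lemma~\ref{lem-lev-star}). Theorem~\ref{thm-noncentral} then gives $T_n^{\mathrm{CJN}*}\Rightarrow N(\eta_\omega,1)$, with $\eta_\omega$ equal to $\eta$ evaluated at $\tilde c^2$. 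Every occurrence of $c^2(1-\rho)$ becomes $(1-\omega)c^2(1-\rho)$, as claimed.

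One point needs care. The hypotheses of Theorem~\ref{thm-noncentral} include the conditional independence $u\perp\nu\mid\G_{K_n}$, which is used when conditioning the score on $X^*$. With the factor structure, $X^*$ also carries $g$. The argument should therefore condition on $\sigma(X,g,\epsilon,D_{K_n})$ rather than on $X^*$ alone. Since $u$ is independent of $(g,\epsilon)$ given $\G_{K_n}$, the conditional law of $u$ is unchanged, and the Lyapunov CLT goes through with the same weights $\widetilde X^*_{K_n}$. This is the step I expect to need the most checking: we must confirm that no part of the supplement's conditioning argument uses the unprojected $\nu$, for example a moment bound on $\max_i|\nu_i|$ rather than on $\max_i|\widetilde\nu_i|$. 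If such a use appears, the moment condition imposed on $g_i$ in the hypothesis covers it anyway.

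Finally, monotonicity. Write $x:=(1-\omega)c^2\in(0,c^2]$. Then $|\eta_\omega|=(|\beta_0|\sqrt{1-\rho}/\sigma)\,f(x)$ with $f(x)=x/\sqrt{\tau^2+x}$, and
\[
f'(x)=\frac{x/2+\tau^2}{(\tau^2+x)^{3/2}}>0 .
\]
So $f$ is increasing in $x$, and hence $|\eta_\omega|$ is non-increasing in $\omega$. At $\omega=0$ it equals $|\eta|$, which gives $|\eta_\omega|\le|\eta|$.
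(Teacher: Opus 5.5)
Your proposal is correct and follows the paper's proof. Both rest on the same annihilation identity $M_{K_n}\nu=\sigma_{\nu,n}\sqrt{1-\omega}\,M_{K_n}\epsilon$, which holds because the unit dummies lie in the fixed-effect span; this reduces Lemma~\ref{lem-atten}, Lemma~\ref{lem-lev-star} and Theorem~\ref{thm-noncentral} to the i.i.d.\ case with $c^2$ replaced by $(1-\omega)c^2$, and the same derivative of $q/\sqrt{\tau^2+q}$ gives monotonicity in $\omega$.

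Your conditioning worry can be dropped. The coefficient, residuals and $t$-statistic depend on $X^*$ only through $M_{K_n}X^*=M_{K_n}(X+\sigma_{\nu,n}\sqrt{1-\omega}\,\epsilon)$, so the statistic is literally the i.i.d.-case statistic and $g$ never has to be conditioned on.
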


\begin{proof}
Deferred to the Online Supplement. Since the common-factor vector lies in the unit-dummy column space, $M_{K_n}\nu=\sigma_{\nu,n}\sqrt{1-\omega}\,M_{K_n}\epsilon$ identically. Thus every occurrence of the measurement error after projection is the i.i.d.\ case with $c^2$ replaced by $(1-\omega)c^2$. The stated formula follows from Theorem~\ref{thm-noncentral}; its magnitude decreases in $\omega$ because $q/\sqrt{\tau^2+q}$ increases in $q\ge0$ while $q=(1-\omega)c^2$ decreases in $\omega$.
\end{proof}

The unit fixed effect annihilates the persistent (common) component of the noise, so persistent coding error produces a smaller EIV distortion than the classical i.i.d.\ benchmark. Two consequences for the lead application. First, the diagnostic computed from the marginal posterior variances (which ignore within-country correlation) is conservative: it uses $a=c^2(1-\rho)$ in place of the true $(1-\omega)a\le a$, so it overstates the noise and issues too few passes. A specification it passes (polyarchy) passes a fortiori under any within-country error persistence. Second, the pass direction is the one that is robust: because $|\eta_\omega|$ is decreasing in $\omega$ with $|\eta_\omega|\to 0$ as $\omega\to 1$, a flag computed under the i.i.d.\ benchmark is not immune to serial correlation, while a pass survives a fortiori. I.i.d.-based failure, unlike an i.i.d.-based pass, is therefore conclusive only given a bound on $\omega$; the exact flag-reversal threshold, and its two very different values for the lead application's judicial- and legislative-constraints rows, are given where they are used, in Section~\ref{sec-app-vdem}. This is the same mechanism, and the same $M\mathbf 1_i=0$ identity, that motivates the cluster-robust rescaling of Remark~\ref{rem-cluster}; note that here it is a proved consequence of Proposition~\ref{prop-equicorr}, whereas there it is not.

\paragraph{Other bias sources and nonclassical error.} The derivation pattern (introduce a local bias source, compute the non-centrality of the $t$-statistic under the $(\rho, \tau^2)$ drift, invert the leading size distortion) applies beyond classical EIV: heterogeneous treatment effects in TWFE designs \citep{dCdH}, FE misspecification, omitted nonlinearities, and binary misclassification. Each candidate produces its own threshold structure; developing them is left to future work. An especially useful intermediate case is weakly classical error, $\E[\nu\mid X]=0$, which permits the noise variance to vary with the latent regressor. In nonparametric regression, \citet{EvdokimovZeleneev2024} show that the bias then depends on the skedastic function $v(x)=\Var(\nu\mid X=x)$ and its derivative, identify $v(\cdot)$ with a possibly discrete instrument, and extend the construction to correlated nonclassical error. For the present linear FE problem, the concrete open question is whether those features collapse after projection to a weighted noise trace, or instead alter the non-centrality and the breakdown fixed point; their discrete-IV and repeated-measure routes also suggest ways to supply the richer pilot. This is not merely a technical extension. In linked survey--administrative data, \citet{CelhayMeyerMittag2024} find false-negative rates as high as $59\%$ and show that recall, telescoping, salience, stigma, and general survey cooperation predict reporting errors; dependence on respondent characteristics or the true value is exactly what invalidates a classical-error correction. \citet{MeyerMittag2017} further show in binary-choice applications that imposing an incorrect orthogonality restriction can move every estimator they examine farther from the validated target. The practical implication is asymmetric: the classical diagnostic is useful only after its error model is defended, and a flag is a reason to seek validation data or a model of the reporting process, not an instruction to divide mechanically by a measurement-reliability estimate. The binary case remains subject to the scope warning of Section~\ref{sec-scope}: this paper's diagnostic must not be applied to a binary mismeasured treatment.

\section{Discussion}\label{sec-conc}

The weak-identification analogy between fixed-effect saturation and weak instruments is genuine but requires care. Under strict exogeneity and homoskedastic errors, $\tau^2 = nQ_K$ does not behave like an IV concentration parameter --- the FE-OLS estimator is unbiased, and no Stock--Yogo threshold arises in the baseline model. Once a bias source is introduced (classical measurement error in the treatment, in our development), the analogy is restored, with a closed-form critical value~\eqref{eq-cv}, a feasible one-line form~\eqref{eq-feasible}, a corrected plug-in protocol, and a power reading: the within reliability $\lambda$ is simultaneously the size-distortion driver and the local-power tax, while the breakdown reliability $\lambda^{\dagger}$ is the specification-specific requirement to report beside it. Under Condition~(B), the FE dimension $\rho$ enters only through an overall $\sqrt{1-\rho}$ asymptotic scaling common to signal and noise, not through the within reliability $\lambda$ itself (Remark~\ref{rem-rho}); this is a modest role for saturation, and the feasible protocol never decomposes the diagnostic into $(\rho,\tau^2,c^2)$ in the first place.

Three structural features distinguish the threshold from the IV case. The threshold combines a signal-to-noise ratio $\tau^2/c^2$, exactly as classical attenuation does, with a saturation-driven scale factor $\sqrt{1-\rho}$ that raises the standard error without touching the within reliability. The primitive $\tau^2$ threshold is data-dependent through $|\beta_0|$, $\sigma$, and the noise scale, so it cannot be summarized in a universal table the way Stock--Yogo IV thresholds can. The breakdown reliability can be tabulated against the reported $t$-statistic (Table~\ref{tab-breakdown-lookup}), while~\eqref{eq-feasible} reduces the diagnostic to one inequality in reported quantities. Finally, the threshold requires an external noise input; Remark~\ref{rem-est} identifies three practical sources, including expert-coded posteriors, validated survey measures, and administrative matches.

For specifications where the threshold is not met, two remedies exist: the estimation route of \citet{GH}, and a fixed-effect Anderson--Rubin test with uniform validity over $\tau^2 \in (0, \infty]$ --- which we develop in companion work in progress --- giving honest confidence sets even when identification is fragile.

Three limitations should travel with the results. First, the diagnostic assumes \emph{classical} error in a continuous regressor; it does not apply to binary-treatment misclassification, where the error is nonclassical by construction, and applying it there is a misuse (Section~\ref{sec-scope}). The linked-data evidence of \citet{CelhayMeyerMittag2024} makes this a substantive screening step rather than boilerplate, and \citet{MeyerMittag2017} show why correcting under the wrong dependence model can worsen bias. Second, the paper reports two distinct objects and they should not be conflated: a \emph{point pass} is the descriptive verdict that the plug-in diagnostic clears the breakdown reliability, while a \emph{certificate} survives coefficient uncertainty and, when needed, reliability uncertainty (Definition~\ref{def-breakdown}, protocol step~6). Proposition~\ref{prop-certificate} supplies the guarantee: false certification is bounded by $\gamma_\beta+\gamma_\lambda$, separately from the size tolerance $\delta$ (Remark~\ref{rem-gamma-delta}); Design~6 verifies that bound and shows that substituting a noisy reliability point estimate as though known can violate it. Conditional on the supplied V-Dem measurement model, aggregate polyarchy is certified with $\gamma_\beta=0.05$ and $\gamma_\lambda=0$; no other empirical row is reported as certified. Third, the cluster-robust theory of Section~\ref{sec-cluster} requires many clusters and an asymptotic projection-compatibility condition. The direct V-Dem sample ratios $0.0069$--$0.0073$ are reassuring, but a single finite-panel calculation cannot establish the required sequence. The twins application avoids that issue by sampling independent pairs; neither application covers the few-clusters regime. Extending the derivation pattern to nonclassical misclassification, and to few-cluster asymptotics, are the two developments that would most enlarge the diagnostic's reach.

The resulting reporting rule is compact: place the data's within reliability $\lambda$ beside the specification's breakdown reliability $\lambda^{\dagger}$. For a size-controlled claim, replace the latter by $\lambda^{\dagger}_{\gamma_\beta}$ and place beside it a lower confidence bound for $\lambda$. Breakdown reliability is therefore the paper's portable object---the minimum within reliability required for conventional inference in a saturated fixed-effect specification to remain size-controlled---and its certified form makes explicit exactly which uncertainty the conclusion has survived.

\section*{Declaration of competing interest}

The author declares that he has no known competing financial interests or personal relationships that could have appeared to influence the work reported in this paper.

\section*{Funding}

This research did not receive any specific grant from funding agencies in the public, commercial, or not-for-profit sectors.

\section*{Declaration of generative AI and AI-assisted technologies in the manuscript preparation process}

During the preparation of this work the author used an AI proof-checking tool based on Opus 5 (Anthropic) and GPT-5.5 (OpenAI) to perform supplementary consistency checks on the mathematical arguments. The author verified all content, reviewed and edited it as needed, and takes full responsibility for the content of the published article.

\section*{Data availability}

All data used in the applications are publicly available: the V-Dem indices and their measurement-model posterior standard deviations from the V-Dem Country-Year dataset \citep{VDem16}, GDP per capita from the Maddison Project Database 2020 \citep{Maddison2020}, and the public teaching extract of the Ashenfelter--Krueger repeated-report twins data distributed with \texttt{RbyExample} \citep{RbyExampleData}. \ifanon The diagnostics are implemented in Julia and \textsf{R} packages whose identities and citations are withheld for review; an anonymized package snapshot bundles the V-Dem analysis panel. \else The diagnostics are implemented in the \textsf{PanelAdequacy} software \citep{PanelAdequacySoftware}, available as a \textsf{Julia} package (\texttt{PanelAdequacy.jl}) and an \textsf{R} package (\texttt{panelcert}); the V-Dem analysis panel is bundled with both packages. \fi A replication archive containing the twins extract and a separate script that reproduces every value in Table~\ref{tab-twins}, together with the V-Dem construction, simulations, figures, and the CJN--Anatolyev variance-estimator audit in \ref{sec:supp-anatolyev}, accompanies the journal submission.

\bibliographystyle{elsarticle-harv}
\bibliography{bibliography.bib}

\clearpage
\appendix
\begin{center}\Large\bfseries Online Supplement\end{center}
\medskip
\section{Repeated-report twins application: moment audit and sensitivity}
\label{sec:supp-twins}

This section supplies the exact calculations behind Section~\ref{sec-app-twins}
of the main text.  The design of \citet{AshenfelterKrueger1994} supplies two
measurements of the within-pair schooling difference: $X^*$, constructed from
each twin's self-report, and $Z^*$, constructed from the corresponding co-twin reports.
The outcome $Y$ is the within-pair difference in log hourly wages.  We use the
$147$ complete observations in the public \texttt{RbyExample} extract
\citep{RbyExampleData} and reproduce their controlled first-difference
specification with
\[
W=(1,\ \Delta\text{tenure},\ \Delta\text{married},\
\ \Delta\text{union coverage}).
\]
All three variables are put through the same projection: $\widetilde
X=M_WX^*$, $\widetilde Z=M_WZ^*$, and $\widetilde Y=M_WY$.  This common sample
and projection are essential; combining a controlled $t$-statistic with an
unprojected repeat reliability would not implement the diagnostic for one
specification.

The residual cross-products are
\begin{center}
\begin{tabular}{lr@{\qquad}lr@{\qquad}lr}
\toprule
$\widetilde X'\widetilde X$ & $537.295921$ &
$\widetilde Z'\widetilde Z$ & $562.534538$ &
$\widetilde X'\widetilde Z$ & $308.893401$ \\
$\widetilde X'\widetilde Y$ & $48.827247$ &
$\widetilde Z'\widetilde Y$ & $55.187409$ & & \\
\bottomrule
\end{tabular}
\end{center}
They give
\[
\begin{aligned}
\hbeta_X^{\mathrm{OLS}}&=0.0908759,
&\widehat{\mathrm{se}}(\hbeta_X^{\mathrm{OLS}})&=0.0219815,\\
|t|&=4.134199,
&\lambda^\dagger&=0.863710.
\end{aligned}
\]
The coefficient and standard error reproduce the published controlled values
$0.091$ and $0.022$.  The public extract also reproduces the reported-direction
IV estimate:
\[
\hbeta_{X\leftarrow Z}^{\mathrm{IV}}
=\frac{\widetilde Z'\widetilde Y}{\widetilde Z'\widetilde X}
=0.178662,
\]
against $0.179$ in the original table.

Under mutually uncorrelated classical report errors, no equality of their
variances is needed for the covariance repeat:
\[
\widehat\lambda_{\mathrm{cov}}
=\frac{\widetilde X'\widetilde Z}{\widetilde X'\widetilde X}
=0.574904.
\]
Under the additional restriction that the two error variances are equal,
\[
\widehat\lambda_{\mathrm{eq}}
=1-\frac{(\widetilde X-\widetilde Z)'(\widetilde X-\widetilde Z)}
{2\widetilde X'\widetilde X}
=0.551417.
\]
Applying $|\widehat\eta(\lambda)|=|t|(1-\lambda)/\lambda$ and the exact-normal
size map gives the full-precision audit:
\begin{center}
\small
\begin{tabular}{lrrrrr}
\toprule
pilot & $\hlambda$ & $\hbeta^*/\hlambda$ & $|\widehat\eta|$ & size & coverage \\
\midrule
equal error variances & $0.551417$ & $0.164804$ & $3.363211$ & $0.919728$ & $0.080272$ \\
covariance repeat     & $0.574904$ & $0.158072$ & $3.056917$ & $0.863669$ & $0.136331$ \\
\bottomrule
\end{tabular}
\end{center}

The coefficient agreement is not independent validation.  Using the covariance
repeat, the within-reliability correction is algebraically identical to the
reverse-direction IV estimator:
\[
\frac{\hbeta_X^{\mathrm{OLS}}}{\widehat\lambda_{\mathrm{cov}}}
=\frac{(\widetilde X'\widetilde Y)/(\widetilde X'\widetilde X)}
{(\widetilde X'\widetilde Z)/(\widetilde X'\widetilde X)}
=\frac{\widetilde X'\widetilde Y}{\widetilde X'\widetilde Z}
=0.158072.
\]
In the original no-control moment system, \citet{AshenfelterKrueger1994} report the two
directional IV estimates $0.157$ and $0.167$ and a restricted estimate $0.162$.
The within-reliability correction and their IV correction recycle the same repeat
covariance, so coefficient agreement is arithmetic corroboration.  The new
object in the main text is the implied rejection probability, equivalently the
coverage of the uncorrected interval at the true nonzero coefficient.

\citet{Rouse1999} supplies two checks.  First, her extractable Table~2
reproduces the original basic first-difference estimates $0.092$ ($0.024$) and
$0.167$ ($0.043$), with $149$ pairs; her expanded $1991$--$1993$ and $1995$
sample gives $0.075$ ($0.017$) and $0.095$ ($0.027$), with $453$ pairs.  Second,
she rejects the restriction that the two reporting errors are independent.
Her correlated-error construction uses the difference in both schooling
levels as reported by one sibling and instruments it with the corresponding
difference reported by the other.  In the controlled $445$-pair sample the
uncorrected coefficient is $0.071$ ($0.016$), and her random-order exercise
reports mean report reliability $0.748$ when correlated reporting error is allowed.
These corresponding inputs give
\[
\begin{aligned}
|t|&=4.4375, & \lambda^\dagger&=0.871832,
& |\widehat\eta|&=1.494987,\\
\text{size}&=0.321249, & \text{coverage}&=0.678751.&&
\end{aligned}
\]
Because $0.748<0.871832$, this alternative specification remains flagged, but
its implied size is $32.1\%$, well below the $86$--$92\%$ range under independent
errors in the original report construction.  Rouse's result is therefore
reported as a separate sensitivity rather than folded into the independent-error
within-reliability band.

Finally, first differencing removes the shared twin-pair effect before estimation.
With the twin pair as the sampling unit, the resulting cross-section uses the
i.i.d.\ branch of the theory.  Cluster rescaling, projection compatibility, and
the variance factor $\psi$ therefore play no role here; V-Dem remains the
application of the cluster-robust branch.

\section{Extensions and variance-estimator sensitivity}
\label{sec:supp-ext}

This section develops the vector-treatment and heteroskedastic extensions of
Section~\ref{sec-ext} and examines the many-covariate variance estimator used
to construct the reported $t$.

\subsection{Vector treatment}

With $X \in \R^p$, $Q_K = \E[\Var(X \mid \G_K)]$ is a $p \times p$ matrix and
$\Sigma_\nu$ the noise covariance under the drift $\Sigma_{\nu, n} = C/n$. The
construction of Theorem~\ref{thm-noncentral} carries over: the Wald statistic
for $H_0:\beta=\beta_0$ is asymptotically non-central $\chi^2_p$ with
non-centrality $\eta'\eta$, where the vector non-centrality is governed by
the matrix analogue of $\tau^{*2}=(1-\rho)(\tau^2+c^2)$, and a sufficient
condition for leading-order size control at tolerance $\delta$ is a lower
bound on $\lambda_{\min}(nQ_{K_n})$, mirroring \citet[\S 6]{SY}. We record
this as the natural generalization and do not develop it here: the scalar
diagnostic is complete on its own, and a sharp multivariate eigenvalue
threshold is a separate exercise whose details we omit.

\subsection{Heteroskedasticity}

Under conditional heteroskedasticity the CJN homoskedastic variance
estimator is replaced by a leave-one-out (HC2) estimator in the manner of
\citet{KSS} and \citet{Jochmans2022}. The bias term $-\beta_0\,c^2(1-\rho)$ is
unchanged, since it arises entirely from $\E[\nu'M_{K_n}\nu]$, which depends
on the measurement-error variance alone and not on the conditional variance
structure of $u$; heteroskedasticity changes only the standardization of the
score. That standardization, however, is not in general the
true-regressor limit $\omega$ of Assumption~\ref{ass-reg}(iv): the
contaminated score $X^{*\prime}M_{K_n}u$ weights the heterogeneous $\sigma_i^2$
through the contaminated leverage $\widetilde X^{*2}_{K_n,i}$ rather
than $\tX_{K_n,i}^2$, so its variance has its own limit $\omega_*^2 = \lim
(X^{*\prime}M_{K_n}X^*)^{-1}\sum_i \widetilde X^{*2}_{K_n,i}\sigma_i^2$. This
normalized limit can itself depend on the saturated design; no general
$\rho$-free conclusion follows under heteroskedasticity. The non-centrality
retains its form with $\omega_*$ in place of $\sigma$, and $\omega_*$ is what
the HC2 estimator consistently targets; it coincides with $\sigma$ under
homoskedasticity. For cluster dependence --- serial correlation of $u$
within unit, the empirically dominant departure in panels --- the
corresponding rescaling of the diagnostic is Remark~\ref{rem-cluster}, and
the applications report its verdicts alongside the i.i.d.\ ones.

\subsection{Many-covariate variance-estimator sensitivity}
\label{sec:supp-anatolyev}

The homoskedastic statistic in the main text uses the scalar RSS correction
$\widehat\sigma^2_{\mathrm{CJN}}=\widehat u'\widehat u/(n-d_K-1)$.
That object should not be confused with the heteroskedastic CJN covariance
estimator.  This subsection asks how the reported $t$, and therefore
$\lambda^\dagger$, changes under the latter estimator and the almost-unbiased
adjustment of \citet{Anatolyev2018}.

Let $W$ collect the nuisance covariates, let
$M=I-W(W'W)^{-1}W'$, put $v=MX^*$ and
$\widehat P=v(v'v)^{-1}v'$, and let
$\widehat u=(M-\widehat P)Y$ be the residual from the full regression.  For a
scalar coefficient, define
\[
 \kappa_{\mathrm{CJN}}=(M\odot M)^{-1},\qquad
 \kappa_{\mathrm{AU}}=
       (M\odot M-\widehat P\odot\widehat P)^{-1},
\]
where $\odot$ denotes the Hadamard product.  If
$\widehat\omega_{q,i}=\sum_j\kappa_{q,ij}\widehat u_j^2$, then
\[
 \widehat{\mathrm{se}}_q(\widehat\beta)
 =\frac{\{\sum_i v_i^2\widehat\omega_{q,i}\}^{1/2}}{v'v},
 \qquad q\in\{\mathrm{CJN},\mathrm{AU}\}.
\]
The AU correction accounts for coefficient-estimation noise that the CJN
leading-term correction omits and is exactly unbiased under homoskedasticity.
Its simulations are especially favorable with skewed regressors
\citep{Anatolyev2018}.

\begin{table}[htbp]\centering
\caption{Breakdown-reliability sensitivity to the many-covariate variance estimator}
\label{tab:supp-anatolyev}
\scriptsize
\begin{adjustbox}{max width=\textwidth}
\begin{tabular}{llrrrrrc}
\toprule
Application & specification & $\widehat\lambda$ &
$\widehat{\mathrm{se}}_{\rm CJN}$ & $\widehat{\mathrm{se}}_{\rm AU}$ &
$\lambda^\dagger_{\rm CJN}$ & $\lambda^\dagger_{\rm AU}$ & verdict (both) \\
\midrule
V-Dem & polyarchy, static & .8984 & .02940 & .02941 & .7607 & .7606 & point pass \\
      & polyarchy, $+2$ lags & .8964 & .005696 & .005697 & .3744 & .3743 & point pass \\
      & legislative constraints, static & .5472 & .02424 & .02424 & .6072 & .6071 & flag \\
      & legislative constraints, $+2$ lags & .5392 & .005079 & .005081 & .7110 & .7109 & flag \\
      & judicial constraints, static & .4125 & .02741 & .02742 & .9261 & .9261 & flag \\
      & judicial constraints, $+2$ lags & .4044 & .004833 & .004834 & .4043 & .4042 & point pass \\
Twins & AK controlled first difference & .5749 & .02867 & .02937 & .8293 & .8259 & flag \\
\bottomrule
\end{tabular}
\end{adjustbox}
\begin{minipage}{0.96\textwidth}\footnotesize
\emph{Notes:} $\eta^\dagger=0.652358$.  The V-Dem rows are an i.i.d.\
heteroskedastic sensitivity audit; the operative clustered verdicts in the
main text use the reported CRVE.  The twins row uses the covariance-repeat
within reliability, the most favorable of the two independent-report pilots in
Table~\ref{tab-twins}.  ``Point pass'' is descriptive and is not a formal
certificate.  Unrounded values are in the replication output.
\end{minipage}
\end{table}

Both inverses exist in every row.  Across all eight rows, the smallest CJN
sufficient-condition margin $\min_i(2M_{ii}-1)$ is $0.429$, and the smallest
AU margin $\min_i\{M_{ii}(2M_{ii}-1)-\widehat P_{ii}\}$ is $0.246$, both in
Grunfeld.  Restricting attention to the seven breakdown-reliability
specifications (six V-Dem rows and twins), the corresponding minima are $0.815$
and $0.726$.
No verdict changes.  The closest comparison is lagged judicial constraints:
its unrounded within reliability exceeds the CJN threshold by $1.10\times10^{-4}$
and the AU threshold by $1.79\times10^{-4}$, so the table discloses rather
than conceals that numerical fragility.  The exact sparse implementation and
all diagnostics are in
\texttt{code/anatolyev\_2018\_robustness.jl}, which writes
\texttt{results/anatolyev\_2018\_robustness.csv}.  For clustered errors and
more general many-covariate designs, the leave-cluster-out estimator of
\citet{AnatolyevNg2026} is the relevant alternative; replacing the variance
estimator changes the value obtained from the reported $t$, not the algebraic
breakdown formula.

\section{Design 5: cluster-robust simulation validation}
\label{sec:supp-design5}

Design~5 evaluates the cluster theory of Section~\ref{sec-cluster} on a
unit-and-time FE design with $G$
units (which are also the clusters), $T$ periods, within-unit AR(1) errors,
and clustering by unit, so that unit effects are nested in clusters and the
$T-1$ time effects are not. Seven predictions are checked; the scripts are
released with the paper (\texttt{verify\_cluster\_clt.py} and
\texttt{verify\_cluster\_high\_eta.jl}), and all reported statistics are from
$2{,}500$--$4{,}000$ replications.

\emph{(i) The limit itself.} At $G=200$, $T=6$, $\rho_n=0.167$, the empirical
mean of $T^{CR}_n$ is $-1.25$ against the predicted $\eta/\sqrt{\psi}=-1.21$,
with empirical standard deviation $1.03$; adding the time effects
($d^{\mathrm{ne}}_n/G_n=0.025$) gives $-1.22$ against $-1.21$ with standard
deviation $1.02$. Tested in its exact conditional form --- $T^{CR}_n$
centred at $-\beta_0(X^{*\prime}M\nu)/\sqrt{\Psi_n}$ --- the match is within
Monte Carlo error at every configuration we ran, including strongly
persistent designs.

\emph{(ii) CRVE consistency, and its rate.} The ratio
$\widehat V^{\mathrm{sc}}_{CR}/\Psi_n$ of Lemma~\ref{lem-crve} is $0.981$ at
$G=50$, $0.994$ at $G=200$ and $0.998$ at $G=800$: the bias is $O(1/G_n)$
and vanishes, as claimed.

\emph{(iii) Projection compatibility is necessary, not decorative.} Pushing
the non-nested block to $d^{\mathrm{ne}}_n/G_n=3.25$ ($G=12$, $T=40$) breaks
the conclusion exactly as Lemma~\ref{lem-nest}(b) predicts: $\widehat
V^{\mathrm{sc}}_{CR}/\Psi_n$ falls to $0.82$ and the standard deviation of
$T^{CR}_n$ rises to $1.28$, so the cluster-robust test over-rejects. This is
the condition required by the clustered V-Dem application of
Section~\ref{sec-app-vdem}; the twins application samples independent pairs.

\emph{(iv) The degrees-of-freedom trap.} Applying the conventional
$\tfrac{n-1}{n-K_n}$ small-sample factor at $\rho_n=0.167$ inflates $\widehat
V^{\mathrm{sc}}_{CR}/\Psi_n$ to $1.203$, against the predicted
over-correction factor $1/(1-\rho_n)=1.200$ --- agreement to three digits.
The factor must be omitted; it is not a harmless conservatism, and at the
saturation levels common in worker--firm designs it would be substantial.

\emph{(v) The $\sigma$-cancellation of Corollary~\ref{cor-cluster-feasible}.}
Under cluster dependence $\hsigma^*_{\mathrm{CJN}}$ converges to
$\varsigma=0.80\ne\sigma=1$, and $\hpsi$ converges to $\psi\sigma^2/\varsigma^2$
rather than to $\psi$; both ingredients of the feasible recipe are
individually inconsistent. Their ratio is not: the recipe returns $-1.22$
against the target $-1.21$. This is the step that licenses computing the
diagnostic from i.i.d.\ regression output and dividing by the reported
inflation factor.

\emph{(vi) The sign of $\psi-1$.} Three configurations at $G=200$ ---
$(\varrho_x,\varrho_u,T) = (0,\,0.6,\,10)$, $(0.9,\,0.9,\,20)$ and
$(0.98,\,0.95,\,40)$ --- give $\psi = 0.75$, $1.27$ and $1.72$: clustering
amplifies the measured distortion in the serially independent design and
deflates it in the persistent ones, as Remark~\ref{rem-psi-sign} describes.
These are three points in the design space, not a ceteris-paribus
comparative static in $\varrho_x$: the error persistence and the panel
length move with the treatment persistence across the three, and $\psi$ is
itself increasing in $T$ at fixed $(\varrho_x,\varrho_u)$. The crossing of
one is therefore a joint property of persistence and panel length, not a
function of $\varrho_x$ alone.

\emph{(vii) High-$|\eta_{CR}|$ size mapping.} A separate $4{,}000$-replication
cell uses $G=200$, $T=10$, $\rho_n=0.1045$, $\tau^2=5$, $c^2=15$, and
$\varrho_u=0.6$, giving $\lambda=0.25$ and $\psi=0.749$. The predicted
cluster non-centrality is $\eta/\sqrt\psi=-3.667$, while the empirical mean
of $T_n^{CR}$ is $-3.709$ (Monte Carlo standard error $0.016$) and its standard
deviation is $1.027$. Empirical size is $0.959$ against the exact-normal
prediction $0.956$, and $\widehat V^{\mathrm{sc}}_{CR}/\Psi_n=0.986$. This
extends the clustered validation beyond the twins range
$|\widehat\eta|=3.06$--$3.36$ as well as the V-Dem cluster-rescaled value
$|\widehat\eta_{CR}|=2.4$.

\section{Replication note: the CRVE small-sample factor}
\label{sec:supp-footnote}

This note gives the exact magnitudes behind the warning in the footnote
accompanying Section~\ref{sec-app-vdem}. The adequacy
diagnostics there are computed \ifanon with an open-source software package
(name and repository withheld for review) (the \texttt{eiv\_adequacy} routine)\else
with the open-source \textsf{PanelAdequacy} software \citep{PanelAdequacySoftware}
(the \texttt{eiv\_adequacy} routine)\fi;
every within-reliability, breakdown-reliability, and i.i.d.\ implied-size entry in
Table~\ref{tab-vdem} reproduces from the V-Dem panel bundled with the package.
The cluster-robust columns ($\hpsi$, size$_{CR}$) are produced by the
supplementary replication archive. The twins calculations are separate and
reproduce from the public repeated-report extract and
\texttt{code/twins\_application.R} in that archive.

Lemma~\ref{lem-crve} shows the CRVE must be formed without the
$\tfrac{n-1}{n-K_n}$ small-sample factor, which inflates it by an asymptotic
$1/(1-\rho)$. Applying that factor would raise the reported $\hpsi$ by
$2.5\%$ in the V-Dem panel ($\hrho_n=0.025$). No V-Dem verdict turns on the
difference, but users reproducing these numbers with default software
settings (which apply the factor) should expect the discrepancy. The twins
application has no cluster-robust column and is unaffected by this convention.

\section{Auxiliary results for the joint CLT under the $(\rho,\tau^2)$ drift}
\label{app-aux}

This appendix states and proves the auxiliary results behind Lemma~\ref{lem-clt}, making the paper self-contained. The strategy --- condition on the regressor and the design throughout, so the score is a weighted sum of independent errors with conditionally deterministic weights, and self-normalize by sample quantities --- is what makes the argument work at $\rho>0$. Throughout, $\F_n := \sigma(X_1,\dots,X_n, D_{K_n})$ and $\mathcal D_n := \sigma(D_{K_n}, G_1,\dots,G_n)$, where $G_i$ denotes the cell label of observation $i$; conditional on $\F_n$, the weights $\tX_{K_n,i}$ and the Hessian $V_n := X'M_{K_n}X$ are fixed. By the sampling structure of Assumption~\ref{ass-reg}(i) and strict exogeneity~\eqref{eq-model}, the errors $u_1,\dots,u_n$ are, conditional on $\F_n$, independent with $\E[u_i \mid \F_n] = 0$ and $\Var(u_i \mid \F_n) = \sigma^2(X_i, \G_{K_n,i})$.

Two balance conditions on the primitives are used; both are referenced from the statement of Lemma~\ref{lem-clt}.

\textbf{Condition (B) (treatment balance).} Conditional on $\mathcal D_n$, the deviations $\xi_i := X_i - \E[X_i \mid \G_{K_n}]$ are independent with mean zero, variances $q_i := \Var(\xi_i \mid \mathcal D_n)$, and scaled fourth moments $\E[\xi_i^4 \mid \mathcal D_n] \le \kappa_\xi\, q_i^2$ uniformly in $i,n$; and either (B1) $q_i \equiv Q_{K_n}$ for all $i$ (within-cell homoskedasticity of the treatment), or (B2) $q_i \le c_1 Q_{K_n}$ uniformly, $n^{-1}\sum_i q_i / Q_{K_n} \to_p 1$, and the uniform-leverage condition $\max_i |(P_{K_n})_{ii} - \rho_n| = o_p(1)$ holds (satisfied by balanced or approximately balanced grouped designs; it is the balance condition of the leave-out literature, cf.\ \citealp{KSS}).

\textbf{Condition (C) (error fourth moment).} $\sup_{i,n} \E[u_i^4 \mid \F_n] \le \kappa_u < \infty$. (This strengthens the $(2+\epsilon)$ moment of Assumption~\ref{ass-reg}(vi) to $\epsilon=2$; like (vi), it is implied by the finite eighth moments of Assumption~\ref{ass-reg}(i) when the conditional law of $u_i$ does not degenerate across cells.)

\begin{lemma}[Conditional-to-unconditional transfer]\label{lem-transfer}
Let $T_n$ be statistics and $\F_n$ $\sigma$-fields such that, for every $t \in \R$, $\Pp(T_n \le t \mid \F_n) \to_p \Phi_F(t)$ for a continuous deterministic cdf $\Phi_F$. Then $T_n \Rightarrow \Phi_F$ unconditionally.
\end{lemma}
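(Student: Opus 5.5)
The plan is to pass from conditional to unconditional convergence by taking expectations and applying dominated convergence. Fix $t\in\R$ and write $F_n(t):=\Pp(T_n\le t\mid\F_n)$. I will use a regular conditional probability, which always exists for a real-valued $T_n$. By the tower property,
\[
\Pp(T_n\le t)=\E\big[F_n(t)\big].
\]
The hypothesis gives $F_n(t)\to_p\Phi_F(t)$. Moreover $F_n(t)\in[0,1]$, so the sequence is uniformly bounded and hence uniformly integrable. Convergence in probability together with uniform integrability gives $L^1$ convergence, so
\[
\big|\Pp(T_n\le t)-\Phi_F(t)\big|\le \E\big|F_n(t)-\Phi_F(t)\big|\to0 .
\]
If one prefers to avoid uniform integrability, the same bound follows by splitting on the event $\{|F_n(t)-\Phi_F(t)|>\varepsilon\}$. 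On that event the integrand is at most $1$, and the event itself has probability $o(1)$.

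This establishes $\Pp(T_n\le t)\to\Phi_F(t)$ for every $t$. Because $\Phi_F$ is continuous, every $t$ is a continuity point, and pointwise convergence of distribution functions at all continuity points is exactly $T_n\Rightarrow\Phi_F$. Continuity is also what lets the hypothesis be stated for every $t$ rather than only at continuity points. No uniformity in $t$ is needed, since weak convergence only requires pointwise convergence at continuity points (Pólya's theorem would in fact give uniformity afterwards).

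The only mildly delicate step is measurability. We need $F_n(t)$ to be an $\F_n$-measurable random variable, so that the tower identity holds and the hypothesis "$\to_p$" is meaningful. Taking $F_n(t)=\E[\mathbf 1\{T_n\le t\}\mid\F_n]$, a version of the conditional expectation, settles this directly. The identity $\E[F_n(t)]=\Pp(T_n\le t)$ holds for any version, so the choice of version does not matter. I therefore expect no real obstacle; the argument is a bounded-convergence observation.
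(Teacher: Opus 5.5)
Your proposal is correct and follows essentially the same route as the paper's proof. Both arguments fix $t$ and note that $\Pp(T_n\le t\mid\F_n)\in[0,1]$, so convergence in probability upgrades to $L^1$ convergence by uniform integrability. Both then take expectations via the tower property and conclude from the continuity of $\Phi_F$.
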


\begin{proof}
Fix $t$ and set $Z_n:=\Pp(T_n\le t\mid\F_n)$. Since $0\le Z_n\le1$ and $Z_n\to_p\Phi_F(t)$, uniform integrability gives $\E|Z_n-\Phi_F(t)|\to0$. Hence $\Pp(T_n\le t)=\E Z_n\to\Phi_F(t)$; continuity of $\Phi_F$ completes the proof.
\end{proof}

\begin{lemma}[Variance of a quadratic form in independent variables]\label{lem-qf}
Let $B$ be a symmetric $n\times n$ matrix measurable with respect to a $\sigma$-field $\mathcal H$, and let $w_1,\dots,w_n$ be, conditional on $\mathcal H$, independent with mean zero, variances $s_i^2 \le s_{\max}^2$, and fourth moments $\E[w_i^4 \mid \mathcal H] \le \kappa$. Then
\[
\Var\big(w' B w \,\big|\, \mathcal H\big) \;\le\; \kappa \sum_i B_{ii}^2 \;+\; 2\,s_{\max}^4\, \tr(B^2).
\]
\end{lemma}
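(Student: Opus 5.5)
The plan is to expand the quadratic form about its conditional mean and compute the conditional variance exactly, then bound each piece. Write $w'Bw=\sum_i B_{ii}w_i^2+\sum_{i\ne j}B_{ij}w_iw_j$. Conditional on $\mathcal H$, the weights $B_{ij}$ are fixed. Moreover, $\E[w'Bw\mid\mathcal H]=\sum_iB_{ii}s_i^2$, so the centred form is
\[
w'Bw-\E[w'Bw\mid\mathcal H]=\sum_i B_{ii}(w_i^2-s_i^2)+\sum_{i\ne j}B_{ij}w_iw_j =: D+O.
\]

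First step: $D$ and $O$ are conditionally uncorrelated. A typical cross moment is $\E[(w_k^2-s_k^2)w_iw_j\mid\mathcal H]$ with $i\ne j$. At least one of $i,j$ differs from $k$ and appears only to the first power, so conditional independence and mean zero kill it. Hence $\Var(w'Bw\mid\mathcal H)=\Var(D\mid\mathcal H)+\Var(O\mid\mathcal H)$.

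Second step: the diagonal part. By independence, $\Var(D\mid\mathcal H)=\sum_iB_{ii}^2\Var(w_i^2\mid\mathcal H)$. Then use $\Var(w_i^2\mid\mathcal H)\le\E[w_i^4\mid\mathcal H]\le\kappa$, which gives the term $\kappa\sum_iB_{ii}^2$.

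Third step: the off-diagonal part. By symmetry, $O=2\sum_{i<j}B_{ij}w_iw_j$. The products $w_iw_j$ for distinct unordered pairs are conditionally uncorrelated, since any two distinct pairs share at most one index and the remaining index enters linearly with mean zero. Also $\E[w_i^2w_j^2\mid\mathcal H]=s_i^2s_j^2$. Therefore
\[
\Var(O\mid\mathcal H)=4\sum_{i<j}B_{ij}^2s_i^2s_j^2=2\sum_{i\ne j}B_{ij}^2s_i^2s_j^2\le 2s_{\max}^4\sum_{i\ne j}B_{ij}^2\le 2s_{\max}^4\tr(B^2),
\]
where the last inequality uses $\tr(B^2)=\sum_{i,j}B_{ij}^2$ for symmetric $B$. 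Adding the two pieces gives the stated bound.

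There is no real obstacle here. The only point needing care is the vanishing of the mixed third- and fourth-order moments, which requires conditional independence rather than mere uncorrelatedness. I would also record that these moments are finite, because $\kappa<\infty$ implies finite conditional second moments, so all the conditional expectations are well defined.
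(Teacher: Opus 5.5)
Your proof is correct and follows the same route as the paper. The paper cites the exact conditional-variance identity $\sum_i B_{ii}^2\{\E[w_i^4\mid\mathcal H]-s_i^4\}+2\sum_{i\ne j}B_{ij}^2s_i^2s_j^2$ as standard and bounds the two sums exactly as you do; your diagonal/off-diagonal split, with the vanishing mixed moments, is a derivation of that identity.
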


The bound is deliberately weaker than what is available: quadratic forms of this kind obey a central limit theorem under conditions on the eigenvalues of $B$ \citep{deJong}. Only the variance bound is used here, because every quadratic form below is a denominator required to concentrate rather than a term whose limiting law is needed, and the bound holds with no eigenvalue condition at all.

\begin{proof}
Conditional independence and centering give the standard identity
\[
\Var(w'Bw\mid\mathcal H)
=\sum_iB_{ii}^2\big\{\E[w_i^4\mid\mathcal H]-s_i^4\big\}
+2\sum_{i\ne j}B_{ij}^2s_i^2s_j^2 .
\]
The first sum is at most $\kappa\sum_iB_{ii}^2$ and the second at most $2s_{\max}^4\sum_{i,j}B_{ij}^2=2s_{\max}^4\tr(B^2)$.
\end{proof}

\begin{lemma}[Hessian and residual quadratic forms]\label{lem-hess-aux}
Under Assumption~\ref{ass-reg} and Conditions (B)--(C), with the saturation of Section~\ref{sec-setup} (so $M_{K_n}\E[X\mid\G_{K_n}]=0$ and $X'M_{K_n}X = \xi'M_{K_n}\xi$):
\begin{enumerate}[label=(\alph*)]
\item $X'M_{K_n}X/(nQ_{K_n}) \to_p 1-\rho$;
\item $u'M_{K_n}u/(n-d_{K_n}) - (n-d_{K_n})^{-1}\sum_i (M_{K_n})_{ii}\,\sigma^2(X_i,\G_{K_n,i}) \to_p 0$; under conditional homoskedasticity (or under (B2)-type uniform leverage combined with the law of large numbers for $\sigma^2(X_i,\G_{K_n,i})$) the centering constant converges to $\sigma^2$, giving $u'M_{K_n}u/(n-d_{K_n}) \to_p \sigma^2$.
\end{enumerate}
\end{lemma}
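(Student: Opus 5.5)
Both parts reduce to a conditional mean and a conditional variance, obtained by conditioning on a suitable $\sigma$-field, followed by Chebyshev's inequality. Write $M=M_{K_n}$ and $h_{ii}=(P_{K_n})_{ii}$, so that $M_{ii}=1-h_{ii}$ and $\tr(M)=n-d_{K_n}$.

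\textbf{Part (a).} Under saturation, $X'MX=\xi'M\xi$. Because $M$ is $\mathcal D_n$-measurable and the $\xi_i$ are conditionally independent and centered under Condition~(B), Lemma~\ref{lem-qf} applies with $\mathcal H=\mathcal D_n$. The conditional mean is $\sum_i M_{ii}q_i$.
\begin{itemize}
\item Under (B1) this equals $Q_{K_n}\tr(M)=(1-\rho_n)nQ_{K_n}$ exactly.
\item Under (B2) write $\sum_i(1-h_{ii})q_i=(1-\rho_n)\sum_iq_i-\sum_i(h_{ii}-\rho_n)q_i$. The remainder is bounded by $\max_i|h_{ii}-\rho_n|\,c_1nQ_{K_n}=o_p(nQ_{K_n})$, and $\sum_iq_i/(nQ_{K_n})\to_p1$.
\end{itemize}
Dividing by $nQ_{K_n}$, the mean tends to $1-\rho$ in both cases.

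For the conditional variance, Lemma~\ref{lem-qf} with $s_{\max}^2\le c_1Q_{K_n}$ and $\kappa\le\kappa_\xi c_1^2Q_{K_n}^2$ gives
\[
\kappa_\xi c_1^2Q_{K_n}^2\textstyle\sum_iM_{ii}^2+2c_1^2Q_{K_n}^2\tr(M^2)\le C\,Q_{K_n}^2\,n .
\]
Here we used $M_{ii}\le1$ and $\tr(M^2)=\tr(M)\le n$. After normalizing by $(nQ_{K_n})^2$ the variance is $O(1/n)$. Conditional Chebyshev followed by Lemma~\ref{lem-transfer} (or simply taking expectations of the conditional probability) yields (a).

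\textbf{Part (b).} The same argument applies conditionally on $\F_n$, with $w=u$ and $B=M/(n-d_{K_n})$. Strict exogeneity and the i.i.d.\ sampling make the $u_i$ conditionally independent and centered with variances $\sigma^2(X_i,\G_{K_n,i})$. The conditional mean of $u'Mu/(n-d_{K_n})$ is exactly the stated centering constant. Condition~(C) bounds $\kappa$ by $\kappa_u$, and $s_{\max}^4\le\kappa_u$ by Jensen's inequality. The conditional variance is therefore at most
\[
\frac{\kappa_u n+2\kappa_u(n-d_{K_n})}{(n-d_{K_n})^2}=O\!\left(\frac{1}{(1-\rho)\,n}\right)\to0,
\]
using $\rho<1$. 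Chebyshev then gives the centered convergence.

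For the final claim: under conditional homoskedasticity the centering constant equals $\sigma^2\tr(M)/(n-d_{K_n})=\sigma^2$ identically. Under uniform leverage, write $M_{ii}=(1-\rho_n)+o_p(1)$ uniformly in $i$. The constant then becomes $n^{-1}\sum_i\sigma^2(X_i,\G_{K_n,i})\,(1+o_p(1))$, and a law of large numbers for this average (bounded moments from (C)) gives $\sigma^2$.

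\textbf{Main obstacle.} The delicate step is the (B2) mean computation in (a): it must control the weighting of the heterogeneous $q_i$ by non-constant leverages. I would try the decomposition above first, relying on $q_i\le c_1Q_{K_n}$ so that the uniform-leverage error is multiplied by a quantity of order $nQ_{K_n}$. I would also check that conditioning on $\mathcal D_n$ rather than $\F_n$ is legitimate in (a), since $M$ depends only on the design.
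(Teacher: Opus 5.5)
Your proposal is correct and follows essentially the same route as the paper: the conditional mean via the exact (B1) trace identity or the (B2) leverage decomposition, a conditional variance bound from Lemma~\ref{lem-qf} of order $Q_{K_n}^2 n$ in (a) and $3\kappa_u n$ in (b), then conditional Chebyshev, with the homoskedastic and uniform-leverage cases of (b) handled as you describe. Your concern about conditioning on $\mathcal D_n$ in (a) is settled the same way in the paper: $M_{K_n}$ is $\mathcal D_n$-measurable, and Condition~(B) is itself stated conditionally on $\mathcal D_n$.
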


\begin{proof}
(a) Saturation gives $X'M_{K_n}X=\xi'M_{K_n}\xi$. With $h_{ii}:=(P_{K_n})_{ii}$, its conditional mean is
\[
\mu_n:=\sum_i(1-h_{ii})q_i
=\begin{cases}
(1-\rho_n)nQ_{K_n},&\text{under (B1)},\\
(1-\rho_n)\sum_iq_i+\sum_i(\rho_n-h_{ii})q_i,&\text{under (B2)}.
\end{cases}
\]
In the second case the final sum is bounded by $\max_i|h_{ii}-\rho_n|\sum_iq_i=o_p(nQ_{K_n})$, so in either case $\mu_n/(nQ_{K_n})\to_p1-\rho$. Lemma~\ref{lem-qf}, using $M_{K_n}^2=M_{K_n}$ and $\tr(M_{K_n})\le n$, gives
\[
\Var(\xi'M_{K_n}\xi\mid\mathcal D_n)\le C Q_{K_n}^2 n .
\]
Thus $(\xi'M_{K_n}\xi-\mu_n)/(nQ_{K_n})\to_p0$ by conditional Chebyshev.

(b) Conditional on $\F_n$, the mean of $u'M_{K_n}u$ is
$\sum_i(M_{K_n})_{ii}\sigma^2(X_i,\G_{K_n,i})$, while Lemma~\ref{lem-qf} gives conditional variance at most $3\kappa_u n$. Since $n-d_{K_n}=n(1-\rho_n)$ and $\rho<1$, conditional Chebyshev proves the stated centering result. Under conditional homoskedasticity the centered average equals $\sigma^2$ exactly. Under the alternative condition in the statement, uniform leverage replaces $(M_{K_n})_{ii}$ by $1-\rho_n$ up to $o_p(1)$, and the law of large numbers completes the argument.
\end{proof}

\begin{lemma}[Self-normalized score CLT]\label{lem-score-clt}
Let $s_n^2 := \sum_i \tX_{K_n,i}^2\, \sigma^2(X_i,\G_{K_n,i}) = \Var(X'M_{K_n}u \mid \F_n)$. Under Assumption~\ref{ass-reg} and Condition~(B) of \ref{app-aux},
\[
\frac{X'M_{K_n}u}{s_n} \;\Rightarrow\; N(0,1),
\qquad\text{and hence}\qquad
\frac{X'M_{K_n}u}{\sqrt{nQ_{K_n}}} \;\Rightarrow\; N\big(0,\ (1-\rho)\,\omega^2\big),
\]
the second display by Assumption~\ref{ass-reg}(iv) ($s_n^2/(nQ_{K_n}) \to_p (1-\rho)\omega^2 > 0$) and Slutsky.
\end{lemma}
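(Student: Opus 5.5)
The plan is to argue conditionally on $\F_n=\sigma(X,D_{K_n})$ throughout and then pass to the unconditional statement with Lemma~\ref{lem-transfer}. Conditional on $\F_n$, the weights $\tX_{K_n,i}=(M_{K_n}X)_i$ are fixed. The score $X'M_{K_n}u=\sum_i\tX_{K_n,i}u_i$ is therefore a weighted sum of independent, mean-zero errors, with conditional variances $\sigma^2(X_i,\G_{K_n,i})$. By construction $s_n^2$ is exactly its conditional variance. I will set $w_{n,i}:=\tX_{K_n,i}/s_n$, so that $\sum_iw_{n,i}^2\sigma_i^2=1$, and verify the conditional Lyapunov condition
\[
L_n:=\sum_i|w_{n,i}|^{2+\epsilon}\,\E\big[|u_i|^{2+\epsilon}\,\big|\,\F_n\big]\to_p0 .
\]
Given this, the Lyapunov CLT applied conditionally gives $\Pp(X'M_{K_n}u/s_n\le t\mid\F_n)\to_p\Phi(t)$ for every $t$. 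One can argue along subsequences where $L_n\to0$ almost surely, or use a Berry--Esseen bound $\sup_t|\Pp(\cdot\le t\mid\F_n)-\Phi(t)|\le C L_n$, which avoids subsequences entirely. Lemma~\ref{lem-transfer} then yields the first display.

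To bound $L_n$, I will use Assumption~\ref{ass-reg}(vi). It bounds $\E[|u_i|^{2+\epsilon}\mid\F_n]\le C_u$; the conditioning there is on $(X_i,\G_{K_n})$, and independence across $i$ lets this pass to $\F_n$. This gives
\[
L_n\le C_u\Big(\max_i\frac{\tX^2_{K_n,i}}{s_n^2}\Big)^{\epsilon/2}\frac{\sum_i\tX_{K_n,i}^2}{s_n^2}.
\]
Next I normalize by $nQ_{K_n}$. By Assumption~\ref{ass-reg}(iv), $s_n^2/(nQ_{K_n})\to_p(1-\rho)\omega^2>0$. By Lemma~\ref{lem-hess-aux}(a), which uses Condition~(B), $\sum_i\tX_{K_n,i}^2/(nQ_{K_n})=X'M_{K_n}X/(nQ_{K_n})\to_p1-\rho$. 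The ratio $\sum_i\tX^2_{K_n,i}/s_n^2$ is therefore $O_p(1)$. The treatment-leverage half of Assumption~\ref{ass-reg}(iii) gives $\max_i\tX^2_{K_n,i}/(nQ_{K_n})=o_p(1)$, so the maximal-weight factor vanishes and $L_n\to_p0$.

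The main obstacle is the conditional-to-unconditional step, because the Lyapunov ratio converges only in probability and not almost surely. I will handle this through the subsequence characterization of convergence in probability, or equivalently the Berry--Esseen route above, combined with Lemma~\ref{lem-transfer}. A secondary point is that a near-zero $s_n$ would make the normalization ill-defined. This is excluded with probability approaching one because the limit $(1-\rho)\omega^2$ in (iv) is strictly positive.

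Finally, for the second display I will write
\[
\frac{X'M_{K_n}u}{\sqrt{nQ_{K_n}}}=\frac{X'M_{K_n}u}{s_n}\cdot\frac{s_n}{\sqrt{nQ_{K_n}}}.
\]
The first factor converges to $N(0,1)$ and the second converges in probability to $\sqrt{(1-\rho)}\,\omega$ by (iv). Slutsky's lemma then gives the limit $N(0,(1-\rho)\omega^2)$.
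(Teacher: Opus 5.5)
Your proposal is correct and follows the paper's proof essentially line for line. Conditional on $\F_n$, you bound the $(2+\epsilon)$ Lyapunov ratio by the same product as the paper: the maximal-weight factor from Assumption~\ref{ass-reg}(iii), the Hessian limit of Lemma~\ref{lem-hess-aux}(a), and the variance limit of Assumption~\ref{ass-reg}(iv). You then pass to the unconditional limit via the subsequence argument and Lemma~\ref{lem-transfer}, and finish with Slutsky.

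One small caveat concerns your alternative Berry--Esseen route. The bound $\sup_t|\Pp(\cdot\le t\mid\F_n)-\Phi(t)|\le CL_n$ holds only for $\epsilon\le1$. You would therefore need to reduce to $\min(\epsilon,1)$ first, which is harmless by Lyapunov's moment inequality. Alternatively, just use the subsequence argument as the paper does.
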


\begin{proof}
Conditional on $\F_n$, $X'M_{K_n}u = \sum_i \tX_{K_n,i} u_i$ is a sum of independent mean-zero terms with deterministic weights and total variance $s_n^2$. With $\epsilon$ from Assumption~\ref{ass-reg}(vi), the conditional Lyapunov ratio of order $2+\epsilon$ satisfies
\[
\begin{aligned}
L_n &:= \frac{\sum_i |\tX_{K_n,i}|^{2+\epsilon}\, \E[|u_i|^{2+\epsilon} \mid \F_n]}{s_n^{2+\epsilon}} \\[3pt]
&\;\le\; C_u\,\Big(\frac{\max_i \tX_{K_n,i}^2}{nQ_{K_n}}\Big)^{\epsilon/2} \cdot \frac{X'M_{K_n}X/(nQ_{K_n})}{\big(s_n^2/(nQ_{K_n})\big)^{(2+\epsilon)/2}} \;=\; o_p(1),
\end{aligned}
\]
using Assumption~\ref{ass-reg}(iii), Lemma~\ref{lem-hess-aux}(a), and Assumption~\ref{ass-reg}(iv), respectively. The conditional Lyapunov theorem (with the usual subsequence argument for $L_n\to_p0$) therefore gives
$\Pp(X'M_{K_n}u/s_n\le x\mid\F_n)\to_p\Phi(x)$ for every $x$; Lemma~\ref{lem-transfer} yields the unconditional limit.
\end{proof}

\begin{proof}[Proof of Lemma~\ref{lem-clt}]
The first coordinate is Lemma~\ref{lem-score-clt}; the second is Lemma~\ref{lem-hess-aux}(a); the third is Lemma~\ref{lem-hess-aux}(b). Under conditional homoskedasticity $\sigma^2(X_i,\G_{K_n,i}) \equiv \sigma^2$, so $s_n^2 = \sigma^2\, X'M_{K_n}X$ and the score variance limit is $(1-\rho)\sigma^2$, i.e.\ $\omega^2 = \sigma^2$.
\end{proof}

\section{Proofs}
\label{app-proofs}

This appendix collects the proofs of the results stated in the main text; all cross-references are to the numbered results there.

\subsection*{Proof of Lemma~\ref{lem-atten} (attenuation under the drift)}
\label{app-atten}

\begin{proof}
Write
\[
H_n:=X'M_{K_n}X,\qquad C_n:=X'M_{K_n}\nu,\qquad
R_n:=\nu'M_{K_n}\nu .
\]
In the weak-information regime, Lemma~\ref{lem-clt} gives
$H_n\to_p(1-\rho)\tau^2$. Conditional on $(X,\G_{K_n})$,
$\E[C_n\mid X,\G_{K_n}]=0$ and
$\Var(C_n\mid X,\G_{K_n})=\sigma_{\nu,n}^2H_n=o_p(1)$, so $C_n=o_p(1)$.
Moreover,
\[
\E[R_n\mid\G_{K_n}]=c^2(1-\rho_n),\qquad
\Var(R_n\mid\G_{K_n})\le(C_\nu+2)c^4(1-\rho_n)/n
\]
by Lemma~\ref{lem-qf}; hence $R_n\to_p c^2(1-\rho)$. Since
$X^{*\prime}M_{K_n}X^*=H_n+2C_n+R_n$, the Hessian and reliability limits in part~(a) follow immediately.

Under strong information, the same calculations give
$C_n/(nQ_{K_n})\to_p0$ and $R_n/(nQ_{K_n})\to_p0$. Thus both
$X^{*\prime}M_{K_n}X$ and $X^{*\prime}M_{K_n}X^*$ equal
$H_n+o_p(nQ_{K_n})$. Also
\[
X'M_{K_n}u=O_p(\sqrt{nQ_{K_n}}),\qquad
\nu'M_{K_n}u=O_p(1),
\]
the first by Lemma~\ref{lem-clt} and the second because its conditional variance is
$\sigma_{\nu,n}^2u'M_{K_n}u=O_p(1)$. Substitution into
\[
\hbeta_K^*=\beta_0\frac{X^{*\prime}M_{K_n}X}{X^{*\prime}M_{K_n}X^*}
+\frac{X^{*\prime}M_{K_n}u}{X^{*\prime}M_{K_n}X^*}
\]
proves part~(b). Corollary~\ref{cor-slope}, placed after Theorem~\ref{thm-noncentral}, records the nondegenerate weak-information limit.
\end{proof}

\subsection*{Proof of Lemma~\ref{lem-lev-star} (leverage condition for the contaminated regressor)}
\label{app-lev}

\begin{proof}
Let $m_i'$ be row $i$ of $M_{K_n}$. Conditional on $\G_{K_n}$,
$\widetilde\nu_{K_n,i}=m_i'\nu$ is a sum of independent mean-zero terms. Because $M_{K_n}$ is an orthogonal projection,
\[
\sum_jm_{ij}^2=(M_{K_n})_{ii}\le1,\qquad
\sum_j|m_{ij}|^{2r}\le\sum_jm_{ij}^2\le1 .
\]
Rosenthal's inequality and Assumption~\ref{ass-nu}(ii) therefore give, uniformly in $i$,
\[
\E[|\widetilde\nu_{K_n,i}|^{2r}\mid\G_{K_n}]
\le C_r\!\left[
\left(\sigma_{\nu,n}^2\sum_jm_{ij}^2\right)^r
+\sum_j|m_{ij}|^{2r}\E(|\nu_j|^{2r}\mid\G_{K_n})
\right]
\le C_r(1+C_\nu)\sigma_{\nu,n}^{2r}.
\]
Markov's inequality and a union bound now yield
\[
\max_i\widetilde\nu_{K_n,i}^2
=O_p(\sigma_{\nu,n}^2n^{1/r})
=O_p(c^2n^{1/r-1})=o_p(1).
\]
Finally,
$\max_i\widetilde X^{*2}_{K_n,i}
\le2\max_i\tX_{K_n,i}^2+2\max_i\widetilde\nu_{K_n,i}^2=o_p(1)$
by Assumption~\ref{ass-reg}(iii) and Lemma~\ref{lem-clt}, while
$X^{*\prime}M_{K_n}X^*\to_p(1-\rho)(\tau^2+c^2)>0$ by Lemma~\ref{lem-atten}. Their ratio therefore converges to zero.
\end{proof}

\subsection*{Proof of Corollary~\ref{cor-slope} (limiting law of the slope estimator)}
\label{app-slope}

\begin{proof}
The exact identity
\[
\hbeta_K^*
=\frac{\beta_0X^{*\prime}MX+X^{*\prime}Mu}{X^{*\prime}MX^*}
\]
combines with Lemma~\ref{lem-atten} and the score limit proved in Theorem~\ref{thm-noncentral}:
$X^{*\prime}MX\to_p(1-\rho)\tau^2$,
$X^{*\prime}MX^*\to_p\tau^{*2}$, and
$X^{*\prime}Mu\Rightarrow N(0,\sigma^2\tau^{*2})$. Hence Slutsky gives
\[
\hbeta_K^*\Rightarrow
\beta_0\frac{(1-\rho)\tau^2}{\tau^{*2}}
+N(0,\sigma^2/\tau^{*2})
=\beta_0\lambda+N(0,\sigma^2/\tau^{*2}),
\]
because $\tau^{*2}=(1-\rho)(\tau^2+c^2)$. This use of Theorem~\ref{thm-noncentral} is acyclic: that theorem does not invoke the present corollary.
\end{proof}

\subsection*{Proof of Proposition~\ref{prop-power} (local power)}
\label{app-power}

\begin{proof}
Under $\beta_n$, the decomposition in Theorem~\ref{thm-noncentral} acquires only the standardized term
\[
\frac{b}{\sigma}\,
\frac{X^{*\prime}MX}{\sqrt{nQ_{K_n}}\sqrt{X^{*\prime}MX^*}}.
\]
The cross-term bound and Lemmas~\ref{lem-clt}--\ref{lem-atten} make this converge to
\[
\frac{b}{\sigma}\,
\frac{(1-\rho)\tau}{\sqrt{(1-\rho)(\tau^2+c^2)}}
=\frac{b}{\sigma}\sqrt{(1-\rho)\lambda}.
\]
The added regression-error component has squared $M$-norm
$O_p(X'MX)=O_p(1)$, so the residual-scale limit is unchanged. Adding the displayed deterministic shift to Theorem~\ref{thm-noncentral} proves the result.
\end{proof}

\subsection*{Proof of Proposition~\ref{prop-equicorr} (equicorrelated measurement error)}
\label{app-equicorr}

\begin{proof}
Let $g = \sum_i g_i\,\mathbf 1_i$ be the common-component vector, where $\mathbf 1_i$ is the indicator of unit $i$. Each $\mathbf 1_i$ is a column of the unit-dummy matrix, which lies in the fixed-effect column space, so $M_{K_n}\mathbf 1_i = 0$ and hence $M_{K_n} g = 0$. Therefore the unit fixed effect \emph{annihilates the entire common factor, not merely its expectation}: writing $\nu = \sigma_{\nu,n}(\sqrt\omega\, g + \sqrt{1-\omega}\,\epsilon)$,
\[
M_{K_n}\nu = \sigma_{\nu,n}\sqrt{1-\omega}\; M_{K_n}\epsilon \qquad\text{identically (not just in mean).}
\]
Every proof input involving $\nu$ depends on it through $M_{K_n}\nu$. The preceding identity therefore reduces Lemmas~\ref{lem-atten}--\ref{lem-lev-star} and Theorem~\ref{thm-noncentral} to the i.i.d.\ $\epsilon$ case with $c^2$ replaced by $q=(1-\omega)c^2$, which gives the stated $\eta_\omega$. The covariance formula follows directly from the factor model. Finally,
\[
\frac{d}{dq}\frac{q}{\sqrt{\tau^2+q}}
=\frac{\tau^2+q/2}{(\tau^2+q)^{3/2}}>0;
\]
since $q=(1-\omega)c^2$ decreases with $\omega$, so does $|\eta_\omega|$.
\end{proof}

\subsection*{Proof of Theorem~\ref{thm-no-distortion} (no $\tau^2$-driven size distortion)}
\label{app-nodist}

\begin{proof}
Work conditional on $(D_{K_n},X)$ and set
\[
q:=\frac{M_{K_n}X}{\sqrt{X'M_{K_n}X}},\qquad
R:=M_{K_n}-qq'=M_{[K,X]} .
\]
$R$ is an orthogonal projector of rank $k:=n-d_{K_n}-1$ and $Rq=0$. Cochran's theorem gives
\[
\frac{q'u}{\sigma}\sim N(0,1),\qquad
\frac{u'Ru}{\sigma^2}\sim\chi_k^2,
\qquad q'u\ \perp\ u'Ru .
\]
Since $q'u/\sigma=(\hbeta_{K_n}-\beta_0)\sqrt{X'M_{K_n}X}/\sigma$ and
$u'Ru/k=\widehat\sigma^2_{\mathrm{CJN}}$, their ratio is exactly
$T_n^{\mathrm{CJN}}\sim t_k$. Using $R$, rather than $M_{K_n}$, is essential because it removes the $q$ direction from the residual quadratic form.
The Fisher--Cornish expansion \citep{FisherCornish} of the Student-$t$ distribution \citep[26.7.8]{AS72} gives the displayed size distortion. Since $k = n - d_{K_n} - 1 = n(1 - \rho_n) - 1 \to \infty$ uniformly for $\rho \in [0, \bar\rho]$ with $\bar\rho < 1$, the rate is $O(1/n)$ regardless of $\tau_n^2$.
\end{proof}

\subsection*{Proof of Theorem~\ref{thm-noncentral} (non-centrality under EIV)}
\label{app-noncentral}

\begin{proof}
Write $M=M_{K_n}$ and $H_n^*:=X^{*\prime}MX^*$. Under $H_0$,
\begin{equation}\label{eq-direct-t-decomp}
T_n^{\mathrm{CJN}*}(\beta_0)
=
\underbrace{\frac{X^{*\prime}Mu}{\sigma\sqrt{H_n^*}}}_{W_n}
\frac{\sigma}{\hsigma^*_{\mathrm{CJN}}}
-
\frac{\beta_0X^{*\prime}M\nu}
{\hsigma^*_{\mathrm{CJN}}\sqrt{H_n^*}} .
\end{equation}
We establish the three limits in this decomposition.

First, Assumption~\ref{ass-nu}(i) and $u\perp X\mid\G_{K_n}$ together imply
$u\perp(X,\nu)\mid\G_{K_n}$. Conditional on $(X^*,\G_{K_n})$, therefore,
\[
W_n=\sum_iw_{ni}\frac{u_i}{\sigma},\qquad
w_{ni}:=\frac{\widetilde X^*_{K_n,i}}{\sqrt{H_n^*}},
\qquad \sum_iw_{ni}^2=1 ,
\]
is a sum of independent mean-zero terms of total variance one. Its conditional Lyapunov ratio is at most
$C_u\max_i|w_{ni}|^\epsilon=o_p(1)$ by Assumption~\ref{ass-reg}(vi) and Lemma~\ref{lem-lev-star}. The conditional Lyapunov theorem and Lemma~\ref{lem-transfer} give $W_n\Rightarrow N(0,1)$.

Second, Lemma~\ref{lem-atten} gives
\[
H_n^*\to_p\tau^{*2}:=(1-\rho)(\tau^2+c^2),\qquad
X^{*\prime}M\nu=X'M\nu+\nu'M\nu\to_p c^2(1-\rho).
\]

Third, let $e:=u-\beta_0\nu$ and $N_n:=X^{*\prime}Me$. Least-squares projection gives
\[
(n-d_{K_n}-1)\hsigma_{\mathrm{CJN}}^{*2}
=e'Me-\frac{N_n^2}{H_n^*}.
\]
The first two limits imply $N_n=O_p(1)$, so the second term is $O_p(1)$. Also
\[
e'Me=u'Mu-2\beta_0\nu'Mu+\beta_0^2\nu'M\nu,
\qquad
|\nu'Mu|\le\sqrt{(\nu'M\nu)(u'Mu)}=O_p(\sqrt n).
\]
Because $\nu'M\nu=O_p(1)$ and, under conditional homoskedasticity,
$u'Mu/(n-d_{K_n})\to_p\sigma^2$ by Lemma~\ref{lem-hess-aux}(b), division by
$n-d_{K_n}-1\asymp n$ yields $\hsigma_{\mathrm{CJN}}^{*2}\to_p\sigma^2$.

Substituting these limits into~\eqref{eq-direct-t-decomp} and applying Slutsky gives
\[
T_n^{\mathrm{CJN}*}(\beta_0)\Rightarrow
N\!\left(
-\frac{\beta_0c^2(1-\rho)}{\sigma\sqrt{(1-\rho)(\tau^2+c^2)}},\,1
\right)
=N(\eta,1).
\]
\end{proof}

\subsection*{Proof of Corollary~\ref{cor-cv} (critical-value formula)}
\label{app-cv}

\begin{proof}
For $T \sim N(\eta, 1)$, $s(\eta) := \Pp(|T| > z) = \Phi(-z - \eta) + 1 - \Phi(z - \eta)$. Then $s'(\eta) = -\phi(z+\eta) + \phi(z-\eta)$, so $s(\eta)$ is even and $s'(0) = 0$; a linear term $2|\eta|\phi(z)$ is impossible for a differentiable even function. Differentiating again, $s''(\eta) = (z+\eta)\phi(z+\eta) + (z-\eta)\phi(z-\eta)$, so $s''(0) = 2z\phi(z)$ and $s(\eta) = \alpha + z\phi(z)\,\eta^2 + O(\eta^4)$. Substituting $\eta^2 = \beta_0^2 c^4(1-\rho) / (\sigma^2(\tau^2 + c^2))$ (Theorem~\ref{thm-noncentral}) and setting the leading distortion equal to $\delta$ gives $\tau^2 + c^2 = \beta_0^2 c^4(1-\rho)\, z\phi(z)/(\sigma^2\delta)$, which rearranges to~\eqref{eq-cv}.
\end{proof}

\subsection*{Proof of Corollary~\ref{cor-feasible} (feasible non-centrality)}
\label{app-feasible}

\begin{proof}
With $\lambda=\tau^2/(\tau^2+c^2)$ and $\tau^{*2}=(1-\rho)(\tau^2+c^2)$ (Lemma~\ref{lem-atten}), $1-\lambda = c^2/(\tau^2+c^2)$, so
\[
(1-\lambda)\sqrt{\tau^{*2}} = \frac{c^2}{\tau^2+c^2}\sqrt{(1-\rho)(\tau^2+c^2)} = \frac{c^2\sqrt{1-\rho}}{\sqrt{\tau^2+c^2}},
\]
and $(|\beta_0|/\sigma)(1-\lambda)\sqrt{\tau^{*2}} = |\beta_0|c^2\sqrt{1-\rho}/(\sigma\sqrt{\tau^2+c^2}) = |\eta|$ by Theorem~\ref{thm-noncentral}: this is the algebraic identity underlying the feasible form. Since $X^{*\prime}M_{K_n}X^* \to_p \tau^{*2}$ by Lemma~\ref{lem-atten}, rearranging the quadratic condition $z\phi(z)\eta^2 \le \delta$ of Corollary~\ref{cor-cv} gives~\eqref{eq-feasible}.
\end{proof}

\subsection*{Proof of Proposition~\ref{prop-pilot} (naive vs.\ corrected pilot)}
\label{app-pilot}

\begin{proof}
(i) follows from Corollary~\ref{cor-slope} (centering of $\hbeta^*_K$ at $\beta_0\lambda$) and the strict monotonicity of $z\phi(z)\eta^2$ in $|\beta_0|$; the second-moment statement refers to the limiting law $\beta_0\lambda+N(0,\sigma^2/\tau^{*2})$ of Corollary~\ref{cor-slope}, whose second moment is $(\beta_0\lambda)^2+\sigma^2/\tau^{*2}$ --- a property of the limit, not a claim of moment convergence (which would require uniform integrability of $\hbeta^{*2}_K$). (ii) With $\hbeta^*_K \Rightarrow \beta_0\lambda + N(0,\sigma^2/\tau^{*2})$ and $\hlambda_n \to_p \lambda > 0$, Slutsky gives $\hbeta_0^{\mathrm{corr}}=\hbeta^*_K/\hlambda_n \Rightarrow \beta_0 + N(0,\sigma^2/(\lambda^2\tau^{*2}))$; under $nQ_{K_n}\to\infty$ the numerator's noise is $o_p(1)$ and the limit is the constant $\beta_0$.
\end{proof}

\subsection*{Proofs for cluster dependence, Lemmas~\ref{lem-nest}--\ref{lem-crve}, Theorem~\ref{thm-cluster} and Corollary~\ref{cor-cluster-feasible}}
\label{app-cluster}

Throughout write $M=M_{K_n}$, $\widetilde X^* = MX^*$, $e := u - \beta_0\nu$ (the error of the analyst's regression of $Y$ on $X^*$ under $H_0$), $\widehat\Delta := \hbeta^*_K-\beta_0$, and recall $a^{(g)}_i = \widetilde X^*_i\mathbf 1\{i\in\mathcal C_g\}$, $A_g=\|a^{(g)}\|^2$, $\sum_gA_g=\tau^{*2}_n$. Two identities are used repeatedly: $M$ is symmetric idempotent, so
\begin{equation}\label{eq-cl-id}
\begin{gathered}
\big(Ma^{(g)}\big)'X^* = a^{(g)\prime}MX^* = a^{(g)\prime}\widetilde X^* = A_g, \\[3pt]
\big(Ma^{(g)}\big)'\nu = a^{(g)\prime}M\nu = \sum_{i\in\mathcal C_g}\widetilde X^*_i\,\widetilde\nu_{K_n,i} .
\end{gathered}
\end{equation}

\begin{proof}[Proof of Lemma~\ref{lem-nest}]
(a) Let $d$ range over the cell indicators spanning the fixed-effect column space. Each is supported in some cluster $h$. If $h\ne g$, then $d'a^{(g)}=0$ by disjoint support; if $h=g$, then $d'a^{(g)}=d'\widetilde X^*=d'MX^*=0$. Thus $a^{(g)}$ is orthogonal to the fixed-effect column space, so $Ma^{(g)}=a^{(g)}$.

(b) The same argument applied to $D^{\mathrm{nest}}$ gives
$M_{\mathrm{nest}}a^{(g)}=a^{(g)}$. If $\rank(\Lambda_n)=0$, then
$M=M_{\mathrm{nest}}$ and the claim is immediate. Otherwise, the Frisch--Waugh--Lovell decomposition gives
\[
M = M_{\mathrm{nest}} - M_{\mathrm{nest}}D^{\mathrm{ne}}\Lambda_n^{-}D^{\mathrm{ne}\prime}M_{\mathrm{nest}},
\qquad \Lambda_n=D^{\mathrm{ne}\prime}M_{\mathrm{nest}}D^{\mathrm{ne}} .
\]
Applying this to $a^{(g)}$ and using $M_{\mathrm{nest}}a^{(g)}=a^{(g)}$,
\[
Ma^{(g)}-a^{(g)} = -\,M_{\mathrm{nest}}D^{\mathrm{ne}}\Lambda_n^{-}v_g ,
\qquad v_g := D^{\mathrm{ne}\prime}a^{(g)} .
\]
Because $\Lambda_n^{-}\Lambda_n\Lambda_n^{-}=\Lambda_n^{-}$,
$\|Ma^{(g)}-a^{(g)}\|^2=v_g'\Lambda_n^{-}v_g
\le\lambda^+_{\min}(\Lambda_n)^{-1}\|v_g\|^2$. Moreover,
\[
\begin{aligned}
\sum_g \|v_g\|^2 &= \tr\Big(D^{\mathrm{ne}\prime}\Big(\sum_g a^{(g)}a^{(g)\prime}\Big)D^{\mathrm{ne}}\Big) \\[3pt]
&\le \Big\|\sum_g a^{(g)}a^{(g)\prime}\Big\|_{\mathrm{op}}\tr\big(D^{\mathrm{ne}\prime}D^{\mathrm{ne}}\big)
= \big(\max_g A_g\big)\,\tr\big(D^{\mathrm{ne}\prime}D^{\mathrm{ne}}\big),
\end{aligned}
\]
because the matrix inside the norm is block diagonal, with rank-one cluster blocks of norms $A_g$. This proves the stated bound. Conditions~(N1)--(N2) make its right-hand side
$O_p(\tau^{*2}_n d_n^{\mathrm{ne}}/G_n)$, proving the calibration. In the balanced two-way panel,
$\Lambda_n=G(I_T-T^{-1}\mathbf1\mathbf1')$, whose non-zero eigenvalues are all $G=n/T$, so (N1) holds exactly.
\end{proof}

\begin{proof}[Proof of Lemma~\ref{lem-cluster-clt}]
Conditional on $\F^*_n$ the cluster scores $\zeta_g=a^{(g)\prime}u$ are independent with mean zero and $\sum_g\Var(\zeta_g\mid\F^*_n)=\Psi_n$. We verify the conditional Lyapunov condition of order $2+\delta$. By Cauchy--Schwarz, $|\zeta_g|\le A_g^{1/2}\big(\sum_{i\in\mathcal C_g}u_i^2\big)^{1/2}$, and by the power-mean inequality $\big(\sum_{k\le m}b_k\big)^{p}\le m^{p-1}\sum_k b_k^p$ with $p=(2+\delta)/2\ge1$,
\[
\E\big[|\zeta_g|^{2+\delta}\,\big|\,\F^*_n\big]
\le A_g^{1+\delta/2}\, n_g^{\delta/2}\sum_{i\in\mathcal C_g}\E\big[|u_i|^{2+\delta}\,\big|\,\F^*_n\big]
\le C_u\, \bar n_n^{1+\delta/2}\,A_g^{1+\delta/2}.
\]
Summing over $g$ and using $\sum_gA_g^{1+\delta/2}\le(\max_gA_g)^{\delta/2}\sum_gA_g = (\max_gA_g)^{\delta/2}\tau^{*2}_n$,
\[
\begin{aligned}
\frac{\sum_g\E[|\zeta_g|^{2+\delta}\mid\F^*_n]}{\Psi_n^{(2+\delta)/2}}
&\;\le\; \frac{C_u\,\bar n_n^{1+\delta/2}\,(\max_gA_g)^{\delta/2}\,\tau^{*2}_n}{\Psi_n^{1+\delta/2}} \\[3pt]
&\;=\; \frac{C_u\,(1+o_p(1))}{(\sigma^2\psi)^{1+\delta/2}}
\left[\bar n_n^{\frac{2+\delta}{\delta}}\,\frac{\max_gA_g}{\tau^{*2}_n}\right]^{\delta/2} ,
\end{aligned}
\]
using Assumption~\ref{ass-cluster}(iii). The bracket is $o_p(1)$ under Assumption~\ref{ass-cluster}(ii), and under growing clusters this is exactly condition (i$'$) of Remark~\ref{rem-clustersize}. Thus the conditional Lyapunov theorem gives
$\Pp(S_n/\sqrt{\Psi_n}\le x\mid\F_n^*)\to_p\Phi(x)$; Lemma~\ref{lem-transfer} gives the unconditional limit.
\end{proof}

\begin{proof}[Proof of Lemma~\ref{lem-crve}]
Since $\widehat u^* = M\big(Y-X^*\hbeta^*_K\big) = Me - \widehat\Delta\,MX^*$ under $H_0$, the identities~\eqref{eq-cl-id} give
\[
\begin{gathered}
\widehat\zeta_g := a^{(g)\prime}\widehat u^* = \big(Ma^{(g)}\big)'e - \widehat\Delta A_g
= \zeta_g + \epsilon_g, \\[3pt]
\epsilon_g := r^{(g)\prime}u \;-\; \beta_0\sum_{i\in\mathcal C_g}\widetilde X^*_i\widetilde\nu_{K_n,i} \;-\; \widehat\Delta A_g ,
\end{gathered}
\]
where $r^{(g)} := Ma^{(g)}-a^{(g)}$ and we used $\big(Ma^{(g)}\big)'u = \zeta_g + r^{(g)\prime}u$. Then $\widehat V^{\mathrm{sc}}_{CR} = \sum_g\zeta_g^2 + 2\sum_g\zeta_g\epsilon_g + \sum_g\epsilon_g^2$, and it suffices to show $\sum_g\zeta_g^2 = \Psi_n + o_p(\tau^{*2}_n)$ and $\sum_g\epsilon_g^2 = o_p(\tau^{*2}_n)$; the cross term is then $o_p(\tau^{*2}_n)$ by Cauchy--Schwarz, and Assumption~\ref{ass-cluster}(iii) converts $o_p(\tau^{*2}_n)$ into $o_p(\Psi_n)$.

\emph{Step 1.} Conditional independence and the fourth-moment bound give
\[
\E\!\left[\sum_g\zeta_g^2\middle|\F_n^*\right]=\Psi_n,\qquad
\Var\!\left(\sum_g\zeta_g^2\middle|\F_n^*\right)
\le\sum_g\E[\zeta_g^4\mid\F_n^*]
\le\kappa_u\bar n_n^2(\max_gA_g)\tau^{*2}_n.
\]
The last bound uses
$\zeta_g^4\le A_g^2(\sum_{i\in\mathcal C_g}u_i^2)^2$ and
$\sum_gA_g^2\le(\max_gA_g)\tau^{*2}_n$. After division by
$(\tau^{*2}_n)^2$, Assumption~\ref{ass-cluster}(ii) makes the variance $o_p(1)$, so conditional Chebyshev yields
$\sum_g\zeta_g^2=\Psi_n+o_p(\tau^{*2}_n)$. With growing clusters, the same conclusion uses condition (ii$'$) of Remark~\ref{rem-clustersize}.

\emph{Step 2.} The cluster CLT and Assumption~\ref{ass-cluster}(iii),(v) imply
\[
\widehat\Delta
=\frac{S_n-\beta_0(X'M\nu+\nu'M\nu)}{X^{*\prime}MX^*}
=O_p(1),
\]
without invoking the i.i.d.\ Corollary~\ref{cor-slope}. Using
$(x+y+z)^2\le3(x^2+y^2+z^2)$, it remains to bound three sums. Projection compatibility, Cauchy--Schwarz, and the no-dominant-cluster condition give, respectively,
\[
\begin{aligned}
\E\!\left[\sum_g(r^{(g)\prime}u)^2\middle|\F_n^*\right]
&\le \bar n_n\kappa_u^{1/2}\sum_g\|r^{(g)}\|^2
=o_p(\tau^{*2}_n),\\
\sum_g\left(\sum_{i\in\mathcal C_g}
\widetilde X_i^*\widetilde\nu_{K_n,i}\right)^2
&\le \bar n_n\max_i\widetilde\nu_{K_n,i}^2\,\tau^{*2}_n
=o_p(\tau^{*2}_n),\\
\widehat\Delta^2\sum_gA_g^2
&\le\widehat\Delta^2(\max_gA_g)\tau^{*2}_n
=o_p(\tau^{*2}_n).
\end{aligned}
\]
Conditional Markov's inequality applies to the first line. Thus
$\sum_g\epsilon_g^2=o_p(\tau^{*2}_n)$. For growing clusters the first two lines are exactly conditions (iii$'$) and (iv$'$) of Remark~\ref{rem-clustersize}.

\emph{Step 3 (the small-sample factor).} Under nesting, $r^{(g)}\equiv0$ by Lemma~\ref{lem-nest}(a), so $\widehat\zeta_g$ loses nothing to the fixed-effect projection and $\widehat V^{\mathrm{sc}}_{CR}$ is already correctly centred; multiplying by $\tfrac{n-1}{n-K_n}\to(1-\rho)^{-1}$ therefore introduces an asymptotic factor $(1-\rho)^{-1}>1$. The factor $\tfrac{G_n}{G_n-1}\to1$ is harmless.
\end{proof}

\begin{proof}[Proof of Theorem~\ref{thm-cluster}]
Under $H_0$, $Y-X^*\beta_0 = u-\beta_0\nu$, so
\[
N_n := X^{*\prime}M(Y-X^*\beta_0) = \underbrace{X^{*\prime}Mu}_{=\,S_n} \;-\; \beta_0\big(X'M\nu+\nu'M\nu\big),
\]
and by Assumption~\ref{ass-cluster}(v) the bracket converges in probability to $c^2(1-\rho)$. This is imposed as a design-side limit rather than quoted from Lemma~\ref{lem-atten}, whose proof uses i.i.d.\ sampling. Lemmas~\ref{lem-cluster-clt}--\ref{lem-crve} give
$S_n/\sqrt{\Psi_n}\Rightarrow N(0,1)$ and
$\widehat V^{\mathrm{sc}}_{CR}/\Psi_n\to_p1$, while Assumption~\ref{ass-cluster}(iii),(v) gives
$\Psi_n\to_p\psi\sigma^2\tau^{*2}$ with
$\tau^{*2}=(1-\rho)(\tau^2+c^2)$. Hence
\[
T^{CR}_n(\beta_0)=\frac{N_n}{\sqrt{\widehat V^{\mathrm{sc}}_{CR}}}
= \frac{S_n}{\sqrt{\Psi_n}}\sqrt{\frac{\Psi_n}{\widehat V^{\mathrm{sc}}_{CR}}}
\;-\; \frac{\beta_0\big(X'M\nu+\nu'M\nu\big)}{\sqrt{\widehat V^{\mathrm{sc}}_{CR}}},
\]
Slutsky gives $T^{CR}_n(\beta_0)\Rightarrow N(0,1) - \beta_0c^2(1-\rho)\big/\sqrt{\psi\sigma^2(1-\rho)(\tau^2+c^2)}$, and the shift equals $-\beta_0c^2\sqrt{1-\rho}\big/(\sigma\sqrt{\psi}\sqrt{\tau^2+c^2})=\eta/\sqrt\psi$.
\end{proof}

\begin{proof}[Proof of Corollary~\ref{cor-cluster-feasible}]
(a) The identity $\hsigma^{*2}_{\mathrm{CJN}}\hpsi = \widehat V^{\mathrm{sc}}_{CR}/\tau^{*2}_n$ is immediate from $\hpsi := \tau^{*2}_n\widehat V_{CR}/\hsigma^{*2}_{\mathrm{CJN}}$ and $\widehat V_{CR}=\widehat V^{\mathrm{sc}}_{CR}/\tau^{*4}_n$: the factor $\hsigma^{*2}_{\mathrm{CJN}}$ appears once in each of numerator and denominator and cancels before any limit is taken. Lemma~\ref{lem-crve} gives $\widehat V^{\mathrm{sc}}_{CR}=\Psi_n(1+o_p(1))$ and Assumption~\ref{ass-cluster}(iii) gives $\Psi_n/\tau^{*2}_n\to_p\psi\sigma^2$, whence $\widehat s^2_{CR}\to_p\psi\sigma^2$. For the parenthetical claims, if $\hsigma^{*2}_{\mathrm{CJN}}\to_p\varsigma^2\in(0,\infty)$ then $\hpsi = \widehat s^2_{CR}/\hsigma^{*2}_{\mathrm{CJN}}\to_p\psi\sigma^2/\varsigma^2$ by Slutsky. The existence of that limit is an assumption about the error array beyond Assumption~\ref{ass-cluster}, which constrains only the cluster score variance $\Psi_n$ and not the diagonal quadratic form $\widehat u^{*\prime}\widehat u^*$; it is stated conditionally for that reason. Nothing else in the paper depends on it, since the feasible recipe uses $\widehat s_{CR}$, in which $\hsigma^{*2}_{\mathrm{CJN}}$ has already cancelled.

(b) $\hlambda_n=1-\widehat a_n/\tau^{*2}_n$ and $\tau^{*2}_n$ are functionals of $(X,\nu,D_{K_n})$ alone; their probability limits $\lambda$ and $\tau^{*2}$ are those of Lemma~\ref{lem-atten} and are unaffected by dependence in $u$. With (a) and $|\beta_0|$ (or any deterministic $b$) fixed, Slutsky gives the display, and the last equality is Theorem~\ref{thm-cluster} together with $|\eta| = (|\beta_0|/\sigma)(1-\lambda)\sqrt{\tau^{*2}}$ from Corollary~\ref{cor-feasible}. Note that no claim is made about substituting an estimated $\beta_0$; that is Remark~\ref{rem-plugin}.

(c) By definition $t^*_n = |\hbeta^*_K|\sqrt{\tau^{*2}_n}/\hsigma^*_{\mathrm{CJN}}$ and $t^{CR}_n = |\hbeta^*_K|/\sqrt{\widehat V_{CR}}$. Hence
\[
\frac{t^*_n}{\sqrt{\hpsi}}
= \frac{|\hbeta^*_K|\sqrt{\tau^{*2}_n}}{\hsigma^*_{\mathrm{CJN}}}\cdot\frac{\hsigma^*_{\mathrm{CJN}}}{\sqrt{\tau^{*2}_n\widehat V_{CR}}}
= \frac{|\hbeta^*_K|}{\sqrt{\widehat V_{CR}}} = t^{CR}_n ,
\]
an identity in the sample. Since $\lambda^{\dagger}$ of Definition~\ref{def-breakdown} is a function of the $t$-statistic alone, substituting $t^{CR}_n$ gives $\lambda^{\dagger}_{CR}$.
\end{proof}

\begin{proof}[Proof of Remark~\ref{rem-plugin}]
The identity for $\widehat\Delta$ in the proof of Lemma~\ref{lem-crve}, followed by the cluster CLT and Assumption~\ref{ass-cluster}(iii),(v), gives
\[
\hbeta_K^*\Rightarrow\beta_0\lambda+N(0,\psi\sigma^2/\tau^{*2}),
\qquad
\hbeta_0^{\mathrm{corr}}\Rightarrow
B:=\beta_0+N(0,\psi\sigma^2/(\lambda^2\tau^{*2})).
\]
Jointly,
$(\hlambda_n,\tau^{*2}_n,\widehat s_{CR})
\to_p(\lambda,\tau^{*2},\sigma\sqrt\psi)$, so the continuous mapping theorem yields
\[
|\widehat\eta_{CR}| \Rightarrow \frac{|B|}{\sigma\sqrt\psi}(1-\lambda)\sqrt{\tau^{*2}} ,
\]
equal to $(|B|/|\beta_0|)|\eta_{CR}|$ when $\beta_0\ne0$ and to the stated folded-normal limit when $\beta_0=0$. It is nondegenerate when $\tau^{*2}<\infty$ and $\lambda<1$. Under strong information $B$ collapses to $\beta_0$; when $c^2=0$, instead, $B$ remains nondegenerate but $(1-\hlambda_n)\to_p0$ annihilates it. These are exactly the two consistency boundaries stated in the remark.
\end{proof}

\begin{proof}[Proof of Proposition~\ref{prop-certificate}]
Write $s_n:=\widehat s_{CR}/\sqrt{\tau^{*2}_n}$ for the reported standard error of $\hbeta_K^*$ and $\theta:=\lambda\beta_0$.  The cluster-score CLT and CRVE consistency give
\[
\frac{\hbeta_K^*-\theta}{s_n}\Rightarrow N(0,1),
\qquad
s_n\to_p s:=\frac{\sigma\sqrt\psi}{\sqrt{\tau^{*2}}}>0.
\]
Hence $U_{\theta,n}:=|\hbeta_K^*|+z_{1-\gamma_\beta}s_n$ is an asymptotic one-sided upper confidence bound for $|\theta|$.  Indeed, if $\theta\ge0$, then $|\hbeta_K^*|\ge\hbeta_K^*$ and
\[
\limsup_n\Pp(|\theta|>U_{\theta,n})
\le \Pp(Z<-z_{1-\gamma_\beta})=\gamma_\beta;
\]
the case $\theta<0$ is symmetric, and $\theta=0$ is immediate.

Let $f(x):=(1-x)/x$, which is decreasing on $(0,1]$.  Since
\[
\frac{U_{\theta,n}}{s_n}f(\ell_n)
=\big(t_n^{CR}+z_{1-\gamma_\beta}\big)f(\ell_n)
=\widehat\eta_n^{\,U}(\ell_n),
\qquad
|\eta_{CR}|=\frac{|\theta|}{s}f(\lambda),
\]
the preceding coverage statement and Slutsky's theorem prove part~(a) when
$\ell_n=\hlambda_n\to_p\lambda$.  For part~(b), on the intersection of the
coefficient-coverage event $\{|\theta|\le U_{\theta,n}\}$ and the
reliability-coverage event $\{\ell_n\le\lambda\}$, monotonicity of $f$ gives
$\widehat\eta_n^{\,U}(\ell_n)\ge|\eta_{CR}|+o_p(1)$.  The union bound therefore yields
\[
\limsup_n\Pp\!\left(\widehat\eta_n^{\,U}(\ell_n)<|\eta_{CR}|\right)
\le\gamma_\beta+\gamma_\lambda,
\]
without an independence condition.

Finally, $\widehat\eta_n^{\,U}(\ell_n)\le\eta^\dagger$ is equivalent, by direct rearrangement, to
\[
\ell_n\ge
\frac{t_n^{CR}+z_{1-\gamma_\beta}}
{t_n^{CR}+z_{1-\gamma_\beta}+\eta^\dagger}
=\lambda^\dagger_{\gamma_\beta,n}.
\]
If true asymptotic size exceeds $\alpha+\delta$, Theorem~\ref{thm-cluster} and the definition of $\eta^\dagger$ imply $|\eta_{CR}|>\eta^\dagger$; certification can then occur only when the upper bound misses, whose limiting probability is bounded above as stated.  If true size is admissible, false certification is impossible.  The i.i.d.\ proof is identical with the corresponding reported $t$-statistic.
\end{proof}

\section{Additional simulation designs}
\label{app-sim}

The first two designs supplement the core mapping validations (Designs~1 and~4) reported in Section~\ref{sec-sim}. The DGP is identical: two-way FE, $X_{it}=s\,\varepsilon_{it}$ with $s=\sqrt{\tau^2/n}$ (so $nQ_{K_n}=\tau^2$ exactly), $u_{it}\sim N(0,\sigma^2)$, $\nu_{it}\sim N(0,\sigma_\nu^2)$, $4{,}000$--$5{,}000$ replications per cell. Design~6 uses the same FE transformation but increases the replication count to $20{,}000$ to resolve a $5\%$ tail probability.

\paragraph{Design 2 (threshold accuracy).} For each cell we sweep $\tau^2$, locate (by interpolation of the empirical size curve) the $\tau^2$ at which size crosses $5\%+\delta$, and compare it to the corrected quadratic threshold~\eqref{eq-cv} and the exact-inversion threshold~\eqref{eq-cv-exact}. At $\delta=0.05$, over the six cells $(N,T)\in\{(200,5),(1000,5),(200,10)\}\times c^2\in\{2,5\}$, the ratio (empirical)/(quadratic) ranges from $0.98$ to $1.09$ (mean $1.03$) and (empirical)/(exact) from $0.95$ to $1.06$, with no systematic dependence on $(N,T,c^2)$: the corrected closed form locates the true size boundary to within Monte Carlo and grid accuracy. By contrast the discarded linear surrogate $\delta/(2\phi)$ over-demands the residual signal by a factor of $2.5$--$2.8$ across the same cells, and, being anti-conservative at larger $\delta$ (Remark~\ref{rem-exact-cv}), is not a safe approximation. We report the exact-inversion threshold~\eqref{eq-cv-exact} as headline, with the quadratic form~\eqref{eq-cv} as its accurate closed-form companion.

\paragraph{Design 3 (pilot sensitivity).} (a) On a genuinely-failing specification with strong attenuation ($\lambda_n=0.50$, true $|\eta|=0.92$, exact size $15\%$, empirical $16\%$), the naive plug-in of the attenuated coefficient $\hbeta^*_K=\lambda\beta_0$ reports $|\eta|=0.46$ and \emph{passes the point diagnostic}, while the within-reliability-corrected pilot $\hbeta_0^{\mathrm{corr}}=\hbeta^*_K/\hlambda_n$ recovers $|\eta|=0.92$ and correctly flags it. This is Proposition~\ref{prop-pilot}(i) on simulated data: the naive diagnostic passes a specification whose true size is $16\%$. (b) Misspecifying the noise pilot by a factor $k\in\{0.5,0.75,1.25,2\}$ maps monotonically and boundedly into the distortion estimate ($|\hat\eta|/|\eta| = 0.58, 0.80, 1.18, 1.63$ for $k = 0.5, 0.75, 1.25, 2$), motivating the breakdown-reliability band (Definition~\ref{def-breakdown}) over a single point verdict.

\paragraph{Design 6 (false-certification control).} Every DGP is placed just outside the admissible region: $|\eta|=\eta^\dagger+0.01=0.6624$, giving exact-normal size $0.1016>0.10$. Equivalently, true reliability lies $0.0014$--$0.0024$ below the oracle population counterpart of the point breakdown reliability. Every certificate issued is therefore false. Table~\ref{tab:supp-design6} reports the issuance frequency. With reliability known and $\gamma_\beta=0.05$, the rate is at most $0.0474$. The columns headed ``point'' add independent Gaussian pilot noise and then incorrectly treat the noisy estimate as known; at pilot standard error $0.020$, this shortcut certifies as often as $0.1935$. The valid alternative uses a one-sided $97.5\%$ lower reliability bound and $\gamma_\beta=\gamma_\lambda=0.025$, preserving the total $0.05$ budget by Proposition~\ref{prop-certificate}; every realized rate is below $0.016$. The conservative slack reflects both the Bonferroni split and the deliberately one-sided lower reliability bound.

\begin{table}[htbp]\centering
\caption{Design 6: false-certification frequencies. Each cell has $20{,}000$ replications and true exact-normal size $0.1016$, just above the admissible $0.10$. ``Known'' uses the true reliability with $\gamma_\beta=0.05$. ``Point'' treats an independently noisy reliability estimate as known and is not a valid certificate. ``Lower bound'' uses $\gamma_\beta=\gamma_\lambda=0.025$; the column header gives the reliability pilot's standard error.}\label{tab:supp-design6}
\begin{adjustbox}{max width=\textwidth}
\begin{tabular}{rrrrrccccc}
\toprule
$N$ & $T$ & $\rho_n$ & $\tau^2$ & $\lambda$ & known & point .005 & lower bound .005 & point .020 & lower bound .020 \\
\midrule
$120$ & $5$  & $0.207$ & $2$ & $0.80$ & $0.0474$ & $0.0479$ & $0.0154$ & $0.0598$ & $0.0061$ \\
$240$ & $5$  & $0.203$ & $8$ & $0.80$ & $0.0448$ & $0.0450$ & $0.0146$ & $0.0601$ & $0.0047$ \\
$120$ & $10$ & $0.108$ & $2$ & $0.80$ & $0.0441$ & $0.0456$ & $0.0157$ & $0.0610$ & $0.0047$ \\
$240$ & $10$ & $0.104$ & $8$ & $0.90$ & $0.0427$ & $0.0526$ & $0.0069$ & $0.1935$ & $0.0032$ \\
\bottomrule
\end{tabular}
\end{adjustbox}
\begin{minipage}{0.96\textwidth}\footnotesize
\emph{Notes:} The script \texttt{code/verify\_certificate.jl} writes the unrounded values to \texttt{results/design6\_certificate.csv}. The lower-bound rule does not require independence between coefficient and reliability uncertainty; independence is used in this simulation only to make the failure of the point-pilot shortcut transparent.
\end{minipage}
\end{table}

\end{document}